\let\C\relax
\newtheorem{theorem}{Theorem}
\newtheorem{proposition}{Proposition}
\newtheorem{lemma}{Lemma}
\theoremstyle{definition}
\theoremstyle{definition}
\newtheorem{example}{Example}
\theoremstyle{definition}
\newcommand{\Rec}{\mathbf{Rec}}
\newcommand{\Sync}{\mathbf{Sync}}
\newcommand{\DRat}{\mathbf{DRat}}
\newcommand{\Rat}{\mathbf{Rat}}
\newcommand{\wRec}{\omega\text{-}\mathbf{Rec}}
\newcommand{\wSync}{\omega\text{-}\mathbf{Sync}}
\newcommand{\wDRat}{\omega\text{-}\mathbf{DRat}}
\newcommand{\wRat}{\omega\text{-}\mathbf{Rat}}
\newcommand{\OMIT}[1]{}
\renewcommand{\A}{\mathcal{A}}
\newcommand{\B}{\mathcal{B}}
\renewcommand{\C}{\mathbf{C}}
\newcommand{\I}{\mathcal{I}}
\newcommand{\N}{\mathbb{N}}
\newcommand{\Z}{\mathbb{Z}}
\renewcommand{\R}{\mathbb{R}}
\newcommand{\coREXP}{\mathsf{coREXP}}
\newcommand{\rev}{\mathsf{rev}}
\newcommand{\bsvector}[2]{\begin{bsmallmatrix} #1 \\ #2 \end{bsmallmatrix}}
\newclass{\TwoEXP}{2EXP}
\newcommand{\define}[1]{\stackrel{\textrm{\tiny def}}{#1}}
\newcommand{\threecp}{\xrightarrow{\mathsf{3CP}}}
\newcommand{\threecpf}{\xrightarrow{\mathsf{3CP}}_F}
\newcommand{\ee}{\sim_{\mathsf{e}}}
\DeclareFontFamily{OMX}{MnSymbolE}{}
\DeclareSymbolFont{MnLargeSymbols}{OMX}{MnSymbolE}{m}{n}
\DeclareFontShape{OMX}{MnSymbolE}{m}{n}{
    <-6>  MnSymbolE5
   <6-7>  MnSymbolE6
   <7-8>  MnSymbolE7
   <8-9>  MnSymbolE8
   <9-10> MnSymbolE9
  <10-12> MnSymbolE10
  <12->   MnSymbolE12
}{}
\DeclareFontShape{OMX}{MnSymbolE}{b}{n}{
    <-6>  MnSymbolE-Bold5
   <6-7>  MnSymbolE-Bold6
   <7-8>  MnSymbolE-Bold7
   <8-9>  MnSymbolE-Bold8
   <9-10> MnSymbolE-Bold9
  <10-12> MnSymbolE-Bold10
  <12->   MnSymbolE-Bold12
}{}
\let\llangle\@undefined
\let\rrangle\@undefined
\DeclareMathDelimiter{\llangle}{\mathopen}%
                     {MnLargeSymbols}{'164}{MnLargeSymbols}{'164}
\DeclareMathDelimiter{\rrangle}{\mathclose}%
                     {MnLargeSymbols}{'171}{MnLargeSymbols}{'171}
\begin{document}
\sloppy 

\title{Revisiting Membership Problems in Subclasses of Rational Relations}

\author{\IEEEauthorblockN{Pascal Bergstr\"{a}{\ss}er}
\IEEEauthorblockA{Department of Computer Science\\
RPTU Kaiserslautern-Landau\\
Kaiserslautern, Germany}
\and
\IEEEauthorblockN{Moses Ganardi}
\IEEEauthorblockA{Max Planck Institute for Software Systems\\ (MPI-SWS)\\
Kaiserslautern, Germany}}

\maketitle

\begin{abstract}
	We revisit the \emph{membership problem} for subclasses of rational relations over finite and infinite words:
	Given a relation $R$ in a class $\C_2$, does $R$ belong to a smaller class $\C_1$?
	The subclasses of rational relations that we consider are formed by the deterministic rational relations,
	synchronous (also called automatic or regular) relations, and recognizable relations.
	For almost all versions of the membership problem, determining the precise complexity or even decidability
	has remained an open problem for almost two decades.
	In this paper, we provide improved complexity and new decidability results.
    (i)~Testing whether a synchronous relation over infinite words is recognizable is
    $\NL$-complete ($\PSPACE$-complete) if the relation is given by a deterministic (nondeterministic) $\omega$-automaton.
    This fully settles the complexity of this recognizability problem,
    matching the complexity of the same problem over finite words.
    (ii)~Testing whether a deterministic rational binary relation is recognizable is
    decidable in polynomial time,
    which improves a previously known double exponential time upper bound.
    For relations of higher arity, we present a randomized exponential time algorithm.
    (iii)~We provide the first algorithm to decide whether a deterministic rational relation is synchronous.
    For binary relations the algorithm even runs in polynomial time.
\end{abstract}


\section{Introduction}

The study of \emph{relations over words} and their computational models, often called \emph{transducers},
has become an active field of research, with applications in various fields,
including algorithmic verification~\cite{AlurC11,ChenHLRW19}, synthesis~\cite{Thomas08,FiliotJLW16},
and graph databases~\cite{BarceloFL13}.
While the class of regular languages is captured by several equivalent automata models,
e.g.\ deterministic and nondeterministic automata,
which read their input either in one or both directions,
the same does not hold anymore for relations.
The literature contains a number of transducer models for relations
with varying tradeoffs between expressivity, closure properties, and algorithmic amenability.
Finite-state transducers reading multiple words have been already introduced by Rabin and Scott
in their seminal paper~\cite{RabinS59}.
This basic model has later been extended to more expressive models
such as streaming string transducers, transductions with origin semantics,
visibly pushdown transducers, and transducers over infinite words,
see~\cite{MuschollP19} for an overview.
Various algorithmic questions on basic transducer models remain challenging open problems,
e.g.\ determining the precise complexity of the equivalence problem for deterministic streaming string transducers~\cite{AlurC11}
or for deterministic multitape automata~\cite{Worrell13}.

\paragraph*{The membership problem}
In this paper we revisit the \emph{membership problem} (or the \emph{definability problem})
for relations over words, i.e.\ given a relation $R$ in a class $\C_2$,
does $R$ belong to a smaller class $\C_1$?
The membership problem for languages is a classical question in automata theory,
in particular the question whether a given regular language belongs to a subclass of the
regular languages~\cite{Schutzenberger65a,Zalcstein72,BrzozowskiS73,Simon75,Place20},
and the question whether a given language from a superclass of the regular languages
is in fact regular~\cite{Greibach68,Valiant75,ValkV81}.
For example, Schützenberger's theorem effectively characterizes which regular languages are star-free~\cite{Schutzenberger65a}.
Deciding whether an NFA accepts a star-free language is \PSPACE-complete~\cite{ChoH91}.
Another milestone result in this context is Valiant's regularity test for deterministic pushdown automata (DPDAs)~\cite{Valiant75}.
Its running time is double exponential, improving on a previous triple exponential time algorithm by Stearns~\cite{Stearns67}.
The only known lower bound is \P-hardness inherited from emptiness problem, leaving an almost fifty year old
double exponential gap between the upper and the lower bound.

The membership problem for relations over words
was first systematically studied by Carton, Choffrut, and Grigorieff~\cite{CartonCG06}
for subclasses of rational relations over finite words.
Let us briefly introduce the most important subclasses, see~\cref{sec:classes} for formal definitions.
A relation is \emph{rational} if it is recognized by a nondeterministic multitape automaton
where the tapes are read asynchronously in one direction~\cite{ElgotM65}.
The deterministic variant of multitape automata~\cite{RabinS59} captures the class of \emph{deterministic rational} relations.
Unfortunately, universality of rational relations and inclusion
of deterministic rational relations are undecidable~\cite{FischerR68}.
To overcome these undecidability barriers, one can put the additional restriction on the automaton
that all heads read their input letter synchronously in parallel.
Synchronous multitape automata recognize the \emph{synchronous (rational) relations}~\cite{FrougnyS93},
also called automatic or regular relations.
Due to their effective closure under first-order operations, they enjoy pleasant algorithmic properties
and form the basis of \emph{automatic structures}~\cite{KhoussainovN94,BlumensathG00}
and of \emph{regular model checking}~\cite{AbdullaJMd02,AbdullaJNS04}.
The smallest class of relations we consider is formed by the \emph{recognizable relations}, where the input words are processed
by independent automata that synchronize only on the sequence of final states reached after reading the entire words.
Alternatively, recognizable relations can be described as finite unions
of Cartesian products of regular languages~\cite[Theorem~1.5]{Berstel79}.
All mentioned classes of relations over finite words are extended to infinite words,
by adding a Büchi condition for nondeterministic automata or a parity condition for deterministic automata.
The hierarchy of the considered subclasses of rational relations over finite and infinite words is displayed in~\cref{fig:results}.

However, as most interesting problems on rational relations,
it is undecidable to test whether a given rational relation
is recognizable, synchronous, or deterministic rational~\cite{FischerR68,Lisovik79}.
Hence, we turn our attention to subclasses of deterministic rational relations.
The following observation from~\cite{CartonCG06} makes a simple connection between \emph{binary} rational relations
$R \subseteq \Sigma^* \times \Sigma^*$ and context-free languages:
If $R$ is rational then $L_R = \{ \rev(u) \# v \mid (u,v) \in R \}$ is context-free where $\rev(u)$ is the reversal of $u$;
if $R$ is deterministic rational then $L_R$ is deterministic context-free.
Furthermore, $R$ is recognizable if and only if $L_R$ is regular.
Therefore, recognizability of binary deterministic rational relations can be easily reduced
to regularity of DPDAs, which can be decided in double exponential time~\cite{Valiant75}.
Using methods from the regularity algorithm, originally due to Stearns~\cite{Stearns67},
one can also decide\footnotemark{} recognizability of deterministic rational relations of arbitrary arity~\cite{CartonCG06}.
\footnotetext{While the authors of~\cite{CartonCG06} did not analyze the complexity of their algorithm,
it is easy to see that their algorithm runs in elementary time for fixed arity $k$.}
Carton, Choffrut, and Grigorieff also present an algorithm to test whether a synchronous relation
is recognizable~\cite{CartonCG06}, which runs in double exponential time (see the remark on its running time in~\cite{LodingS19}).
Recently, Barcel\'o et al.\ determined the precise complexity of the same problem~\cite{BHLLN19}, see below.
The question how to decide whether a deterministic rational relation is synchronous still hitherto remains open.

\paragraph*{Recognizability and the infinite clique problem}
It has been observed in~\cite{CartonCG06} that the recognizability problem for subclasses of rational relations
can be reduced to checking whether certain equivalence relations have finite index.
For a relation $R \subseteq (\Sigma^*)^k$ and $j \in [1,k-1]$ define the equivalence relations $\sim^R_j$
on $(\Sigma^*)^j$ by
\begin{align*}
	\bm{x} \sim_j^R \bm{y} \define \iff & \text{for all } \bm{z} \in (\Sigma^*)^{k-j} \colon \\
	& \qquad \big[ (\bm{x},\bm{z}) \in R \iff (\bm{y},\bm{z}) \in R \big],
\end{align*}
resembling the Myhill-Nerode congruence for languages.
If $R$ is rational, then $R$ is recognizable if and only if $\sim_j^R$ has finite index
for all $j \in [1,k-1]$~\cite[Proposition~3.8]{CartonCG06}.
This characterization has been used in~\cite{BHLLN19} to decide recognizability for synchronous relations, as follows.
Given a DFA (NFA) for a synchronous relation $R$,
one can compute automata for the complement relations $\not \sim_j^R$ in logarithmic space (polynomial space).
Hence, to decide non-recognizability of $R$ it suffices to test
whether for a given \emph{co-equivalence relation} $\not \sim$
there exists an infinite set $X$
such that $\bm{x} \not \sim \bm{y}$ for all distinct $\bm{x}, \bm{y} \in X$, in other words, whether 
$\not \sim$ has an infinite clique.
In fact, the infinite clique problem for arbitrary synchronous relations
was shown to be \NL-complete~\cite{BHLLN19} (later simplified in~\cite{BGLZ22}).

However, in certain settings we need to exploit the fact that $\not \sim_j^R$ is the complement
of an equivalence relation $\sim_j^R$.
For example, Löding and Spinrath~\cite{LodingS19} have shown that the infinite clique problem
for $\omega$-synchronous co-equivalence relations is decidable in double exponential time.
This yields a double (triple) exponential time algorithm for the $\omega$-recognizability problem
for $\omega$-synchronous relations
given by (non)deterministic $\omega$-automata.
Whether the infinite clique problem over \emph{arbitrary} $\omega$-synchronous relations is decidable
is a longstanding open problem~\cite{Kuske10}.
Another example where the difference between co-equivalence relations and arbitrary relations becomes apparent
is the case of \emph{tree-automatic relations}.
It was proven in~\cite{BGLZ22} that the infinite clique problem for tree-automatic relations is \EXP-complete;
however, restricted to complements of transitive relations the infinite clique problem becomes \P-complete.
This yields optimal complexity for the recognizability problem for tree-automatic relations:
Recognizability is \P-complete for relations given by deterministic bottom-up or top-down tree automata,
and \EXP-complete for nondeterministic tree automata.

\paragraph*{Contributions}

\begin{figure*}
\centering
\begin{tikzpicture}[yscale=0.8,anchor=base, baseline, ->,>=stealth]
	\node (Rec) at (0,0) {$\Rec$};
	\node[rotate=90] at (0,1) {$\subsetneq$};
	\node (Sync) at (0,2) {$\Sync$};
	\node[rotate=90] at (0,3) {$\subsetneq$};
	\node (DRat) at (0,4) {$\DRat$};
	\node[rotate=90] at (0,4.8) {$\subsetneq$};
	\node (Rat) at (0,5.3) {$\Rat$};
	
	\draw[bend right = 60] (Sync) to node [align = right, left] {\small\NL-c. for DFAs \\ \small\PSPACE-c. for NFAs \\ \small(see \cite{BHLLN19})} (Rec);
	\draw[bend right = 60] (DRat) to node [align = right, left] {\small$\P$ for $k = 2$ \\ \small$(2k-4)$-$\EXP$ for $k > 2$ \\ \small(see \Cref{thm:drat-sync})} (Sync);
	\draw[bend left = 40] (DRat) to node [align = right, right] {\small$\P$ for $k = 2$ \\ \small$\coREXP$ for $k > 2$ \\ \small(see \Cref{thm:drat-recognizability})} (Rec);
\end{tikzpicture}\hspace{2em}
\begin{tikzpicture}[yscale=0.8,anchor=base, baseline, ->,>=stealth]
	\node (Rec) at (0,0) {$\wRec$};
	\node[rotate=90] at (0,1) {$\subsetneq$};
	\node (Sync) at (0,2) {$\wSync$};
	\node[rotate=90] at (0,3) {$\subsetneq$};
	\node (DRat) at (0,4) {$\wDRat$};
	\node[rotate=90] at (0,4.8) {$\subsetneq$};
	\node (Rat) at (0,5.3) {$\wRat$};
	
	\tikzstyle{open} = [dotted]
	\draw[bend right = 60] (Sync) to node [align = right, left] {\small\NL-c. for DPAs \\ \small\PSPACE-c. for NBAs \\ \small(see \Cref{thm:omega-main}) } (Rec);
	\draw[bend right = 60, open] (DRat) to node [align = right, left] {\small open} (Sync);
	\draw[bend left = 40, open] (DRat) to node [align = right, right] {\small open} (Rec);
\end{tikzpicture}

\caption{The complexity landscape of deciding membership to subclasses of rational relations over finite and infinite words.
An arrow from $\C_2$ to $\C_1$ refers to the membership problem, given a relation from $\C_2$, does it belong to $\C_1$?
Membership of rational relations in any one of the three subclasses is undecidable~\cite{FischerR68,Lisovik79}.
Dotted arrows mean that decidability of the problem is unknown. }
\label{fig:results}

\end{figure*}

We provide improved complexity and new decidability results for the membership problems
in subclasses of rational relations over finite and infinite words.
To do so, we refine the existing analyses in \cite{CartonCG06,LodingS19} and identify patterns in the transducers
which witness \emph{non}-membership in the subclass.
As our first main result, we pinpoint the precise complexity of the $\omega$-recognizability problem of $\omega$-synchronous relations.

\begin{theorem}\label{thm:omega-main}
Given an $\omega$-synchronous relation $R$ by a deterministic parity (resp. nondeterministic Büchi) automaton,
it is \NL-complete (resp. \PSPACE-complete) to decide whether $R$ is $\omega$-recognizable.
\end{theorem}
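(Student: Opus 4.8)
The plan is to decide $\omega$-recognizability through the finite-index question for the equivalence relations $\sim_j^R$, and then to detect infinite index by a \emph{pumpable pattern} inside a product of the given automaton, deliberately avoiding the infinite clique problem for arbitrary $\omega$-synchronous relations, which is open~\cite{Kuske10}. The leverage comes precisely from exploiting that $\not\sim_j^R$ is the complement of an \emph{equivalence} relation. First I would establish the $\omega$-analogue of the characterization of~\cite[Proposition~3.8]{CartonCG06} (the form already used implicitly by~\cite{LodingS19}): an $\omega$-synchronous relation $R \subseteq (\Sigma^\omega)^k$ is $\omega$-recognizable iff each $\sim_j^R$ has finite index. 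Since $\NL$ is closed under complement (Immerman--Szelepcs\'enyi) and $\PSPACE$ trivially is, it then suffices to decide, for each $j \in [1,k-1]$, whether $\sim_j^R$ has \emph{infinite} index, reporting non-$\omega$-recognizability as soon as one $j$ witnesses it.

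For the $\NL$ upper bound with a deterministic parity automaton $\A$ over $\Sigma^k$, I split every letter into an $\bm{x}$-part in $\Sigma^j$ (the first $j$ tracks) and a $\bm{z}$-part in $\Sigma^{k-j}$, and I work in the pair automaton $\A^{(2)}$ on $Q \times Q$ whose two copies read the \emph{same} $\bm{z}$-part but possibly different $\bm{x}$-parts. A run of $\A^{(2)}$ is \emph{distinguishing} if the parity condition holds in exactly one copy, which is exactly the situation $(\bm{x},\bm{z}) \in R \not\Leftrightarrow (\bm{y},\bm{z}) \in R$. The key step is to prove that $\sim_j^R$ has infinite index iff $\A^{(2)}$ admits a cyclic witness reachable from the diagonal $\{(q_0,q_0)\}$: a reachable pair lying on a $\bm{z}$-synchronized cycle from which a common suffix can drive the two copies to opposite acceptance status, so that pumping the cycle a different number of times yields an infinite family of pairwise distinguishable $\bm{x}$-words. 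Because $\A$ is deterministic, $\A^{(2)}$ has only $|Q|^2$ states, and detecting this pattern reduces to a reachability-and-cycle search — guessing interlocking lassos whose parity colours certify acceptance in one copy and rejection in the other — in a graph of polynomial size, hence is in $\NL$; iterating over the at most $k-1$ values of $j$ stays in $\NL$.

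For nondeterministic B\"uchi automata the same pattern is used, but now ``$(\bm{y},\bm{z}) \notin R$'' asserts the \emph{absence} of an accepting run, a universal/complementation condition; tracking it requires reasoning about sets of reachable states (a subset construction) or, equivalently, an on-the-fly complementation of the B\"uchi condition, which raises the search space to exponential size and places the pattern detection in $\PSPACE$. The matching lower bounds I obtain by lifting the finite-word results of~\cite{BHLLN19}: a logspace padding that appends a fixed suffix $\#^\omega$ to every track turns a synchronous relation $R \subseteq (\Sigma^*)^k$ into an $\omega$-synchronous relation that is $\omega$-recognizable iff $R$ is recognizable, so the $\NL$-hardness (for DFAs/DPAs) and $\PSPACE$-hardness (for NFAs/NBAs) of the finite-word recognizability problem transfer verbatim.

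The main obstacle is the upper-bound characterization itself: showing that infinite index is captured \emph{exactly} by the finite cyclic witness in the presence of the parity (resp. B\"uchi) tail condition. Two points are delicate. First, the distinguishing word $\bm{z}$ must be realizable as a single infinite word that satisfies the acceptance condition in one copy while violating it in the other, and pumping the $\bm{x}$-track must not inadvertently force both copies into the same acceptance status along the shared $\bm{z}$-suffix. Second, a pumpable pattern must yield \emph{genuinely} infinitely many classes rather than a large but bounded number; here I would run a Ramsey/pigeonhole argument on runs of $\A^{(2)}$ to extract, from any sufficiently long chain of inequivalent $\bm{x}$-words, a cycle that re-enters an equivalent configuration yet stays distinguishable, thereby exhibiting the witness. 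Making this extraction compatible with parity colours while keeping the witness of polynomial size is exactly the refinement of~\cite{LodingS19} that eliminates their double-exponential blow-up.
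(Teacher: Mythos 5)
Your overall pipeline (reduce $\omega$-recognizability to finite index of the Nerode-style equivalences via \cref{prop:rat-fin-idx}, detect infinite index by a pumpable pattern in an automaton built from the input, pad for the lower bounds) is the same as the paper's, and the lower-bound and complexity-accounting parts are fine. The gap is in your central characterization for the upper bound. A reachable, $\bm{z}$-synchronized cycle in the pair automaton $\A^{(2)}$ with a distinguishing tail only certifies an infinite family of distinguishable \emph{pairs} $(\bm{x}_i,\bm{y}_i)$, each witnessed by its own $\bm{z}_i$; it does not certify that the $\bm{x}_i$ are pairwise inequivalent, which is what infinite index requires. Indeed your pattern already occurs for relations of index $2$ (e.g.\ ``the first letters of $x$ and $z$ agree''), so the claimed ``iff'' fails in one direction. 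Note also that in the synchronous product all tracks advance in lockstep, so you cannot pump the $\bm{x}$-track a different number of times than its partner against the \emph{same} $\bm{z}$: to compare $uv^iw\cdots$ with $uv^{i'}w\cdots$ under a common $\bm{z}$ you must compensate lengths, which is exactly what forces the paper's \emph{3-cycles pattern} (two distinct equal-length cycles $v\neq w$, $|v|=|w|=|x|$, in one component against a single repeated block in the other), yielding for every $n$ the genuine clique $(uv^iwx^{n-i}y\cdots)_{0\le i\le n}$. Your appeal to ``a Ramsey/pigeonhole argument'' is precisely the missing hard step, and it is not a routine extraction: one must first justify that infinite index forces \emph{unbounded cliques of ultimately periodic words} of two specific shapes (differing in the prefix vs.\ in the period), which the paper gets from the L\"oding--Spinrath representative decomposition and slenderness (\cref{lem:decomposition,lem:2-cycles,lem:unbounded-cliques}); this is also where the co-equivalence structure is actually used, whereas in your argument it never enters.

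Two further points you would have to address even after fixing the pattern. First, infinite cliques of $\omega$-synchronous co-equivalence relations need not be regular (the equal-ends relation, \cite[Example~2.1]{KuskeL08}), so no single lasso-shaped witness can describe the clique itself; the paper sidesteps this by witnessing unbounded finite cliques, with a chained/looped sequence of 3-cycles patterns (\cref{lem:3-cycles2,lem:3-cycles3}) handling cliques whose members differ only in the periodic part --- a case your single-cycle picture does not cover at all. Second, for the NBA case a plain subset construction does not complement a B\"uchi condition; you need a genuine $\PSPACE$ complementation (as the paper does, building an exponential-size NBA for $\not\approx_j^R$ and then running the $\NL$ pattern search of \cref{thm:omega-infinite-clique} on it).
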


This matches the complexity of the recognizability problem of synchronous relations over finite words~\cite{BHLLN19}.
To prove \cref{thm:omega-main}, we follow the approach of~\cite{LodingS19} and solve the infinite clique problem
for $\omega$-synchronous co-equivalence relations $\not \sim$.
Their algorithm constructs an automaton for a regular set of (ultimately periodic) representatives of $\sim$,
whose size is double (triple) exponential in the size of a (non)deterministic automaton for $\not \sim$.
We circumvent the construction of this large automaton and identify a simple pattern directly in the automaton for $\not \sim$
which witnesses an infinite clique.

Our second and third main result concerns decision problems on deterministic rational relations over finite words.
We encounter two issues when applying the same reduction to the infinite clique problem on $\not \sim_j^R$.
If $R$ is a binary relation, then it is not difficult to see that $\not \sim_1^R$ is effectively rational
since two runs on pairs $(x,z)$ and $(y,z)$ with a common second component $z$
can be simulated in parallel by a 3-tape automaton reading $(x,y,z)$.
However, to the best of the authors' knowledge, it is unknown
whether the infinite clique problem for rational relations is decidable at all,
even when restricted to co-equivalence relations.
Moreover, if $R$ has arity $k > 2$ then it is unclear whether the relations $\not \sim_j^R$ are still rational.
Instead of reducing to an infinite clique problem, we revisit the proof from~\cite{CartonCG06} and
obtain the following improved complexity bounds.

\begin{theorem}
	\label{thm:drat-recognizability}
	Given a $k$-ary deterministic rational relation $R$,
	one can decide whether $R$ is recognizable
	(i) in $\P$ if $k = 2$, (ii) in $\coREXP$ if $k > 2$ is fixed, and (iii) in $\coNEXP$ if $k$ is part of the input.
\end{theorem}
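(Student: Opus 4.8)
The plan is to work through the characterization of recognizability via finite index of the congruences $\sim_j^R$, following the approach of \cite{CartonCG06}, but with a careful complexity analysis and a sharper combinatorial witness for non-recognizability. Recall that a rational relation $R \subseteq (\Sigma^*)^k$ is recognizable if and only if each $\sim_j^R$ has finite index for all $j \in [1,k-1]$ \cite[Proposition~3.8]{CartonCG06}. So non-recognizability is witnessed by an \emph{infinite} set of tuples that are pairwise inequivalent under some $\sim_j^R$. First I would fix $j$ and analyze what inequivalence means operationally: $\bm{x} \not\sim_j^R \bm{y}$ is witnessed by a \emph{distinguishing suffix} $\bm{z} \in (\Sigma^*)^{k-j}$ that belongs to $R$ after exactly one of the two prefixes. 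The key structural idea is that, because $R$ is deterministic rational, the behaviour of the multitape automaton on a given prefix $\bm{x}$ can be summarized by a \emph{finite} amount of control information (the reachable configurations of the deterministic transducer restricted to the first $j$ tapes), and two prefixes are $\sim_j^R$-equivalent whenever they leave the automaton in ``the same'' such summary with respect to all completions.

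The core of the argument is to bound the number of distinct congruence classes (or detect that it is infinite) by reasoning about the deterministic transducer directly rather than constructing an exponentially larger synchronous or recognizing automaton. For the binary case $k=2$, the relation $\not\sim_1^R$ is effectively rational --- two runs on $(x,z)$ and $(y,z)$ sharing the common second component $z$ can be simulated in parallel by a $3$-tape automaton reading $(x,y,z)$ --- and the distinguishing set forms a deterministic context-free language via the reversal encoding $L_R = \{\rev(u)\#v \mid (u,v)\in R\}$. Here I would leverage that $R$ is recognizable iff $L_R$ is regular, and that $L_R$ is a DCFL; rather than invoking Valiant's double-exponential regularity test, the plan is to directly detect a pumping pattern in the deterministic pushdown automaton for $L_R$ (equivalently, an unbounded-stack loop that produces infinitely many Myhill--Nerode classes) in polynomial time, exploiting determinism to make the search for such a witness a reachability question on the product of the automaton with itself. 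This yields the $\P$ bound for $k=2$.

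For $k > 2$ the obstacle is precisely the one flagged in the text: it is unclear whether the relations $\not\sim_j^R$ remain rational, since now the ``common completion'' $\bm{z}$ ranges over $(\Sigma^*)^{k-j}$ with $k-j \geq 1$ tapes, and simulating runs on $(\bm{x},\bm{z})$ and $(\bm{y},\bm{z})$ in parallel requires synchronizing several shared tapes asynchronously, which a single multitape automaton cannot in general do. My plan is therefore to avoid the rationality route entirely and instead reason about the transducer's configuration graph combinatorially: an infinite antichain under $\sim_j^R$ must arise from an iterable cycle in the deterministic transducer that pumps the first $j$ tapes while keeping a distinguishable effect on the possible completions. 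I would formalize a finite \emph{witness pattern} --- a pair of cycles, or a single cycle traversed with differing offsets, that demonstrably generates infinitely many inequivalent prefixes --- and show that the \emph{absence} of every such pattern certifies finite index. Guessing and verifying such a pattern is an $\exists$-search over objects of exponential size in the arity (the summaries live in a space exponential in $k$), and verifying that a guessed pattern genuinely witnesses non-recognizability reduces to emptiness/nonemptiness checks on rational relations of bounded arity. Since we want to decide \emph{recognizability} (the $\forall$/non-existence side), the guessing of a non-recognizability witness lands the complement in $\mathsf{REXP}$ (randomization enters through verifying a pumping/nonemptiness condition that is naturally a randomized exponential-time check), giving $\coREXP$ for fixed $k$; when $k$ is part of the input the summary space is doubly exponential but a single-exponential nondeterministic guess of the pattern suffices, giving $\coNEXP$. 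The hardest step, and the one I expect to demand the most care, is proving the \emph{completeness} of the finite witness pattern: that whenever some $\sim_j^R$ has infinite index, one of these bounded-size pumpable patterns must actually occur in the deterministic transducer --- this is a Ramsey/pumping argument that must control the interaction between the shared completion tapes and the pumped prefix tapes, and getting the quantitative bounds tight enough to land in $\coREXP$ rather than a higher level is the crux.
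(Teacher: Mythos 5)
Your high-level strategy (characterize non-recognizability by a finite, pumpable witness pattern in the deterministic transducer, then search for that pattern) is indeed the skeleton of the paper's argument, but the proposal leaves the two steps that actually carry the complexity bounds unsupported. For $k=2$, you route through the DCFL encoding $L_R = \{\rev(u)\#v \mid (u,v)\in R\}$ and assert that a pumping pattern in the corresponding deterministic pushdown automaton can be detected in polynomial time ``as a reachability question on the product of the automaton with itself.'' No such polynomial-time regularity test for DPDAs is known --- the best known bound is Valiant's double exponential algorithm, and closing that gap is a fifty-year-old open problem --- so this claim is precisely the theorem's hardest content restated, not an argument for it. The paper instead obtains $\P$ by a concrete logspace reduction to the \emph{equivalence problem for binary deterministic rational relations} (solvable in polynomial time by Friedman--Greibach): the witness pattern of \cref{prop:pattern} is encoded, via a synchronized-shuffle encoding of pairs of runs on a common first-tape word, as the equality of two binary deterministic rational relations $C_1 = C_2$. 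Without identifying some reduction of this kind, the $\P$ bound does not follow from your outline.

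For $k>2$ the gap is similar: you say randomization ``enters through verifying a pumping/nonemptiness condition that is naturally a randomized exponential-time check,'' but nonemptiness of rational relations is a plain reachability problem and carries no inherent randomness. In the paper, the $\coREXP$ and $\coNEXP$ bounds come from reducing (in polynomial space, producing exponentially large deterministic $k$-tape automata) to the \emph{equivalence problem} for deterministic $k$-tape automata, and then invoking Worrell's $\coRP$ algorithm for fixed $k$ and Harju--Karhum\"aki's $\coNP$ bound for arbitrary $k$; the randomization is inherited entirely from Worrell's equivalence test. Two further ingredients you would need and do not supply: (i) the switch from $\sim_j^R$ (over $j$-tuples) to the unary relations $\approx_j^R$ via the Cosmadakis--Kuper--Libkin partition theorem (\cref{prop:rat-fin-idx}), which is what makes the witness pattern involve pumping on a single tape; and (ii) the actual completeness proof of the pattern (\cref{lem:carton}), which rests on the null-transparent/$N$-invisible machinery of Carton--Choffrut--Grigorieff rather than a generic Ramsey argument. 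As it stands, the proposal correctly names the crux but does not resolve it.
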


Here, $\coREXP \subseteq \coNEXP$ is the class of all decision problems that can be solved by a randomized algorithm in exponential time,
which may err on negative instances with probability at most $1/2$.
To show \cref{thm:drat-recognizability} we reduce the recognizability problem
to the equivalence problem of deterministic $k$-tape automata.
The reduction works in logspace if $k = 2$, but requires polynomial space for $k > 2$.
Let us remark that the precise complexity of the equivalence problem is unknown:
Harju and Karhumäki showed that testing equivalence of deterministic rational relations is in $\coNP$~\cite{HarjuK91}.
Moreover, Friedman and Greibach devised a polynomial time algorithm for binary relations~\cite{FriedmanG82},
and for fixed arity $k > 2$ equivalence is decidable in randomized polynomial time~\cite{Worrell13}.
For the reduction, we first observe that recognizability can also be described in terms
of modified equivalence relations $\approx_j^R$ over words instead of $\sim_j^R$, which is defined over $j$-tuples of words.
Second, we extract from~\cite{CartonCG06} an automaton pattern that witnesses nonrecognizability.
In addition, for the case of binary relations, we need the simple but crucial observation mentioned above
that two runs on pairs of words with a common component can be simulated in parallel.

In addition, we observe that over deterministic rational relations the equivalence problem is logspace reducible to the recognizability problem (\cref{prop:eqtorec}).
Essentially, this follows from a result by Friedman and Greibach~\cite{FriedmanG78},
which reduces the equivalence problem of DPDAs restricted to a subclass $\C$
to the membership problem of DPDAs to $\C$.
Hence, over binary deterministic rational relations the recognizability problem and the equivalence problem
are in fact \emph{logspace interreducible}.

Moreover, we present a construction that transforms
a deterministic multitape automaton into an equivalent double exponentially sized \emph{independent multitape automaton},
assuming it exists, i.e.\ if the relation is recognizable.
This provides an answer to the problem of how to compute \emph{monadic decompositions}
for deterministic rational and synchronous relations, see~\cite[Section~6]{BHLLN19}.
The construction is based on known ideas from~\cite{CartonCG06}
and imitates Valiant's construction of a double exponentially large DFA from a regular DPDA~\cite{Valiant75}.
It seems that the missing piece for the construction is
our characterization of recognizability via the equivalence relations $\approx_j^R$.

Finally, we prove that one can decide whether a deterministic rational relation is synchronous
by a reduction to the recognizability problem, which was left open in~\cite{CartonCG06}.

\begin{theorem}\label{thm:drat-sync}
Given a $k$-ary deterministic rational relation $R$,
one can decide whether $R$ is synchronous
(i) in $\P$ if $k = 2$ and
(ii) in $(2k-4)$-$\EXP$ if $k > 2$.
\end{theorem}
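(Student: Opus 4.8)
The plan is to reduce synchronicity of a deterministic rational relation to the recognizability problem solved in \cref{thm:drat-recognizability}, combined with a direct analysis of the deterministic $k$-tape automaton. The starting point is the convolution view of synchronous relations: $R$ is synchronous iff the padded convolution language $\{u_1 \otimes \cdots \otimes u_k : (u_1,\ldots,u_k) \in R\}$ is regular. Reading the convolution, the set of tapes that are still \emph{active} (have not yet reached their end-of-word padding) only shrinks over time, so any run splits into phases determined by the order in which the tapes terminate. Within a phase all active tapes advance in lockstep, and each phase boundary hands control to a relation on the remaining, still-active coordinates. This structure immediately suggests an induction on the arity $k$.

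For the base case $k = 2$, I would show that a binary deterministic rational $R$ is synchronous iff the deterministic two-tape automaton reading it has \emph{bounded delay} on the overlapping part, i.e.\ there is no reachable cycle that advances one head without advancing the other while the relation still constrains both. Unbounded delay forces a pattern that no letter-to-letter automaton with trailing padding can realize, whereas bounded delay lets us buffer the lag at constant size and resynchronize into a synchronous automaton. Detecting a delay-increasing cycle in a deterministic automaton is a reachability question answerable in polynomial time, matching claim~(i). The tails, where one head has already finished, contribute only a one-tape regular language and never obstruct synchronicity.

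For $k > 2$ the subtlety is that unbounded delay between two coordinates is no longer fatal: it is permissible exactly when $R$ \emph{factors independently} across those coordinates, so that they may be read by separate heads without synchronization. Concretely, I would prove that $R$ is synchronous iff there is a partition of the coordinates into groups such that (a) within each group the mutual delay is bounded (the synchronized core), and (b) across groups the relation is recognizable, certified via the finite-index characterization of \cite[Proposition~3.8]{CartonCG06} and decided using \cref{thm:drat-recognizability}. Whenever such an independent group is detected, I would split it off via the monadic-decomposition construction into an independent multitape automaton and recurse on the residual relation of smaller arity. Since each separation step may cost a double-exponential blow-up in automaton size while the arity drops by at least one, unwinding the recursion from $k$ down to the polynomial-time base case $k = 2$ yields the stated $(2k-4)$-$\EXP$ bound.

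The main obstacle I anticipate is establishing correctness of the characterization for $k > 2$: proving that ``bounded delay within groups plus recognizability across groups'' is both necessary and sufficient for synchronicity, and that the relevant groups can be computed effectively. The necessity direction requires turning a hypothetical synchronous automaton into a bounded-delay/independent decomposition, while the sufficiency direction requires reassembling the synchronized cores and the independent factors into a single convolution-reading automaton without reintroducing unbounded lag. Controlling the size of the intermediate independent automata so that the recursion collapses to exactly $(2k-4)$-$\EXP$, rather than a larger tower, is the quantitatively delicate part; here I would lean on the double-exponential bound from the monadic-decomposition construction and on the fact that recognizability of a fixed-arity relation already sits in $\coREXP$, adding no further exponential levels.
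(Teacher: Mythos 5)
Your overall strategy---bounded delay within groups of tapes, recognizability across groups, and a recursion that pays one double-exponential blow-up per level of arity---is the same as the paper's, and your accounting ($k-2$ doublings on top of a polynomial base case, giving $(2k-4)$-$\EXP$) is the right one. But there are two genuine gaps. First, for $k=2$ your characterization as stated does not yield an algorithm. An asynchronous cycle is harmless exactly when the residual relation $R_q$ is recognizable for \emph{every} state $q$ reachable from that cycle; for instance, a deterministic automaton for $\Sigma^* \times \Sigma^*$ that reads tape~1 to its endmarker before touching tape~2 has unbounded delay yet defines a recognizable, hence synchronous, relation, and this is not a ``tail'' in your sense. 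Your caveat ``while the relation still constrains both'' is the right condition, but deciding it is not a reachability question: it \emph{is} the recognizability problem, and the $\P$ bound for $k=2$ rests on running the polynomial-time test of \cref{thm:drat-recognizability} on each residual $R_q$ reachable from an asynchronous cycle.

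Second, and more seriously, for $k>2$ you correctly name the central difficulty but do not resolve it. \cref{thm:drat-recognizability} decides conformance to the \emph{discrete} partition only, whereas the correct criterion (\cref{lem:completeness,lem:rec-conforms}) is that each residual $R_q$ conforms to a possibly coarser partition $P_q$, determined per state by the length profiles of the cycles from which $q$ is reachable (\cref{lem:Pq})---not by a single global partition of the coordinates. The paper makes this decidable by \emph{summarizing} $R_q$: the components inside each block of $P_q$ are convolved into a single tape, and an explicit queue-based construction (\cref{lem:construction}) shows that the summarized relation is again deterministic rational, by buffering the bounded intra-block delay and plugging in the independent multitape automata (\cref{thm:twoexp}) already built for states whose partitions are strictly finer. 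This layered, per-state construction is precisely the correctness argument you flag as ``the main obstacle,'' and it is also what makes the $(2k-4)$-$\EXP$ bound go through; without it the proposal remains a plan rather than a proof.
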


The intuition behind the algorithm is that the heads of a deterministic multitape automaton for a synchronous relation
must have \emph{bounded delay} throughout the computation, see \cite[Section~3]{FrougnyS93}, except if,
from some point on, the components are independent from each other.
To check the latter condition, we need the recognizability test from \cref{thm:drat-recognizability}.

\paragraph*{Applications}
As a corollary of our results on $\omega$-synchronous relations we will provide a $\PSPACE$-algorithm
which tests whether a quantifier-free formula over mixed real-integer linear arithmetic $(\R;\Z,+,<,0,1)$ is
\emph{monadically decomposable}, i.e.\ equivalent to a Boolean combination of monadic formulas.

Recently, recognizable relations have been featured in a decidable string constraint language,
motivated by the verification of string-manipulating programs~\cite{ChenHLRW19}.
One semantic condition of the constraint language requires that the relations appearing in the constraint are \emph{effectively} recognizable,
i.e.\ one can compute a representation as a union of Cartesian products of regular languages.
If the given relations are deterministic rational, we can effectively decide recognizability
(in polynomial-time for binary relations) and compute the required representation in double exponential time.

\section{Rational relations and their Subclasses}
\label{sec:classes}

In the following we introduce the classes of rational relations,
deterministic rational relations, synchronous relations, and recognizable relations,
which are denoted by $\Rec \subseteq \Sync \subseteq \DRat \subseteq \Rat$.
Similarly, on infinite words we consider $\omega$-rational relations,
deterministic $\omega$-rational relations, $\omega$-synchronous relations, and $\omega$-recognizable relations,
denoted by $\wRec \subseteq \wSync \subseteq \wDRat \subseteq \wRat$.
Since $\wDRat$ and $\wRat$ will not be used in this work, we will not define these classes.

Let $\Sigma$ be a finite alphabet.
The product of $k$ free monoids $(\Sigma^*)^k$
forms a monoid with componentwise multiplication $(u_1, \dots, u_k)(v_1, \dots, v_k) = (u_1v_1, \dots, u_kv_k)$.
We often denote word tuples by boldface letters $\bm{u}$ and denote its $i$-th entry by $u_i$.
As usual we identify a pair of tuples $(\bm{u},\bm{v})$ with the concatenation of $\bm{u}$ and $\bm{v}$.
Furthermore $\bm{\varepsilon} = (\varepsilon, \dots, \varepsilon)$ denotes a tuple of empty words
of appropriate dimension.
The \emph{length} of a word tuple $\|\bm{u}\| = \sum_{i=1}^k |u_i|$ is the total length of its entries.
We assume familiarity with the basic models of (non)deterministic finite automata over finite and infinite words.
Recall that the class of $\omega$-regular languages is described by \emph{nondeterministic Büchi automata} (NBAs)
as well as by \emph{deterministic parity automata} (DPAs)~\cite[Section~1]{GTW02}.

\paragraph{Rational relations}
A $k$-tape automaton $\A = (Q,\Sigma,q_0,\Delta,F)$ consists of a finite state set $Q$,
a finite alphabet $\Sigma$, an initial state $q_0$, a set $F \subseteq Q$ of final states,
and a finite set of transitions $\Delta \subseteq Q \times (\Sigma^*)^k \times Q$.
A run of $\A$ on a tuple $\bm{w} \in (\Sigma^*)^k$ from $p_0$ to $p_n$ is a sequence of transitions
$p_0 \xrightarrow{\bm{w}_1} p_1 \xrightarrow{\bm{w}_2} \cdots \xrightarrow{\bm{w}_n} p_n$
with $\bm{w} = \bm{w}_1 \bm{w}_2 \cdots \bm{w}_n$.
The relation $R(\A)$ accepted by $\A$ consists of all tuples $\bm{w} \in (\Sigma^*)^k$
such that $\A$ has a run on $\bm{w}$ from the initial to a final state.
Relations accepted by $k$-tape automata are called \emph{rational}.

\paragraph{Deterministic rational relations}
For $k$-tape automata we define the sets $H_1, \dots, H_k$
by $H_i = \{\varepsilon\}^{i-1} \times \Sigma \times \{\varepsilon\}^{k-i}$.
A $k$-tape automaton $\A = (Q,\Sigma,q_0,\Delta,F)$ is \emph{deterministic}
if (i) $Q$ is equipped with a partition into sets $Q = \bigcup_{i=1}^k Q_i$,
(ii) the transition relation has the form $\Delta \subseteq \bigcup_{i=1}^k Q_i \times H_i \times Q$, and
(iii) for every $(p,h) \in Q_i \times H_i$ there exists exactly one transition $(p,h,q) \in \Delta$.
For convenience, we represent $\Delta$ as a transition function $\delta \colon Q \times \Sigma \to Q$ instead.
Observe that 1-tape (deterministic) automata are precisely NFAs (DFAs).
A relation $R \subseteq (\Sigma^*)^k$ is \emph{deterministic rational} if there exists a
deterministic $k$-tape automaton $\A$ such that $R(\A) = \{ (w_1 \mathrm{\dashv}, \dots, w_k \mathrm{\dashv}) \mid (w_1, \dots, w_k) \in R\}$
where $\mathrm{\dashv} \notin \Sigma$ is a fresh endmarker.

\paragraph{Synchronous relations}
Let $\bot \notin \Sigma$ be a fresh padding symbol.
For finite words $w_1, \dots, w_k \in \Sigma^*$ with $w_i = a_{i,1} \cdots a_{i,n_i}$
we define the convolution $w_1 \otimes \dots \otimes w_k$ of length $n = \max \{n_1, \dots, n_k\}$ by
\[w_1 \otimes \dots \otimes w_k \define = \begin{bmatrix}w_1 \\ \vdots \\ w_k\end{bmatrix} 
\define = \begin{pmatrix}a_{1,1}' \\ \vdots \\ a_{k,1}'\end{pmatrix} \cdots \begin{pmatrix}a_{1,n}' \\ \vdots \\ a_{k,n}'\end{pmatrix} \in ((\Sigma \cup \{\bot\})^k)^*\]
where $a_{i,j}' = a_{i,j}$ if $j \le n_i$ and $a_{i,j}' = \bot$ otherwise.
Similarly, we define the convolution for infinite words where the padding symbol $\bot$ is not needed.
A relation $R$ over (in)finite words is \mbox{($\omega$-)}\emph{synchronous}
if $\otimes R \define = \{ w_1 \otimes \dots \otimes w_k \mid (w_1,\dots,w_k) \in R \}$ is a ($\omega$-)regular language.
A ($\omega$-)synchronous relation $R$ is always given by a finite \mbox{($\omega$-)}automaton for the language $\otimes R$.

\paragraph{Recognizable relations}
A $k$-ary relation $R$ on (in)finite words is ($\omega$-)\emph{recognizable}
if it is a finite union $R = \bigcup_{i=1}^n L_{i,1} \times \dots \times L_{i,k}$
of Cartesian products of ($\omega$-)regular languages $L_{i,j}$.

\begin{example}
\label{ex:transd}
The relation $R_1 = \{(x,y) \mid |x| + |y| \ge 2 \}$ over some finite alphabet $\Sigma$ is recognizable,
since it can be written as the union of the Cartesian products $\Sigma^{\ge 2} \times \Sigma^*$,
$\Sigma^{\ge 1} \times \Sigma^{\ge 1}$, and $\Sigma^* \times \Sigma^{\ge 2}$
where $\Sigma^{\ge \ell}$ contains all words of length at least $\ell$.
The equality relation $R_2 = \{(x,x) \mid x \in \Sigma^*\}$ is synchronous but clearly not recognizable.
The relation $R_3 = \{(x,y) \mid x \text{ is a scattered subword of } y \}$ is deterministic rational 
(the deterministic automaton greedily embeds $x$ into $y$) but not synchronous.
The relation $R_4 = \{(x,y) \mid x \text{ is an infix of } y \}$ is rational but not deterministic rational.
\end{example}

\section{Deciding Recognizability via finite-index equivalences}
\label{sec:finite-index}

The key to decide recognizability of relations is a characterization by equivalence relations,
akin to the Myhill-Nerode equivalence for languages.
Let $D$ be any domain (e.g.\ $\Sigma^*$ or $\Sigma^\omega$) and let $R \subseteq D^k$ be a $k$-ary relation.
For $I \subseteq \{1, \dots, k\}$, and two tuples $\bm{u} \in D^{|I|}$, $\bm{v} \in D^{k-|I|}$
we define $\bm{u} \odot_I \bm{v} \in D^k $ to be the unique $k$-tuple
whose projection to $I$ is $\bm{u}$ and whose projection to $\{1, \dots, k\} \setminus I$ is $\bm{v}$.
Define the equivalence relation $\approx_I^R$ on $D^{|I|}$ by
\begin{align*}
	\bm{x} \approx_I^R \bm{y} \define \iff & \text{for all } \bm{z} \in D^{k-|I|} \colon \\
	& (\bm{x} \odot_I \bm{z} \in R \iff \bm{y} \odot_I \bm{z} \in R).
\end{align*}
Usually, $R$ will be clear from the context and we simply write $\approx_I$.
If $I = \{j\}$ is a singleton we also write $\odot_j$ and $\approx_j$ instead of $\odot_I$ and $\approx_I$.
Notice that $\approx_{[1,j]}$ coincides with the relation $\sim_j$ from the introduction.
For example, if $R_1$ is the relation from \cref{ex:transd}, then $\approx_1^{R_1}$ has three equivalence classes
$\Sigma^{\ge 2}$, $\Sigma$ and $\{\varepsilon\}$.
We need the following characterization of recognizable relations.
\begin{proposition}
	\label{prop:rat-fin-idx}
	Let $R \in \Rat \, \cup \, \wSync$. The following are equivalent:
	\begin{enumerate}
	\item $R$ is \mbox{($\omega$-)}recognizable.
	\item $\approx_{[1,j]}^R$ has finite index for all $j \in [1,k-1]$.
	\item $\approx_j^R$ has finite index for all $j \in [1,k-1]$.
	\end{enumerate}
\end{proposition}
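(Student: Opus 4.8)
The plan is to prove the cycle of implications $(1) \Rightarrow (3) \Rightarrow (2) \Rightarrow (1)$, handling the first two implications uniformly for an arbitrary domain $D$ and reserving the structural hypothesis $R \in \Rat \cup \wSync$ only for the last one. For $(1) \Rightarrow (3)$, write the \mbox{($\omega$-)}recognizable relation as $R = \bigcup_{i=1}^n L_{i,1} \times \cdots \times L_{i,k}$. Whether $\bm{x} \odot_j \bm{z} \in R$ holds depends on $\bm{x}$ only through the \emph{pattern} $\{ i : \bm{x} \in L_{i,j}\} \subseteq \{1,\dots,n\}$, so the finite-index equivalence ``same pattern in position $j$'' refines $\approx_j^R$; hence $\approx_j^R$ has index at most $2^n$. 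For $(3) \Rightarrow (2)$, I would use a telescoping argument: if $x_\ell \approx_\ell y_\ell$ for all $\ell \in [1,j]$, then swapping the coordinates one at a time (each swap legal because $\approx_\ell$ quantifies over \emph{all} completions) shows $(x_1,\dots,x_j) \approx_{[1,j]} (y_1,\dots,y_j)$. Thus the product $\prod_{\ell=1}^j \approx_\ell$ refines $\approx_{[1,j]}$, and its index $\prod_{\ell=1}^j \mathrm{index}(\approx_\ell)$ is finite.

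The substantive direction is $(2) \Rightarrow (1)$. For $R \in \Rat$ this is exactly Carton--Choffrut--Grigorieff~\cite[Proposition~3.8]{CartonCG06}, which I would cite. For $R \in \wSync$ I would argue by induction on the arity $k$, the base case $k=1$ being a single $\omega$-regular language. For the step, consider the finitely many classes $C_1, \dots, C_m$ of $\approx_1^R$ with representatives $r_i \in C_i$; then $R = \bigcup_{i=1}^m C_i \times R_i$ with sections $R_i = \{ \bm{z} : r_i \odot_1 \bm{z} \in R \}$. Provided each $C_i$ is $\omega$-regular and each $r_i$ can be chosen ultimately periodic, the section $R_i$ is $\omega$-synchronous (run the automaton for $\otimes R$ with the first track hardwired to the lasso of $r_i$). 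Moreover $\approx_{[1,j']}^{R_i}$ is the restriction of $\approx_{[1,j'+1]}^R$ to tuples with first component $r_i$, so $R_i$ inherits condition~(2) for arity $k-1$; by induction each $R_i$ is $\omega$-recognizable, and $R$ becomes a finite union of products of $\omega$-regular sets, as required.

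Everything therefore hinges on the key lemma that the classes of $\approx_1^R$ are $\omega$-regular with ultimately periodic representatives, which I expect to be the main obstacle. First I would show that $\approx_1^R$ is itself $\omega$-synchronous: running two copies of the automaton for $\otimes R$ on $(x, \bm{z})$ and $(y, \bm{z})$ that \emph{share} the tracks $\bm{z}$, with symmetric-difference acceptance, yields an $\omega$-synchronous relation whose projection onto $(x,y)$ is $\not\approx_1^R$, and complementation gives $\approx_1^R$. Then I would prove that a finite-index $\omega$-synchronous equivalence $E$ has $\omega$-regular classes: for each ultimately periodic word $r$ the class $\{ x : x \otimes r \in \otimes E\}$ is $\omega$-regular, and these account for only finitely many classes; if they did not cover $\Sigma^\omega$, the ($\omega$-regular) remainder would, by McNaughton's theorem, contain an ultimately periodic word $w$, whose class is then already among those enumerated, so $w$ could not lie in the remainder -- a contradiction. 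This simultaneously yields $\omega$-regularity of every class, ultimately periodic representatives, and that there are exactly $m$ of them, closing the induction.
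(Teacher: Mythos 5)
Your proposal is correct, but it takes a genuinely different route from the paper. The paper obtains $(1)\Leftrightarrow(2)$ entirely by citation (Proposition~3.8 of Carton--Choffrut--Grigorieff for $\Rat$ and Lemma~3 of L\"oding--Spinrath for $\wSync$) and then derives $(2)\Leftrightarrow(3)$ from a general partition-refinement theorem of Cosmadakis, Kuper, and Libkin: a relation conforming to two partitions conforms to their coarsest common refinement, and the discrete partition of $[1,k]$ is generated both by the singletons and by the intervals $[1,j]$. You instead close the cycle $(1)\Rightarrow(3)\Rightarrow(2)\Rightarrow(1)$: your telescoping argument for $(3)\Rightarrow(2)$ is an elementary and correct substitute for the one consequence of the Cosmadakis--Kuper--Libkin machinery that the proposition itself needs, and your induction on arity for $(2)\Rightarrow(1)$ in the $\wSync$ case is a self-contained reproof of the L\"oding--Spinrath lemma, resting on exactly the facts the paper also uses elsewhere ($\not\approx_1^R$ is $\omega$-synchronous via shared tracks, projection, and complementation; every finite-index $\omega$-synchronous equivalence has ultimately periodic representatives, which the paper quotes from L\"oding--Spinrath in its treatment of monadic decomposability). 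The trade-off: the paper's conformance lemmas are stated for arbitrary relations and arbitrary partitions and are reused later (for summarized relations in the synchronicity algorithm), whereas your route yields $(2)\Leftrightarrow(3)$ only under the hypothesis $R\in\Rat\cup\wSync$, since your direction $(2)\Rightarrow(3)$ passes through recognizability; in exchange, your argument is more elementary and makes the $\omega$-word case self-contained rather than a black-box citation. One cosmetic point: the existence of an ultimately periodic word in a nonempty $\omega$-regular language is the standard lasso argument going back to B\"uchi, not McNaughton's determinization theorem.
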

Equivalence of 1) and 2) was proved in \cite[Proposition~3.8]{CartonCG06} for rational relations
and in \cite[Lemma~3]{LodingS19} for $\omega$-synchronous relations.
In the following we prove equivalence of 2) and 3) by applying a result by Cosmadakis, Kuper, and Libkin~\cite{CosmadakisKL01}.
We say that $R \subseteq D^k$ is an \emph{$I$-relation} if there exists a relation $S \subseteq D^{|I|}$
such that $R = \{ \bm{u} \odot_I \bm{v} \mid \bm{u} \in S, ~ \bm{v} \in D^{k-|I|} \}$.
Let $P$ be a partition of $\{1, \dots, k\}$.
We say that $R$ \emph{conforms} to $P$ if $R$ is a finite Boolean combination of relations $R_1, \dots, R_n$
where each $R_i$ is an $I$-relation for some $I \in P$.
For example, a relation $R$ conforms to the \emph{discrete partition} $\{\{1\}, \dots, \{k\}\}$ if and only if
$R$ is a finite Boolean combination of Cartesian products $L_1 \times \cdots \times L_k$
of sets $L_i$.
The following lemma is easy to show:

\begin{lemma}
	\label{lem:approx-conform}
The equivalence relation $\approx_I^R$ has finite index if and only if $R$ conforms to $\{ I, [1,k] \setminus I \}$.
\end{lemma}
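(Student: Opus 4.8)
The plan is to prove both implications directly from the definitions. Writing $J = [1,k] \setminus I$ for brevity, the guiding observation is that conformance to $\{I,J\}$ is exactly the statement that membership in $R$ depends on the $I$-coordinates of a tuple only through a \emph{finite} amount of information, which is the same thing as $\approx_I^R$ having finitely many classes.

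For the direction that conformance implies finite index, I would write $R$ as a Boolean combination of relations $R_1, \dots, R_n$ and, after relabeling, assume $R_1, \dots, R_m$ are $I$-relations and $R_{m+1}, \dots, R_n$ are $J$-relations. Each $I$-relation $R_i$ ($i \le m$) is induced by a set $S_i \subseteq D^{|I|}$, so that $\bm{x} \odot_I \bm{z} \in R_i \iff \bm{x} \in S_i$ regardless of $\bm{z}$; symmetrically each $J$-relation $R_i$ ($i > m$) is induced by a set $T_i \subseteq D^{|J|}$ with $\bm{x} \odot_I \bm{z} \in R_i \iff \bm{z} \in T_i$. Assign to each $\bm{x} \in D^{|I|}$ its \emph{profile} $\{\, i \le m \mid \bm{x} \in S_i \,\} \subseteq \{1, \dots, m\}$. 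I would then argue that tuples with equal profile are $\approx_I^R$-equivalent: for any fixed $\bm{z}$, membership of $\bm{x} \odot_I \bm{z}$ in $R$ is the value of the fixed Boolean combination evaluated on the truth values of the conditions $\bm{x} \in S_i$ ($i \le m$) and $\bm{z} \in T_i$ ($i > m$), and replacing $\bm{x}$ by a tuple of the same profile leaves all the bits that depend on $\bm{x}$ unchanged. Since there are at most $2^m$ profiles, $\approx_I^R$ has finite index.

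For the converse, assume $\approx_I^R$ has finite index with classes $C_1, \dots, C_r \subseteq D^{|I|}$. By the definition of $\approx_I^R$, the completion set $Z_p := \{\, \bm{z} \in D^{|J|} \mid \bm{x} \odot_I \bm{z} \in R \,\}$ does not depend on the chosen representative $\bm{x} \in C_p$, hence depends only on $p$. Because every $\bm{w} \in D^k$ decomposes uniquely as $\bm{w} = \bm{x} \odot_I \bm{z}$ and its $I$-part $\bm{x}$ lies in exactly one class $C_p$, representative-independence gives $R = \bigcup_{p=1}^r \{\, \bm{x} \odot_I \bm{z} \mid \bm{x} \in C_p,\ \bm{z} \in Z_p \,\}$. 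Writing $\hat{C}_p$ for the $I$-relation induced by $C_p$ and $\hat{Z}_p$ for the $J$-relation induced by $Z_p$, this reads $R = \bigcup_{p=1}^r (\hat{C}_p \cap \hat{Z}_p)$, which is a finite Boolean combination of $I$-relations and $J$-relations, so $R$ conforms to $\{I,J\}$.

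Both directions are elementary counting and decomposition arguments, so I do not anticipate a genuine obstacle. The only point requiring care is the bookkeeping that membership in $R$ truly factors through the finite profile (in the first direction), and that the completion sets $Z_p$ are representative-independent (in the second); both are immediate from the definition of an $I$-relation and of $\approx_I^R$.
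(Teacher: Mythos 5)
Your proof is correct and complete; both directions (the profile argument for conformance $\Rightarrow$ finite index, and the decomposition $R = \bigcup_p (\hat{C}_p \cap \hat{Z}_p)$ for the converse) are exactly the elementary argument the paper has in mind when it states the lemma without proof as ``easy to show.''
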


If $R$ conforms to $P$ then clearly $R$ conforms to any partition $P'$ that is coarser than $P$.
The \emph{coarsest refinement} $P_1 \sqcap P_2$ of two partitions
is the set of all nonempty intersections $I_1 \cap I_2$ where $I_1 \in P_1$, $I_2 \in P_2$.

\begin{theorem}[\cite{CosmadakisKL01}]
	\label{thm:partition}
	If $R \subseteq D^k$ conforms to two partitions $P_1, P_2$ then also to their coarsest refinement $P_1 \sqcap P_2$.
\end{theorem}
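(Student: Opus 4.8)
The plan is to reduce the statement to a finiteness property of the Myhill--Nerode-style equivalences $\approx_I$, and then prove that property by a counting argument that localizes to finitely many ``test points''. The first step is a reformulation that generalizes \cref{lem:approx-conform} from binary partitions to arbitrary ones: \emph{$R$ conforms to a partition $P$ of $[1,k]$ if and only if $\approx_I^R$ has finite index for every block $I\in P$.} The ``only if'' direction is routine: writing $R=\bigcup_i \prod_{I\in P} A_I^{(i)}$, for a fixed block $I$ the finitely many sets $A_I^{(i)}$ generate a finite partition of $D^{|I|}$, and two tuples lying in the same cell are $\approx_I$-equivalent. For the ``if'' direction I would use the product equivalence $\bm u\cong\bm v \iff \bm u|_I\approx_I\bm v|_I$ for all $I\in P$, which has finite index; swapping one block at a time shows $R$ is a union of $\cong$-classes, and each $\cong$-class has the form $\{\bm u : \bm u|_I\in c_I \text{ for all } I\in P\}$ for $\approx_I$-classes $c_I$, i.e.\ a finite intersection of $I$-relations. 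Hence $R$ conforms to $P$.

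With this reformulation the theorem reduces to a single block. Every block of $P_1\sqcap P_2$ has the form $K=I\cap J$ with $I\in P_1$, $J\in P_2$; conformance to $P_1$ (resp.\ $P_2$) gives that $\approx_I$ (resp.\ $\approx_J$) has finite index, so it suffices to prove the \textbf{crux}: \emph{if $\approx_I$ and $\approx_J$ have finite index, then so does $\approx_{I\cap J}$.} I would write the index set as a disjoint union $[1,k]=X\sqcup Y\sqcup K\sqcup W$ with $X=I\setminus J$, $Y=J\setminus I$, $K=I\cap J$, $W=[1,k]\setminus(I\cup J)$, so that $I=X\cup K$ and $J=Y\cup K$, and display a tuple in block order as $(\bm x,\bm y,\bm k,\bm w)$.

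For the crux I encode the $\approx_K$-type of $\bm k$ as $\rho_{\bm k}(\bm x,\bm y,\bm w)=1 \iff (\bm x,\bm y,\bm k,\bm w)\in R$, and observe that its $\bm x$-sections $\rho_{\bm k}(\bm x,\cdot,\cdot)=\alpha_{\bm x,\bm k}$ are exactly the $\approx_I$-types of $(\bm x,\bm k)$, which range over a fixed finite set $\hat{\mathcal A}$ of size $\le M:=\mathrm{index}(\approx_I)$. Thus $\rho_{\bm k}$ is determined by the map $f_{\bm k}\colon \bm x\mapsto \alpha_{\bm x,\bm k}\in\hat{\mathcal A}$, and the task becomes bounding the number of distinct $f_{\bm k}$. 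The naive bound is infinite, and this is the \textbf{main obstacle}: one must control how $f_{\bm k}$ varies with $\bm k$ \emph{without} quantifying over all completions $(\bm y,\bm w)$, since that reintroduces the unbounded family of $W$-behaviours, which is precisely what makes the overlapping case $W\neq\emptyset$ delicate (a direct ``Myhill--Nerode on $Y$'' argument would require $\approx_Y$ or $\approx_{I\cup J}$ to be finite, neither of which is available).

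The idea that resolves this is to \emph{localize to finitely many test points}. Since $\hat{\mathcal A}$ is finite, for each of the $\le\binom{M}{2}$ pairs of distinct elements of $\hat{\mathcal A}$ I pick a single pair $(\bm y_p,\bm w_p)$ at which they differ; the evaluation map $\bar e\colon \hat{\mathcal A}\to\{0,1\}^{P}$, $r\mapsto\big(r(\bm y_p,\bm w_p)\big)_p$, is then injective, so $f_{\bm k}$ is recovered from the tuple of functions $\phi_p^{\bm k}\colon \bm x\mapsto \rho_{\bm k}(\bm x,\bm y_p,\bm w_p)$. The payoff is that each $\phi_p^{\bm k}$ is the $\bm w_p$-section of the $\approx_J$-type $\beta_{\bm y_p,\bm k}$, hence lies in the \emph{global} finite set $\{\, s(\cdot,\bm w_p) : s \text{ a $\approx_J$-type}\,\}$ of size $\le N:=\mathrm{index}(\approx_J)$, which does not depend on $\bm k$. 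Consequently $f_{\bm k}$ ranges over at most $N^{\binom{M}{2}}$ possibilities, so $\approx_{I\cap J}$ has finite index; combined with the reformulation this yields that $R$ conforms to $P_1\sqcap P_2$.
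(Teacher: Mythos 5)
Your proposal is correct, but there is no proof in the paper to compare it against: \cref{thm:partition} is imported as a black box from Cosmadakis, Kuper, and Libkin \cite{CosmadakisKL01}, with no argument given. What you have written is therefore a genuine, self-contained proof of the cited result, and it holds up. Two points are worth making explicit. First, your reformulation (``$R$ conforms to $P$ iff $\approx_I^R$ has finite index for every block $I\in P$'') is essentially \cref{lem:conforms-inf-index}, whose proof in the paper \emph{uses} \cref{thm:partition}; you correctly avoid this circularity by proving the reformulation from scratch, and both directions (the atom/normal-form argument and the swap-one-block-at-a-time argument showing $R$ is a union of $\cong$-classes) are sound. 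Second, the crux --- finite index of $\approx_I$ and $\approx_J$ forces finite index of $\approx_{I\cap J}$ --- is exactly where the content of the theorem lives, and your test-point argument handles it correctly: $\bar e$ is injective because each pair of distinct $\approx_I$-types is separated at its designated point, and for a fixed test point $(\bm y_p,\bm w_p)$ the section $\bm x\mapsto [(\bm x,\bm y_p,\bm k,\bm w_p)\in R]$ depends only on the $\approx_J$-class of $(\bm y_p,\bm k)$, hence ranges over at most $N$ functions independently of $\bm k$. This even yields the explicit bound $\mathrm{index}(\approx_{I\cap J})\le N^{\binom{M}{2}}$, which is more quantitative than anything the paper needs; the degenerate cases ($|\hat{\mathcal A}|\le 1$, or $X$, $Y$, or $W$ empty) are all harmless.
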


The partition of $[1,k]$ \emph{generated} by subsets $I_1, \dots, I_n \subseteq [1,k]$
is the coarsest refinement $P_1 \sqcap \cdots \sqcap P_n$ of the partitions $P_j = \{ I_j, [1,k] \setminus I_j \}$.
For example, the discrete partition is clearly generated by the singleton sets $\{1\},\dots,\{k-1\}$.
It is also generated by all intervals $[1,j]$ for $j \in [1,k-1]$.

\begin{lemma}\label{lem:conforms-inf-index}
If a partition $P$ is generated by $I_1, \dots, I_n \subseteq [1,k]$ 
then $R \subseteq D^k$ conforms to $P$ if and only if $\approx_{I_j}^R$ has finite index for all $j \in [1,n]$.
\end{lemma}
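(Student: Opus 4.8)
The plan is to derive both implications directly from the two ingredients already assembled: \cref{lem:approx-conform}, which translates finite index of $\approx_{I}^R$ into conformance to the two-block partition $\{I, [1,k]\setminus I\}$, and \cref{thm:partition}, which closes conformance under coarsest refinement. Writing $P_j = \{I_j, [1,k]\setminus I_j\}$, we have by definition $P = P_1 \sqcap \cdots \sqcap P_n$, so the lemma reduces to a bookkeeping exercise about how conformance interacts with the refinement order on partitions.

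First I would handle the direction from right to left. Assume $\approx_{I_j}^R$ has finite index for every $j \in [1,n]$. By \cref{lem:approx-conform} this means that $R$ conforms to $P_j$ for each $j$. Applying \cref{thm:partition} to $P_1$ and $P_2$ shows that $R$ conforms to $P_1 \sqcap P_2$; inducting on $m$ and refining in one further partition at each step yields that $R$ conforms to $P_1 \sqcap \cdots \sqcap P_m$ for all $m$, and in particular to $P_1 \sqcap \cdots \sqcap P_n = P$.

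For the converse I would invoke the observation recorded just before \cref{thm:partition}, namely that conformance is inherited by coarser partitions. Since the coarsest refinement is finer than each of its arguments, $P = P_1 \sqcap \cdots \sqcap P_n$ refines every $P_j$, that is, each $P_j$ is coarser than $P$. Hence if $R$ conforms to $P$, then $R$ conforms to $P_j$ for each $j$, and \cref{lem:approx-conform} gives that $\approx_{I_j}^R$ has finite index.

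I do not expect a genuine obstacle: the entire mathematical content sits in \cref{thm:partition} (the Cosmadakis--Kuper--Libkin result), and the surrounding steps are formal manipulations in the partition lattice. The only point requiring mild care is the induction in the first direction --- one must check that the intermediate partition $P_1 \sqcap \cdots \sqcap P_m$, to which $R$ is already known to conform, together with the next two-block partition $P_{m+1}$, to which $R$ conforms by hypothesis, are precisely the two inputs of \cref{thm:partition}, whose coarsest refinement is $P_1 \sqcap \cdots \sqcap P_{m+1}$. This is immediate once one notes that $\sqcap$ is associative and commutative, so the $n$-fold coarsest refinement can be assembled one partition at a time.
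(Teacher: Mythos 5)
Your proof is correct and follows essentially the same route as the paper's: the paper also deduces the equivalence from \cref{thm:partition} (conformance to $P$ iff conformance to each two-block partition $\{I_j,[1,k]\setminus I_j\}$) combined with \cref{lem:approx-conform}. You merely spell out explicitly the induction over the iterated coarsest refinement and the coarsening observation, which the paper leaves implicit.
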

\begin{proof}
By \Cref{thm:partition}, $R$ conforms to $P$ if and only if $R$ conforms to $\{ I_j, [1,k] \setminus I_j\}$ for each $j \in [1,n]$.
By \cref{lem:approx-conform} this is equivalent to finite index of $\approx_{I_j}^R$ for all $j \in [1,n]$.
\end{proof}

Choosing the discrete partition on $[1,k]$ as $P$ in \cref{lem:conforms-inf-index}
we obtain the equivalence of 2) and 3) in \cref{prop:rat-fin-idx}.

\section{Deciding $\omega$-Recognizability in $\wSync$}
\label{sec:omega}

The goal of this section is to prove \Cref{thm:omega-main}.
The lower bounds are inherited from the finite-word case by padding,
since recognizability of synchronous relations is $\PSPACE$-complete (resp. $\NL$-complete)
if the relation is given by an NFA (resp. DFA) \cite{BHLLN19}.
For the upper bounds we follow the same approach as in~\cite{BHLLN19,BGLZ22} for the recognizability problem for synchronous relations.
Given an ($\omega$-)synchronous relation $R$ the complements $\not \approx_j^R$
are again ($\omega$-)synchronous.
In fact, if $R$ is given by a (non)deterministic automaton, then a nondeterministic automaton for $\not \approx_j^R$
can be computed in logspace (polynomial space):
Observe that $x \not \approx_j^R y$ if and only if
\begin{align*}
	\exists \bm{z} \in (\Sigma^\omega)^{k-1} \colon & (x \odot_{j} \bm{z} \in R ~ \wedge ~ y \odot_{j} \bm{z} \notin R) ~ \vee \\
	& \quad (x \odot_{j} \bm{z} \notin R ~ \wedge ~ y \odot_{j} \bm{z} \in R).
\end{align*}
If $R$ is given by a DPA $\B$,
we can construct an DPA for $(\Sigma^\omega)^k \setminus R$ in logarithmic space
and convert it into an NBA $\bar \B$~\cite{KingKV01}.
If $R$ is given by an NBA $\B$ then this step incurs an exponential blowup
but can still be done in polynomial space~\cite{SVW87}.
From $\B$ and $\bar \B$ we can construct NBAs for the relations $\not \approx_j$ in logspace
(intersections, unions, and projections of NBAs are logspace computable).

By \cref{prop:rat-fin-idx} it remains to check whether for some $j < k$ the relation $\not \approx_j^R$ has an \emph{infinite clique},
i.e.\ an infinite sequence of pairwise distinct words $w_1, w_2, \dots$ such that $w_{i_1} \not \approx_j^R w_{i_2}$ for all $i_1 < i_2$.
In \cite{BGLZ22} it is shown that the infinite clique problem can be solved in nondeterministic logspace for arbitrary synchronous relations over finite words.
For arbitrary $\omega$-synchronous relations, it is a longstanding open problem whether the infinite clique problem is decidable.
However, in \cite{LodingS19} it is shown to be decidable in double exponential time for $\omega$-synchronous \emph{co-equivalence relations},
i.e.\ complements of equivalence relations.
In the following we will show that for those relations the infinite clique problem can even be solved in nondeterministic logspace.
Applying this result to the relations $\not \approx_j^R$ yields an $\NL$ respectively $\PSPACE$ algorithm
for $\omega$-recognizability of $\omega$-synchronous relations
depending on whether $R$ is given by a DPA or NBA.

\begin{theorem}
	\label{thm:omega-infinite-clique}
	It is $\NL$-complete to decide,
	given a nondeterministic Büchi automaton for an $\omega$-synchronous co-equivalence relation $\bar E$,
	 whether $\bar E$ has an infinite clique.
\end{theorem}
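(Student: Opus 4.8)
The plan is to characterize infinite cliques in an $\omega$-synchronous co-equivalence relation $\bar E$ by a structurally simple \emph{pattern} in a Büchi automaton, and then to show that searching for this pattern can be done in nondeterministic logspace (the $\NL$ lower bound is already inherited by the reduction from the finite-word case, so I focus on the upper bound). First I would fix an NBA $\A$ for $\bar E$ over the convolution alphabet, with $\bar E = \{u \otimes v \mid u \mathrel{\bar E} v\}$; since $\bar E$ is the complement of an equivalence relation $E$, it is \emph{irreflexive} ($u \mathrel{E} u$ for all $u$, so $u \not\mathrel{\bar E} u$) and symmetric. The key combinatorial observation I would isolate is that an infinite clique exists if and only if there is an infinite sequence of pairwise non-$E$-equivalent words; and because $E$ is an equivalence relation, this is equivalent to saying $E$ has infinitely many classes, i.e.\ \emph{infinite index}. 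So the task reduces to detecting infinite index of an $\omega$-synchronous equivalence relation whose complement is given by an NBA.

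The core of the argument is to translate ``$\bar E$ has an infinite clique'' into the existence of a certain accepting structure in $\A$. The natural pattern, following the spirit of the refinement in \cref{sec:omega}, is a family of finite prefixes $\{p_n\}$ that are pairwise $E$-inequivalent, witnessed by runs of $\A$; I would try to compress infinitely many witnesses into a single finite object. Concretely, I would look for two distinct state-labelled ``pumpable'' loops whose iterates generate prefixes $p_0, p_1, p_2, \dots$ that are guaranteed pairwise $\bar E$-related. The cleanest form is to search for a single control state $q$ reachable from the initial state, a loop at $q$ producing some nonempty convolution block $\beta$, and a way to continue from $q$ to an accepting cycle so that for distinct iteration counts $m \neq m'$ the two words $x \beta^m$ and $x \beta^{m'}$ are non-equivalent. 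Transitivity of $E$ then upgrades pairwise inequivalence of the \emph{generating} loop into an infinite antichain: if all the $x\beta^m$ were in finitely many $E$-classes, two of them would coincide in class, contradicting the loop-inequivalence witness. The role of the co-equivalence assumption is exactly here: it is what lets a \emph{single} separating witness propagate to an infinite clique, something false for arbitrary $\omega$-synchronous relations (where the infinite clique problem is open).

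The algorithmic step is then to verify in $\NL$ that such a pattern exists. The pattern I would search for is a bounded tuple of states and loop descriptors in $\A$ together with a witnessing accepting run for the separating property; each of these is reachability/cycle-detection in the (product) automaton, all of which are in $\NL$, and $\NL$ is closed under the constantly many compositions, intersections, and complementations-of-reachability (via Immerman–Szelepcsényi) that the pattern requires. The convolution encoding is handled by working over the alphabet $(\Sigma\cup\{\bot\})^2$ and treating the $\bot$-padding uniformly, which only changes the alphabet size by a constant. The main obstacle I expect is the \emph{soundness and completeness} of the pattern, not the complexity bookkeeping: I must prove that infinite index forces the existence of a finitely describable pumpable witness (a pigeonhole/Ramsey-type argument on runs of the finite-state $\A$, using that there are only finitely many control states but infinitely many pairwise-inequivalent words), and conversely that any such witness genuinely yields infinitely many pairwise $\bar E$-related words. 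Getting the witness shape exactly right so that it is both \emph{necessary} (completeness) and \emph{checkable in logspace} (no need to reconstruct the double-exponential representative automaton of \cite{LodingS19}) is the crux; this is precisely the place where the excerpt promises to ``circumvent the construction of this large automaton and identify a simple pattern directly.''
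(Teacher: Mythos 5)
Your high-level framing is right — the lower bound is inherited from the finite-word case, an infinite clique in $\bar E$ is the same as $E$ having infinite index, and the whole game is to find a finitely describable pattern directly in the NBA $\A$ for $\bar E$ that is both necessary and sufficient and checkable in $\NL$. But the concrete pattern you propose, a single pumpable loop at a state $q$ producing a block $\beta$ with iterates $x\beta^m$, cannot work, and the reason is structural. The automaton $\A$ reads \emph{convolutions of pairs}, so to certify that $x\beta^m$ and $x\beta^{m'}$ (as first components of infinite words) form an edge of the clique you need an accepting run on the convolution of \emph{both} words, in which the two components fall out of alignment after the shorter one finishes its $\beta$-blocks. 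A single loop gives you runs on $\bsvector{\beta}{\beta}^m$ only; it says nothing about the off-diagonal convolutions. This is exactly why the paper's witness is a \emph{3-cycles pattern}: three loops reading $\bsvector{v}{v}$, $\bsvector{x}{v}$, and $\bsvector{x}{x}$ with $|v|=|w|=|x|$ and $v\neq w$, so that for every $i<j\le n$ the pair $uv^iwx^{n-i}y\cdots$ versus $uv^jwx^{n-j}y\cdots$ convolves into $\bsvector{u}{u}\bsvector{v}{v}^i\bsvector{w}{v}\bsvector{x}{v}^{j-i-1}\bsvector{x}{w}\bsvector{x}{x}^{n-j}\bsvector{y}{y}\cdots$, a word the pattern accepts; the condition $v\neq w$ with equal lengths is also what makes the clique elements pairwise distinct and keeps the suffix aligned. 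Your transitivity argument does not substitute for this: to build a clique you must verify \emph{all} pairs are in $\bar E$, and inequivalence of some pairs plus transitivity of $E$ only tells you the words lie in infinitely many classes — which is what you want to conclude, not what you can assume.

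Two further pieces are missing. First, completeness: in the paper, ``infinite index $\Rightarrow$ pattern exists'' is not a direct pigeonhole on runs of $\A$; it goes through the Löding--Spinrath representation $L_\#(E)=\bigcup P_i\{\#\}S_j$ and the characterization of non-slender regular languages by two distinct cycles ($uv^*wx^*y\subseteq L$), which is where the shape $uv^iwx^{n-i}y$ comes from, followed by an idempotent-transition-profile argument (\cref{lem:existence-3-cycles}) to extract the cycles at the stated positions. Second, your proposal only considers cliques whose elements differ in a finite prefix and then continue into ``an accepting cycle.'' The paper's own example $\not\ee$ shows this is not enough: there are co-equivalence relations whose unbounded cliques differ only in the \emph{periodic} part (e.g.\ $((a^ib^{n-i})^\omega)_{0\le i\le n}$), and these require a genuinely different witness — a chain $q_1\threecp q_2\threecp\cdots\threecpf q_\ell$ of 3-cycles patterns with a repeated state $q_k=q_\ell$ and a final state inside the repeated segment (\cref{lem:3-cycles2,lem:3-cycles3}). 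You flag the witness shape as ``the crux'' and leave it open; unfortunately that crux is essentially the entire content of the theorem, and the candidate shape you do commit to is neither sound for the pair-convolution semantics nor complete for periodic-part cliques.
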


The rest of this section is devoted to proving \cref{thm:omega-infinite-clique}.
One challenge in finding infinite cliques in $\omega$-synchronous relations is
how to even finitely represent infinite clique, i.e.\ an infinite sequence of infinite words.
A strong indicator that this is indeed difficult is that there are
$\omega$-synchronous relations which have infinite cliques but no \emph{regular} infinite clique.
One such example is the complement of the \emph{equal ends} relation
$\ee$ on $\Sigma^\omega$ where $u \ee v$
if and only if there exist $x,y \in \Sigma^*$, $z \in \Sigma^\omega$ with $|x| = |y|$ and $u = xz$ and $v = yz$.
Kuske and Lohrey observed that, although $\not \ee$ has infinite cliques it does not have regular infinite cliques~\cite[Example~2.1]{KuskeL08}.

Instead of checking whether $\bar E$ has an infinite clique,
we follow the approach of Löding and Spinrath \cite{LodingS19} and equivalently decide
whether $\bar E$ has \emph{unbounded cliques}: For each $n \ge 1$ there exists a clique $(w_1, \dots, w_n)$ of size $n$ in $R$,
i.e.\ $(w_i,w_j) \in \bar E$ for all $1 \le i < j \le n$.
The important observation made in~\cite{LodingS19} is that it suffices to search for unbounded cliques consisting of ultimately periodic words.
Since ultimately periodic words $uv^\omega$ can be encoded by the finite word $u \# v$,
this allows us to reduce the infinite clique problem over $\omega$-synchronous co-equivalence relations
to a question over synchronous relations over finite words.

For the rest of this section fix an NBA $\A = (Q,\Sigma^2,q_0,\Delta,F)$
for the complement $\bar E \subseteq \Sigma^\omega \times \Sigma^\omega$ of an equivalence relation $E$.
Let us recall the results from \cite{LodingS19} that will be used for our proof.
Define the equivalence relation $E_\# \subseteq (\Sigma^* \# \Sigma^*)^2$ where $\# \notin \Sigma$ by
\[
E_\# \define = \{(u \# v, x \# y) \mid (uv^\omega,xy^\omega) \in E, ~ |u| = |x|, ~ |v| = |y| \},
\]
which is an $\omega$-synchronous relation~\cite{LodingS19}.
Let $L_\#(E)$ be a regular set of representatives of $E_\#$,
e.g.\ consisting of the length-lexicographically minimal elements in each class~\cite[Proof~of~Proposition~3.9]{CartonCG06}.
A language $L \subseteq \Sigma^*$ is \emph{slender} if there exists a $k \in \N$ such that
for all $\ell \in \N$ it holds that $|L \cap \Sigma^\ell| < k$.
The following lemma is shown in \cite[Lemmas~12~and~13]{LodingS19}.

\begin{lemma}\label{lem:decomposition}
One can write $L_\#(E) = \bigcup_{(i,j) \in I} P_i \{\#\} S_j$ for a finite index set $I \subseteq \N^2$ 
and non-empty regular languages $P_i,S_j \subseteq \Sigma^*$ for all $(i,j) \in I$.
Furthermore, $E$ has finite index if and only if $P_i$ and $S_j$ are slender for all $(i,j) \in I$.
\end{lemma}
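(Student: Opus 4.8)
The plan is to treat the two assertions separately, deriving the decomposition from regularity of $L_\#(E)$ and the slenderness characterization from a counting argument relating the shape-wise cardinality of $L_\#(E)$ to the index of $E$. For the decomposition I would fix a DFA $\mathcal{M}$ for the regular language $L_\#(E) \subseteq \Sigma^* \# \Sigma^*$ and exploit that every accepted word contains exactly one occurrence of $\#$. For each state $q$ that $\mathcal{M}$ can enter immediately after reading $\#$, let $P_q \subseteq \Sigma^*$ be the set of words driving $\mathcal{M}$ from its initial state through the $\#$-transition into $q$, and let $S_q \subseteq \Sigma^*$ be the set of words accepted from $q$; both are regular, and we discard those $q$ for which $P_q$ or $S_q$ is empty. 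Since the unique $\#$ makes $q$ a function of the prefix, this yields $L_\#(E) = \bigcup_q P_q \# S_q$ over a finite index set, which is of the required form $\bigcup_{(i,j)\in I} P_i \# S_j$ with $I$ on the diagonal; working with the minimal DFA makes the pieces canonical, so that the subsequent equivalence is well posed.

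The crux of the second claim is a bijection: for each shape $(m,n)$, the $E_\#$-classes whose representatives lie in $\Sigma^m \# \Sigma^n$ correspond exactly to the $E$-classes realized by some $uv^\omega$ with $|u| = m$, $|v| = n$. Indeed, two encodings of a common shape are $E_\#$-equivalent precisely when the encoded $\omega$-words are $E$-equivalent, so sending an $E_\#$-class to the $E$-class of its encoded word is injective on each shape and surjective onto the realized $E$-classes. As $L_\#(E)$ contains one representative per $E_\#$-class, this gives $|L_\#(E) \cap (\Sigma^m \# \Sigma^n)|$ equal to the number of $E$-classes realized at shape $(m,n)$. I would also record the standard fact that, because $E$ is $\omega$-synchronous, $E$ has finite index if and only if the ultimately periodic words realize only finitely many $E$-classes: each class realized by an ultimately periodic word is a section of the $\omega$-regular relation $E$ at an ultimately periodic point, hence $\omega$-regular, and a finite union of such classes containing every ultimately periodic word must be all of $\Sigma^\omega$.

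For the forward direction of the equivalence, suppose $E$ has index $N$. Since the $P_q$ are pairwise disjoint, the union is shape-wise disjoint, so $\sum_q |P_q \cap \Sigma^m|\,|S_q \cap \Sigma^n| = |L_\#(E) \cap (\Sigma^m \# \Sigma^n)| \le N$ for all $m,n$. Fixing for each retained $q$ a length $n_q$ with $S_q \cap \Sigma^{n_q} \ne \emptyset$ forces $|P_q \cap \Sigma^m| \le N$ for all $m$, so every $P_q$ is slender, and the $S_q$ are slender by the symmetric argument. For the converse I would use that a slender regular language is a finite union of languages $xy^*z$, so that slenderness of all pieces lets me write $L_\#(E)$ as a finite union of sets $xy^*z \, \# \, x'y'^*z'$. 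The ultimately periodic representatives appearing are then the two-parameter families $w_{t,s} = xy^tz(x'y'^sz')^\omega$, and a Ramsey/pumping argument on the Büchi automaton $\A$ for $\bar E$ — using idempotent powers to control the behaviour of the period $x'y'^sz'$ as $s$ grows and of the prefix $xy^tz$ as $t$ grows — shows that each such family meets only finitely many $E$-classes. As there are finitely many families, the representatives realize finitely many $E$-classes, hence so do all ultimately periodic words, and by the fact above $E$ has finite index.

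I expect the main obstacle to be this converse direction: shape-wise boundedness alone does \emph{not} bound the total number of classes, so one genuinely needs the structural consequence of slenderness, namely the reduction to finitely many linear families, together with the Ramsey argument showing that pumping either the transient prefix or the period of an ultimately periodic word cannot produce infinitely many distinct $E$-classes. The equal-ends relation $\ee$, whose complement has infinite cliques and whose classes are the uncountably many tail-classes, is the guiding example of what non-slender pieces look like.
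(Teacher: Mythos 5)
The paper does not prove this lemma itself; it quotes it from L\"oding and Spinrath \cite[Lemmas~12~and~13]{LodingS19}, so there is no in-paper proof to compare against. Within your proposal, the decomposition via the states of a DFA for $L_\#(E)$ entered right after the unique $\#$, the shape-wise bijection between $E_\#$-classes and the $E$-classes realized at that shape, the forward direction (finite index gives a uniform per-shape bound, hence slenderness of every piece after fixing a witness length for its partner), and the reduction of ``$E$ has finite index'' to ``ultimately periodic words realize finitely many $E$-classes'' are all correct.

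The converse direction, however, rests on a claim that is false as stated. You assert that for each linear family the words $w_{t,s}=xy^tz(x'y'^sz')^\omega$ meet only finitely many $E$-classes, provable by a Ramsey/pumping argument on the B\"uchi automaton for $\bar E$. Take $E$ to be equality on $\Sigma^\omega$ (an $\omega$-synchronous equivalence relation) and the family $a^t\#b^s$: the encoded words $a^tb^\omega$ lie in infinitely many $E$-classes, even though both pieces $a^*$ and $\{b\}$ are slender. Hence no argument that inspects only one family and the automaton for $\bar E$ can prove your claim; for $t\neq t'$ the convolution $w_{t,s}\otimes w_{t',s'}$ is out of alignment, so idempotent transition profiles give no control --- exactly the phenomenon behind the equal-ends example you cite. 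The claim is only true a posteriori, as a consequence of $E$ having finite index, i.e.\ of the statement being proved, so your route is circular at its key step. The correct converse bypasses families entirely: if ultimately periodic words realized infinitely many $E$-classes, then for every $h$ one could pick $h$ pairwise inequivalent words $u_iv_i^\omega$ and re-encode all of them at the common shape $m=\max_i|u_i|$, $n=\mathrm{lcm}_i|v_i|$ (extend each prefix to length $m$, unroll each period to length $n$); this yields $h$ distinct $E_\#$-classes of a single shape, so $|L_\#(E)\cap\Sigma^m\#\Sigma^n|\ge h$, whereas slenderness of all pieces bounds $\sum_{(i,j)\in I}|P_i\cap\Sigma^m|\cdot|S_j\cap\Sigma^n|$ by a constant independent of $(m,n)$. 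Combined with your (correct) reduction to ultimately periodic words, this closes the argument.
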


The approach in \cite{LodingS19} for checking whether $E$ has finite index
is to construct automata for each of the languages $P_i$ and $S_j$ and check whether they are slender.
However, an automaton for the set of representatives $L_\#(E)$ might be double exponentially large,
since its size is exponential in the size of the automaton for $E_\#$,
which in turn is exponential in the size of the automaton for $\bar E$ via a construction using transition profiles.
This results in a double exponential time algorithm given an automaton for $\bar E$.
We deviate from this approach and use the slenderness property of the languages $P_i$ and $S_j$ only to identify
the shape of unbounded cliques in $\bar E$.
In a second step, we search for patterns in the automaton for $\bar E$ that witness unbounded cliques in $\bar E$. 
The existence of these patterns can be checked in nondeterministic logspace given an automaton for $\bar E$.

\begin{restatable}{lemma}{notslender}
\label{lem:2-cycles}
A regular language $L \subseteq \Sigma^*$ is not slender if and only if
there are words $u,v,w,x,y \in \Sigma^*$ with $|v| = |w| = |x| > 0$ and $v \ne w$
such that $u v^* w x^* y \subseteq L$.
\end{restatable}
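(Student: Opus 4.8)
The plan is to characterize non-slenderness through the cycle structure of a trimmed DFA $\A = (Q,\Sigma,\delta,\iota,F)$ for $L$, recalling that $L$ is not slender exactly when the number $p(\ell) = |L \cap \Sigma^\ell|$ is unbounded, and that, by determinism, $p(\ell)$ equals the number of accepting runs of length $\ell$. The ``if'' direction is the easy half: assume $u v^* w x^* y \subseteq L$ with $|v|=|w|=|x|=m>0$ and $v \neq w$. For each $n$ the $n+1$ words $u v^i w x^{n-i} y$ (for $0 \le i \le n$) lie in $L$ and all have length $|u|+(n+1)m+|y|$. Cutting each into length-$m$ blocks, the word with parameter $i$ exhibits $w$ in the block starting at position $|u|+im$, whereas every word with a strictly larger parameter exhibits $v$ in that same block; since $v \neq w$, the words are pairwise distinct. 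Hence $p(\ell)$ is unbounded and $L$ is not slender. Note how $v \neq w$ is exactly what drives distinctness here.

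For the ``only if'' direction I would establish a dichotomy on the strongly connected components (SCCs) of $\A$, calling an SCC \emph{nontrivial} if it contains an edge. \textbf{Case (a):} some nontrivial SCC $S$ is not a simple cycle, i.e.\ some $p \in S$ has two distinct outgoing transitions that stay inside $S$. Extending each by a return path in $S$ yields distinct closed walks $c_1,c_2$ at $p$ of lengths $\ell_1,\ell_2 \ge 1$, and then $v := c_1^{\ell_2}$ and $w := x := c_2^{\ell_1}$ are equal-length loops at $p$ with $v \neq w$ (they differ in the first letter, as the two transitions carry different labels by determinism). With $u$ a path $\iota \to p$ and $y$ a path $p \to F$ we get $u v^i w x^j y = u\,c_1^{i\ell_2}\,c_2^{(j+1)\ell_1}\,y \in L$, so the pattern holds. \textbf{Case (b)} is forced once (a) fails: if in addition every accepting run met at most one nontrivial SCC, then $L$ would be a finite union of languages of the form $\alpha\gamma^*\beta$ and hence slender; so some accepting run passes through two distinct nontrivial SCCs $S_1,S_2$, each of which (by failure of (a)) is a simple cycle.

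The remaining work, and the main obstacle, is Case (b): turning two serially connected simple cycles $C_1$ at $p$ (cyclic word $v_0$, length $a$) and $C_2$ at $q$ (cyclic word $x_0$, length $b$), joined by a bridge $w_0\colon p \to q$ of length $c$, into a pattern with three equal block lengths. Taking powers alone fails, since an arbitrary bridge length $c$ need not be compatible modulo $\gcd(a,b)$ with pure powers of $v_0$ and $x_0$. The fix is to enter the second cycle \emph{out of phase}: set $m := N\cdot\lcm(a,b)$ with $N$ large enough that $m \ge c$, choose $t,d \ge 0$ with $tb + d = m-c$ and $0 \le d < b$, and let $q'$ be the state reached from $q$ after $d$ steps along $C_2$. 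Then $v := v_0^{m/a}$ is a length-$m$ loop at $p$, $w := w_0\,x_0^{t}\,(x_0)_{[1,d]}$ is a length-$m$ path $p \to q'$, $x := (\mathrm{rot}_{q'}x_0)^{m/b}$ is a length-$m$ loop at $q'$, and $y := (x_0)_{[d+1,b]}\,y_0$ closes the phase and runs to $F$. A telescoping of the rotation gives $u v^i w x^j y = u_0\,v_0^{im/a}\,w_0\,x_0^{\,t+jm/b+1}\,y_0 \in L$. Finally $v \neq w$: since $v$ is a loop at $p$, equality $v=w$ would force $w$ to return to $p$, yet $w$ ends at $q' \in C_2$, and as $C_1,C_2$ are distinct SCCs we have $p \neq q'$, a contradiction. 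This yields the required pattern, and I expect the phase-alignment bookkeeping to be the only genuinely delicate point.
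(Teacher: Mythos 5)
Your proof is correct, but it takes a genuinely different route for the hard direction. The paper's proof simply cites a known characterization of slenderness (Theorem~4.23 in Pin's book): for the minimal trimmed DFA, $L$ is not slender iff there are two distinct nonempty cycles $p \xrightarrow{v} p$, $q \xrightarrow{x} q$ joined by a run $p \xrightarrow{w} q$; it then normalizes the lengths by replacing $v,x$ with $v^{k|x|}, x^{k|v|}$ and absorbing a prefix of $x$ into $w$ while ``rebasing'' the second cycle. You instead re-derive the structural dichotomy from scratch: either some nontrivial SCC is not a simple cycle (two distinct internal out-transitions at a state $p$ yield two equal-length, distinct loops at $p$, giving the pattern with $q'=p$), or all nontrivial SCCs are simple cycles, in which case non-slenderness forces an accepting run through two of them, and you align the three block lengths via $\lcm$ and a phase rotation of the second cycle. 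Your length normalization is essentially the paper's rebasing trick in different clothing, and your observation that $v \neq w$ follows from determinism (a loop at $p$ versus a path ending in a different SCC) is sound. What your version buys is self-containedness --- no appeal to the textbook characterization --- at the cost of the SCC case analysis and the phase-alignment bookkeeping; what the paper's version buys is brevity. Your ``if'' direction (pairwise distinctness of the $n+1$ equal-length words $uv^iwx^{n-i}y$ via the block in which $w$ first appears) is the standard argument the paper leaves implicit. I see no gap.
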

\begin{proof}
Consider the minimal trimmed DFA for $L$.
By \cite[Theorem~4.23]{Pin22} $L$ is not slender if and only if the DFA contains two distinct nonempty cycles
$p \xrightarrow{v} p$, $q \xrightarrow{x} q$ (distinct means that their set of transitions are distinct),
and a run $p \xrightarrow{w} q$.
In particular, $u v^* w x^* y$ is contained in $L$
where $u$ and $y$ are words read from the initial state to $p$, and from $q$ to some final state.
We can ensure that $|w| \le |v| = |x|$ by replacing $v$ and $x$ by $v^{k|x|}$ and $x^{k|v|}$, respectively,
for a sufficiently large number $k$.
Furthermore, we can ensure $|v| = |w| = |x|$ by extending the run $p \xrightarrow{w} q$
to $p \xrightarrow{wx_1} r$ for some prefix $x_1$ of $x$ with $|w x_1| = |x|$
and rebasing the cycle $q \xrightarrow{x} q$ on state $r$.
\end{proof}

The following lemma distinguishes two types of cliques:
On the one hand, there are cliques whose words differ in a finite prefix but have equal ends;
on the other hand, there are cliques whose words do not have equal ends.
For example, consider the equality relation $=$.
In the complement $\neq$ we can find the cliques $(a^i b^{n-i} a^\omega)_{0 \le i \le n}$ for all $n \in \N$
of words that only differ in a finite prefix.
On the other hand, if we consider the equal ends $\ee$ equivalence relation
then we observe that it does not suffice to look at the finite prefixes of words to determine whether they are in relation or not.
In the complement $\not \ee$ we can find the cliques $((a^i b^{n-i})^\omega)_{0 \le i \le n}$ for all $n \in \N$ of words
that differ in the periodic part.

\begin{lemma}\label{lem:unbounded-cliques}
$\bar E$ contains an infinite clique if and only if $\bar E$ contains cliques of the form
$(u v^i w x^{n-i} y z^\omega)_{0 \le i \le n}$ or
$(z (u v^i w x^{n-i} y)^\omega)_{0 \le i \le n}$
for all $n \in \N$ where $u,v,w,x,y,z \in \Sigma^*$ with $|v| = |w| = |x| > 0$ and $v \ne w$.
\end{lemma}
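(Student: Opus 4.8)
The plan is to route both directions through the single statement that $E$ has infinite index. First I would record the elementary equivalence that $\bar E$ has an infinite clique if and only if $E$ has infinite index: an infinite clique is by definition an infinite set of pairwise non-$E$-equivalent words and hence meets infinitely many $E$-classes, while conversely a transversal of infinitely many classes is an infinite clique. This already yields the $(\Leftarrow)$ direction: a family of cliques of size $n+1$ for every $n$ witnesses arbitrarily many distinct $E$-classes, hence infinite index, hence an infinite clique. (In any clique the words are automatically distinct, since $\bar E$ is irreflexive.)

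For the $(\Rightarrow)$ direction I would chain \cref{lem:decomposition} and \cref{lem:2-cycles}. Infinite index of $E$ makes some $P_i$ or $S_j$ in the decomposition $L_\#(E) = \bigcup_{(i,j) \in I} P_i \{\#\} S_j$ non-slender, and \cref{lem:2-cycles} converts non-slenderness into words $\alpha, \beta, \gamma, \delta, \eta$ with $|\beta| = |\gamma| = |\delta| > 0$, $\beta \ne \gamma$, and $\alpha \beta^* \gamma \delta^* \eta$ contained in the offending language. These are exactly the $u, v, w, x, y$ of the two target clique shapes, and the two cases split precisely into those two shapes.

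The core step is then to build the cliques. If $P_i$ is non-slender, I would fix any $s$ in the paired language $S_j$ and set $p_m = \alpha \beta^m \gamma \delta^{n-m} \eta$ for $0 \le m \le n$; if instead $S_j$ is non-slender, I would fix any $p$ in the paired $P_i$ and set $s_m = \alpha \beta^m \gamma \delta^{n-m} \eta$. Three facts then need checking: (i) all $p_m$ (resp.\ $s_m$) have the same length, since $|\beta| = |\delta|$ makes the exponents cancel; (ii) they are pairwise distinct, by comparing the block where one word still reads $\beta$ while the other already reads $\gamma$ and invoking $\beta \ne \gamma$; and (iii) since $L_\#(E)$ holds exactly one representative per $E_\#$-class, the distinct words $p_m \# s$ (resp.\ $p \# s_m$) lie in pairwise distinct $E_\#$-classes. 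Because the length constraints defining $E_\#$ are met by (i), distinctness of $E_\#$-classes is equivalent to $(p_m s^\omega, p_{m'} s^\omega) \notin E$ (resp.\ $(p s_m^\omega, p s_{m'}^\omega) \notin E$), i.e.\ these $\omega$-words form a clique in $\bar E$ of the first (resp.\ second) form, with $z = s$ (resp.\ $z = p$).

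The step I expect to be the main obstacle is this final translation across the two encodings, from membership in the representative set $L_\#(E)$ to $E_\#$-inequivalence and then to $\bar E$. The definition of $E_\#$ carries the length constraints $|u| = |x|$ and $|v| = |y|$, so the whole reduction hinges on the equal-length design of the pumped words in (i) and on choosing a common $s$ (resp.\ common $p$) so that the second components also have matching length; getting these alignments right, rather than any deep new idea, is where the care is needed.
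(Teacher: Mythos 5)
Your proposal is correct and follows essentially the same route as the paper's proof: both directions reduce to infinite index of $E$, and the forward direction chains \cref{lem:decomposition} with \cref{lem:2-cycles}, picks a fixed word $z$ from the non-slender language's partner, and uses pairwise distinctness of the equal-length words $uv^iwx^{n-i}y$ together with the fact that $L_\#(E)$ contains one representative per $E_\#$-class. The extra care you take with the length constraints $|u|=|x|$, $|v|=|y|$ in the definition of $E_\#$ is exactly the point the paper's proof passes over implicitly, and your handling of it is right.
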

\begin{proof}
The ``if'' direction is immediate since the existence of such cliques imply that $\bar E$ contains unbounded cliques
and therefore also an infinite clique.

For the ``only if'' direction assume that $\bar E$ contains an infinite clique which means that $E$ has infinite index.
Then by \Cref{lem:decomposition} there are non-empty regular languages $P,S \subseteq \Sigma^*$
with $P \{\#\} S \subseteq L_\#(E)$ such that $P$ or $S$ is not slender.
By \Cref{lem:2-cycles} there are $u,v,w,x,y \in \Sigma^*$ with $|v| = |w| = |x| > 0$ and $v \ne w$
such that $uv^*wx^*y \subseteq P$ or $uv^*wx^*y \subseteq S$.
If $uv^*wx^*y \subseteq P$ we pick a word $z \in S$ from the non-empty language $S$.
Then $u v^* w x^* y \# z \subseteq L_\#(E)$.
Since all words of the form $u v^i w x^j y$ are pairwise different,
$\bar E$ contains the clique $(u v^i w x^{n-i} y z^\omega)_{0 \le i \le n}$ for all $n \in \N$.
Similarly, if $uv^*wx^*y \subseteq S$, we pick a word $z \in P$
and find the cliques $(z (u v^i w x^{n-i} y)^\omega)_{0 \le i \le n}$ of size $n$ in $\bar E$.
\end{proof}

A \emph{3-cycles pattern} consists of states $q_1,q_2,q_3,q_4,q_5 \in Q$
and words $u,v,w,x,y \in \Sigma^*$ with $|v| = |w| = |x| > 0$ and $v \ne w$ such that
\begin{gather*}
q_1 \xrightarrow{\bsvector{u}{u}} q_2, \quad 
q_2 \xrightarrow{\bsvector{v}{v}} q_2, \quad 
q_2 \xrightarrow{\bsvector{w}{v}} q_3, \quad
q_3 \xrightarrow{\bsvector{x}{v}} q_3, \\
q_3 \xrightarrow{\bsvector{x}{w}} q_4, \quad
q_4 \xrightarrow{\bsvector{x}{x}} q_4, \quad
q_4 \xrightarrow{\bsvector{y}{y}} q_5.
\end{gather*}
We say that the above is a \emph{3-cycles pattern from $q_1$ to $q_5$}.
The 3-cycles pattern is called \emph{final} if one of the runs
$q_1 \xrightarrow{\bsvector{u}{u}} q_2$, $q_2 \xrightarrow{\bsvector{w}{v}} q_3$,
$q_3 \xrightarrow{\bsvector{x}{w}} q_4$, $q_4 \xrightarrow{\bsvector{y}{y}} q_5$
visits a final state.
If there exists a (final) 3-cycles pattern from $p$ to $q$
we write $q_1 \threecp q_5$ and
$q_1 \threecpf q_5$, respectively.

Clearly $p \threecp q$ implies that the automaton contains $p$-$q$-runs reading $uv^iwx^{n-i}y \otimes uv^jwx^{n-j}y$
for all $i < j \le n$, for some $u,v,w,x,y$ with $|v| = |w| = |x| > 0$ and $v \ne w$.
To prove that the converse also holds, we use transition profiles.
A \emph{transition profile} $\tau = (\Rightarrow,\stackrel{F}{\Rightarrow})$ over $\A$ consists of two binary relations
$\Rightarrow, \stackrel{F}{\Rightarrow}$ over $Q$.
For each word $w \in (\Sigma^2)^*$ we define the transition profile $\tau(w)$ such that
$p \Rightarrow q$ if and only if there exists a run $p \xrightarrow{w} q$, and
$p \stackrel{F}{\Rightarrow} q$ if and only if there exists a run $p \xrightarrow{w} q$ visiting a final state.
It is easy to see that $\tau(uv)$ is determined by $\tau(u)$ and $\tau(v)$, and therefore the set $\mathsf{TP}(\A) = \{ \tau(w) \mid w \in (\Sigma^2)^* \}$
forms a finite monoid with the well-defined operation $\tau(u) \cdot \tau(v) = \tau(uv)$ and neutral element $\tau(\varepsilon)$.
An element $s$ in a monoid $M$ is \emph{idempotent} if $s^2 = s$.
Every finite monoid $M$ has an \emph{idempotent exponent}, i.e.\ a number $n \ge 1$ so that $s^n$ is idempotent
for all $s \in M$.

\begin{lemma}\label{lem:existence-3-cycles}
Let $p \in \N$ be the idempotent exponent of $\mathsf{TP}(\A)$.
If for words $u,v,w,x,y,z \in \Sigma^*$ with $|v| = |w| = |x| > 0$ and $v \ne w$ and states $q_1,q_5 \in Q$
there exists a run $\rho$ in $\A$ from $q_1$ to $q_5$ reading
\[
	\bsvector{u}{u} \bsvector{v^{pn}}{v^{pn}} \bsvector{v^{p-1}w}{v^p} \bsvector{x^{pn}}{v^{pn}} \bsvector{x^p}{v^{p-1}w} \bsvector{x^{pn}}{x^{pn}} \bsvector{y}{y}
\]
then $q_1 \threecp q_5$. If $\rho$ visits a final state then $q_1 \threecpf q_5$.
\end{lemma}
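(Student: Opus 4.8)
The plan is to read the five states of a 3-cycles pattern directly off $\rho$, using idempotency of the transition profile $s_\alpha := \tau(\alpha^p)$ (and its analogues $s_\beta,s_\gamma$) to manufacture the required single-block self-loops. First I regroup the input word: writing $\alpha = \bsvector{v}{v}$, $\beta = \bsvector{x}{v}$, $\gamma = \bsvector{x}{x}$ and $B_1 = \alpha^{p-1}\bsvector{w}{v} = \bsvector{v^{p-1}w}{v^p}$, $B_2 = \beta^{p-1}\bsvector{x}{w} = \bsvector{x^p}{v^{p-1}w}$ (all words over $\Sigma^2$ of block length $|v|=|w|=|x|$), the word read by $\rho$ is exactly $\bsvector{u}{u}\,\alpha^{pn}\,B_1\,\beta^{pn}\,B_2\,\gamma^{pn}\,\bsvector{y}{y}$. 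I split $\rho$ accordingly as
\[ q_1 \xrightarrow{\bsvector{u}{u}} a_0 \xrightarrow{\alpha^{pn}} a_1 \xrightarrow{B_1} b_0 \xrightarrow{\beta^{pn}} b_1 \xrightarrow{B_2} c_0 \xrightarrow{\gamma^{pn}} c_1 \xrightarrow{\bsvector{y}{y}} q_5 \]
and look for loop states $q_2,q_3,q_4$ inside the $\alpha$-, $\beta$- and $\gamma$-blocks, respectively.

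Here is the device. As $p$ is the idempotent exponent, $s_\alpha = \tau(\alpha)^p$ is idempotent, so $\tau(\alpha^{pt}) = s_\alpha^{\,t} = s_\alpha$ for all $t \ge 1$; writing $\Rightarrow_{s_\alpha}$ for the reachability relation of the profile $s_\alpha$, the sub-run $a_0 \xrightarrow{\alpha^{pn}} a_1$ gives $a_0 \Rightarrow_{s_\alpha} a_1$. Unfolding the identity $s_\alpha = s_\alpha^{\,|Q|+1}$ exhibits $|Q|+2$ states joined by $s_\alpha$-edges from $a_0$ to $a_1$, so two of them coincide in a state $q_2$, yielding $a_0 \Rightarrow_{s_\alpha} q_2$, $q_2 \Rightarrow_{s_\alpha} q_2$ and $q_2 \Rightarrow_{s_\alpha} a_1$. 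The decisive point is that, since $s_\alpha = \tau(\alpha^{pt})$ for every $t$, each profile edge $p_1 \Rightarrow_{s_\alpha} p_2$ witnesses an \emph{actual} run $p_1 \xrightarrow{\alpha^{pt}} p_2$ of \emph{every} block-length $t \ge 1$. I obtain $q_3,q_4$ the same way from $b_0 \Rightarrow_{s_\beta} b_1$ and $c_0 \Rightarrow_{s_\gamma} c_1$.

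I now assemble the pattern with $u' = uv^p$, $v' = v^{3p}$, $x' = x^{3p}$, $w' = v^{2p-1}w\,x^p$, $y' = x^p y$. The self-loops are the single-block loops iterated three times, $q_2 \xrightarrow{\alpha^{3p}} q_2$, $q_3 \xrightarrow{\beta^{3p}} q_3$, $q_4 \xrightarrow{\gamma^{3p}} q_4$, reading $\bsvector{v'}{v'}$, $\bsvector{x'}{v'}$, $\bsvector{x'}{x'}$; the prefix run $q_1 \xrightarrow{\bsvector{u}{u}} a_0 \xrightarrow{\alpha^{p}} q_2$ reads $\bsvector{u'}{u'}$ and the suffix run $q_4 \xrightarrow{\gamma^{p}} c_1 \xrightarrow{\bsvector{y}{y}} q_5$ reads $\bsvector{y'}{y'}$. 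For the first bridge I take $q_2 \xrightarrow{\alpha^{p}} a_1 \xrightarrow{B_1} b_0 \xrightarrow{\beta^{p}} q_3$, whose upper and lower components spell $v^p v^{p-1}w\,x^p = w'$ and $v^{3p} = v'$, so it reads $\bsvector{w'}{v'}$; for the second I take $q_3 \xrightarrow{\beta^{p}} b_1 \xrightarrow{B_2} c_0 \xrightarrow{\gamma^{p}} q_4$, with components $x^{3p} = x'$ and $v^p v^{p-1}w\,x^p = w'$, so it reads $\bsvector{x'}{w'}$. The one nontrivial check is that both bridges emit the \emph{same} word $w'$; this is what dictates the equal block-lengths flanking $B_1$ and $B_2$, and the freedom to choose them is precisely what the previous paragraph supplies. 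Finally $|v'| = |w'| = |x'| = 3p\,|v|$, and $v' \neq w'$ because $v' = v^{2p-1}\,v\,v^{p}$ and $w' = v^{2p-1}\,w\,x^{p}$ agree on the prefix $v^{2p-1}$ but then read the differing blocks $v$ and $w$. Hence $q_1 \threecp q_5$.

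It remains to transfer finality. If $\rho$ visits a final state while reading $\bsvector{u}{u}$, $B_1$, $B_2$, or $\bsvector{y}{y}$, then the prefix run, first bridge, second bridge, or suffix run already sees it. Otherwise the visit lies in some $\alpha^{pn}$, $\beta^{pn}$, or $\gamma^{pn}$ block, say the $\alpha$-block, so that $a_0 \stackrel{F}{\Rightarrow}_{s_\alpha} a_1$. Re-running the chain argument while tracking the final bit, I may choose $q_2$ so that the visit falls before its first occurrence ($a_0 \stackrel{F}{\Rightarrow}_{s_\alpha} q_2$), inside the loop ($q_2 \stackrel{F}{\Rightarrow}_{s_\alpha} q_2$), or after its last occurrence ($q_2 \stackrel{F}{\Rightarrow}_{s_\alpha} a_1$); in the middle case idempotency turns $q_2 \stackrel{F}{\Rightarrow}_{s_\alpha} q_2 \Rightarrow_{s_\alpha} a_1$ into $q_2 \stackrel{F}{\Rightarrow}_{s_\alpha} a_1$. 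Either way a final-visiting $\alpha^{pt}$-run of the length used above feeds the prefix run or the first bridge, and the $\beta$- and $\gamma$-blocks route identically into the two bridges, resp.\ into the second bridge and the suffix. So a non-loop run visits a final state and $q_1 \threecpf q_5$. The main obstacle throughout is the rigidity of the pattern, in which one word $w'$ must serve simultaneously as the upper half of $\bsvector{w'}{v'}$ and the lower half of $\bsvector{x'}{w'}$: reconciling its two occurrences is exactly what the idempotency observation---runs of every block-length---is there to enable.
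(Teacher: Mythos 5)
Your proof is correct and follows essentially the same route as the paper's: both exploit idempotency of the $p$-th powers in $\mathsf{TP}(\A)$, pigeonhole out loop states $q_2,q_3,q_4$ from the three long segments, and assemble the pattern with exactly the words $uv^p$, $v^{3p}$, $v^{2p-1}wx^p$, $x^{3p}$, $x^py$, handling finality via the $\smash{\stackrel{F}{\Rightarrow}}$-component. The only (cosmetic) difference is that you pigeonhole on an abstract unfolding $s_\alpha = s_\alpha^{|Q|+1}$ of the profile rather than on the concrete states visited by $\rho$ at block boundaries, which incidentally makes your argument independent of the value of $n$.
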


\begin{proof}
We will use the fact that $\bsvector{v^p}{v^p}$, $\bsvector{x^p}{v^p}$, and $\bsvector{x^p}{x^p}$ are idempotent in $\mathsf{TP}(\A)$.
Let us replace $v$ by $v^p$, $w$ by $v^{p-1}w$, and $x$ by $x^p$.
Now the words $\bsvector{v}{v}$, $\bsvector{x}{v}$, and $\bsvector{x}{x}$ are idempotent in $\mathsf{TP}(\A)$,
and $\rho$ becomes a run reading
\begin{equation}
	\label{eq:pn-pair}
	\bsvector{u}{u} \bsvector{v^n}{v^n} \bsvector{w}{v} \bsvector{x^n}{v^n} \bsvector{x}{w} \bsvector{x^n}{x^n} \bsvector{y}{y}.
\end{equation}
The sequence of $n+1$ states visited before and after reading each of the $n$ factors $\bsvector{v}{v}$
must contain a repeated state $q_2$, and similarly for the factors $\bsvector{x}{v}$ and $\bsvector{v}{v}$.
Therefore we find intermediate states $q_2,q_3,q_4$ so that $\rho$ has the form
\begin{equation}
\label{eq:3cp-decomposed}
\begin{array}{c c}
\rho_1 \colon q_1 \xrightarrow{\bsvector{u v^{i_1}}{u v^{i_1}}} q_2, &
\sigma_2 \colon q_2 \xrightarrow{\bsvector{v^{i_2}}{v^{i_2}}} q_2, \\
\rho_2 \colon q_2 \xrightarrow{\bsvector{v^{i_3}}{v^{i_3}} \bsvector{w}{v} \bsvector{x^{j_1}}{v^{j_1}}} q_3, &
\sigma_3 \colon q_3 \xrightarrow{\bsvector{x^{j_2}}{v^{j_2}}} q_3 \\
\rho_3 \colon q_3 \xrightarrow{\bsvector{x^{j_3}}{v^{j_3}} \bsvector{x}{w} \bsvector{x^{k_1}}{x^{k_1}}} q_4, &
\sigma_4 \colon q_4 \xrightarrow{\bsvector{x^{k_2}}{x^{k_2}}} q_4, \\
\rho_4 \colon q_4 \xrightarrow{\bsvector{x^{k_3} y}{x^{k_3} y}} q_5 &
\end{array}
\end{equation}
for some numbers $i_1,j_1,k_1,i_3,j_3,k_3 \ge 0$ and $i_2,j_2,k_2 \ge 1$.
Since $\bsvector{v}{v}$, $\bsvector{x}{v}$, and $\bsvector{x}{x}$ are idempotent in $\mathsf{TP}(\A)$,
there exist runs $\tilde \rho_1, \tilde \sigma_2, \tilde \rho_2, \tilde \sigma_3, \tilde \rho_3, \tilde \sigma_4, \tilde \rho_4$
as in \Cref{eq:3cp-decomposed} for $i_\ell = j_\ell = k_\ell = 1$ for all $\ell \in [1,3]$.
Then the five words $uv, v^3, vwx, x^3, xy$ form the required 3-cycles pattern from $q_1$ to $q_5$.

Assume that $\rho$ visits a final state.
We can ensure that the final state occurs in one of the subruns $\rho_i$ in \Cref{eq:3cp-decomposed}:
If the final state occurs in one of the cycles $\sigma_i$ then we can append the cycle $\sigma_i$ to $\rho_i$.
By the $\stackrel{F}{\Rightarrow}$-component of transition profiles we can then choose the run $\tilde \rho_i$
to visit a final state again, and therefore $q_1 \threecpf q_5$.
\end{proof}

The next lemma shows that a 3-cycles pattern can be used to detect unbounded cliques in $\bar E$ where the words differ in the finite prefix.
\begin{lemma}\label{lem:3-cycles}
$\bar E$ contains cliques $(u v^i w x^{n-i} y z^\omega)_{0 \le i \le n}$
for all $n \in \N$ where $u,v,w,x,y,z \in \Sigma^*$ with $|v| = |w| = |x| > 0$ and $v \ne w$ if and only if
there is a 3-cycles pattern in $\A$ from $q_0$ to some state $q \in Q$
such that $\bsvector{{z'}^\omega}{{z'}^\omega}$ is accepted from $q$ for some word $z' \in \Sigma^*$.
\end{lemma}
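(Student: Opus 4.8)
The plan is to prove the two directions separately, using \Cref{lem:existence-3-cycles} as the bridge between cliques and the purely syntactic 3-cycles pattern. For the ``if'' direction, I would start from a 3-cycles pattern from $q_0$ to some $q$, together with a word $z'$ such that $\bsvector{{z'}^\omega}{{z'}^\omega}$ is accepted from $q$. By definition of the pattern, for every $i < j$ the automaton $\A$ has a $q_0$-$q$-run reading $\bsvector{uv^iwx^{n-i}y}{uv^jwx^{n-j}y}$ (follow the $v$-cycle $i$ times, cross to $q_3$, follow the $x$-cycle, cross to $q_4$, etc., on each coordinate independently), and appending the accepting $\bsvector{{z'}^\omega}{{z'}^\omega}$-run from $q$ produces an accepting run of $\A$ witnessing $(uv^iwx^{n-i}y{z'}^\omega, uv^jwx^{n-j}y{z'}^\omega) \in \bar E$. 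Since the words $uv^iwx^{n-i}y$ are pairwise distinct (because $|v|=|x|$ and $v \ne w$ forces distinctness across different $i$), this is precisely a clique of the desired form $(u v^i w x^{n-i} y z^\omega)_{0 \le i \le n}$ with $z = z'$.

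For the ``only if'' direction, I would assume $\bar E$ contains the cliques $(u v^i w x^{n-i} y z^\omega)_{0 \le i \le n}$ for all $n$, with the stated length constraints. The idea is to run the finite monoid / pigeonhole argument encapsulated in \Cref{lem:existence-3-cycles}. Fix the idempotent exponent $p$ of $\mathsf{TP}(\A)$ and take $n$ large (growing with $p$). For a suitable pair of indices in the size-$n$ clique, the acceptance of $(u v^i w x^{n-i} y z^\omega, u v^j w x^{n-j} y z^\omega) \in \bar E$ gives an accepting run of $\A$ on the convolution $\bsvector{uv^iwx^{n-i}y z^\omega}{uv^jwx^{n-j}y z^\omega}$. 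The key step is to choose $i$ and $j$ so that the finite prefix of this convolution matches (after the $v \mapsto v^p$, $w \mapsto v^{p-1}w$, $x \mapsto x^p$ renaming) exactly the word fed to \Cref{lem:existence-3-cycles}, namely
\[
	\bsvector{u}{u} \bsvector{v^{pn'}}{v^{pn'}} \bsvector{v^{p-1}w}{v^p} \bsvector{x^{pn'}}{v^{pn'}} \bsvector{x^p}{v^{p-1}w} \bsvector{x^{pn'}}{x^{pn'}} \bsvector{y}{y}.
\]
Here I would pick $i$ and $j$ to differ by exactly one block so that in the first coordinate the $w$ occurs after $i$ copies of $v$ while in the second coordinate it occurs after $j$ copies of $v$, producing the staggered shape above; the remaining tails of both coordinates coincide and read $z^\omega$. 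Let $q$ be the state reached after the common finite prefix and before the two $z^\omega$-tails diverge. Applying \Cref{lem:existence-3-cycles} to the $q_0$-$q$ portion yields $q_0 \threecp q$, a 3-cycles pattern from $q_0$ to $q$. Finally, the suffix of the accepting run reads $\bsvector{z^\omega}{z^\omega}$ from $q$, giving the required $z' = z$ with $\bsvector{{z'}^\omega}{{z'}^\omega}$ accepted from $q$.

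The main obstacle I anticipate is the bookkeeping in the ``only if'' direction: aligning the convolution $\bsvector{uv^iwx^{n-i}y}{uv^jwx^{n-j}y}$ of two clique words with the precise idempotent-laden input word that \Cref{lem:existence-3-cycles} demands. The subtlety is that the two coordinates have their $w$-transition (the place where the $v$-block switches to the $x$-block) at \emph{different} positions since $i \ne j$, so the convolution naturally splits into a region where both coordinates read $v$, a region where the first reads $x$ while the second still reads $v$, and a region where both read $x$ — exactly the three-zone structure encoded by the three cycles $\bsvector{v}{v}$, $\bsvector{x}{v}$, $\bsvector{x}{x}$. I would need to choose the gap $j - i$ proportional to $p$ (and $n$ large enough) so that each zone contains at least $pn'$ synchronized letters, matching the exponents in \Cref{lem:existence-3-cycles}; the length equalities $|v| = |w| = |x|$ are exactly what guarantee the two coordinates stay synchronized letter-by-letter throughout. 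Once the alignment is set up correctly, the two directions are each short, with all the combinatorial work delegated to the earlier lemmas.
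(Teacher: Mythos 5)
Your proposal is correct and follows essentially the same route as the paper: the ``if'' direction assembles the convolutions $\bsvector{uv^iwx^{n-i}y}{uv^jwx^{n-j}y}\bsvector{{z'}^\omega}{{z'}^\omega}$ directly from the pattern, and the ``only if'' direction picks a suitable pair from a sufficiently large clique whose convolution is exactly the idempotent-laden word required by \Cref{lem:existence-3-cycles} and applies that lemma to the prefix run ending at the state $q$ from which $\bsvector{z^\omega}{z^\omega}$ is accepted. The alignment bookkeeping you flag is handled in the paper by the explicit choice $\bsvector{u(v^p)^nv^{p-1}w(x^p)^{2n+1}y}{u(v^p)^{2n+1}v^{p-1}w(x^p)^ny}$ with $n=|Q|$, matching your remark that the gap $j-i$ must be proportional to $p$ rather than equal to one block.
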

\begin{proof}
We first observe that $\bar E$ contains cliques $(u v^i w x^{n-i} y z^\omega)_{0 \le i \le n}$ as on the LHS of the lemma if and only if
\begin{equation}\label{eq:3-cycles}
\bsvector{u}{u}\bsvector{v}{v}^*\bsvector{w}{v}\bsvector{x}{v}^*\bsvector{x}{w}\bsvector{x}{x}^*\bsvector{y}{y}\bsvector{z}{z}^\omega \subseteq L(\A)
\end{equation}
for some $u,v,w,x,y,z \in \Sigma^*$ with $|v| = |w| = |x| > 0$ and $v \ne w$.
Then the ``if'' direction of the lemma follows directly.
For the ``only if'' direction assume that \Cref{eq:3-cycles} holds.
Let $n \define= |Q|$ and $p \in \N$ be the idempotent exponent of $\mathsf{TP}(\A)$.
Then there is a run of $\A$ on $\bsvector{u(v^p)^nv^{p-1}w(x^p)^{2n+1}y}{u(v^p)^{2n+1}v^{p-1}w(x^p)^ny}$
from $q_0$ to some state $q \in Q$ such that $\bsvector{z^\omega}{z^\omega}$ is accepted from $q$.
Applying \Cref{lem:existence-3-cycles} yields the desired 3-cycles pattern.
\end{proof}

The following lemma shows which pattern occurs if the words differ in the periodic part.
In \Cref{lem:3-cycles3} we will see that this pattern is also sufficient to show the existence of unbounded cliques.
\begin{lemma}\label{lem:3-cycles2}
If $\bar E$ contains cliques $(z(u v^i w x^{n-i} y)^\omega)_{0 \le i \le n}$
for all $n \in \N$ where $u,v,w,x,y,z \in \Sigma^*$ with $|v| = |w| = |x| > 0$ and $v \ne w$, then
there are states $q_1,\dots,q_\ell \in Q$ such that
\begin{itemize}
\item $q_0 \xrightarrow{\bsvector{z}{z}} q_1$,
\item $q_1 \threecp q_2 \threecp q_3 \threecp \cdots \threecp q_{\ell-1} \threecpf q_\ell$,
\item $q_k = q_\ell$ for some $k < \ell$.
\end{itemize}
\end{lemma}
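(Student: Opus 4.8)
The plan is to extract the required chain directly from an accepting run on one of the clique words, converting each \emph{period} of that run into a $3$-cycles pattern via \Cref{lem:existence-3-cycles}. Let $p$ be the idempotent exponent of $\mathsf{TP}(\A)$ and set $m := |Q|$. The key idea is to choose the clique size and a single pair of clique indices so that the convolution of one period is \emph{exactly} the padded word appearing in \Cref{lem:existence-3-cycles}, whose free parameter I instantiate to $m$ so that the pigeonhole inside that lemma fires.

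First I would spell out the convolution of $\bsvector{P_i}{P_j}$, where $P_\iota = u v^\iota w x^{N-\iota} y$ and $i < j$. Since $|v| = |w| = |x|$, all $P_\iota$ have the same length, and reading block-by-block in chunks of length $|v|$ gives
\[
\bsvector{P_i}{P_j} = \bsvector{u}{u}\bsvector{v}{v}^{i}\bsvector{w}{v}\bsvector{x}{v}^{\,j-i-1}\bsvector{x}{w}\bsvector{x}{x}^{\,N-j}\bsvector{y}{y}.
\]
Choosing $N := 3pm + 2p - 1$, $i := pm + p - 1$, and $j := 2pm + 2p - 1$ makes this word coincide with the padded word of \Cref{lem:existence-3-cycles} (with free parameter $m$), as a short count of the exponents on the three cycle factors $\bsvector{v}{v}$, $\bsvector{x}{v}$, $\bsvector{x}{x}$ confirms. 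By hypothesis $\bar E$ contains the clique $(zP_\iota^\omega)_{0 \le \iota \le N}$, so $(zP_i^\omega, zP_j^\omega) \in \bar E$ and hence $\A$ accepts $\bsvector{z}{z}\bsvector{P_i}{P_j}^\omega$.

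Next I would fix an accepting run $\rho$ of $\A$ on $\bsvector{z}{z}\bsvector{P_i}{P_j}^\omega$ and track the states $q_1, q_2, q_3, \dots$ reached at the period boundaries, so that $q_0 \xrightarrow{\bsvector{z}{z}} q_1$ and each segment $q_t \to q_{t+1}$ reads one copy of $\bsvector{P_i}{P_j}$, i.e.\ the padded word. Applying \Cref{lem:existence-3-cycles} to each such segment yields $q_t \threecp q_{t+1}$, and to a segment that visits a final state it yields $q_t \threecpf q_{t+1}$. Since $\rho$ is accepting, infinitely many period segments visit $F$; as $Q$ is finite, some state $s$ occurs infinitely often as the endpoint $q_{t+1}$ of an $F$-visiting segment. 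I would then pick $\ell$ with $q_\ell = s$ and the segment $q_{\ell-1} \to q_\ell$ visiting $F$, together with an earlier index $k < \ell$ satisfying $q_k = s$ (which exists since $s$ recurs infinitely often). This delivers exactly the asserted chain $q_1 \threecp \cdots \threecp q_{\ell-1} \threecpf q_\ell$ with $q_0 \xrightarrow{\bsvector{z}{z}} q_1$ and the repeated state $q_k = q_\ell$.

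The main obstacle is the bookkeeping in the second paragraph: one must pick $N$, $i$, $j$ so that a \emph{single} period of the periodic word is the padded word of \Cref{lem:existence-3-cycles}, which forces the three cycle regions to each have length a multiple of $p$ times at least $|Q|$, so that every period segment is individually eligible for that lemma. Everything else—the block decomposition of the convolution and the double pigeonhole that places an $F$-visit on the last edge of the chain—is routine once the parameters are fixed.
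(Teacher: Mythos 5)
Your proposal is correct and follows essentially the same route as the paper: both proofs choose a clique size and a pair of indices so that a single period of the convolution $\bsvector{z}{z}\bsvector{P_i}{P_j}^\omega$ is exactly the padded word of \cref{lem:existence-3-cycles} (with the pigeonhole parameter set to $|Q|$), then track the states at period boundaries, turn each segment into a $\threecp$-edge, and use a pigeonhole argument to place a final-state visit on the last edge while forcing a repeated endpoint. Your version merely makes the parameter bookkeeping ($N = 3p|Q|+2p-1$, $i$, $j$) and the selection of $k < \ell$ more explicit than the paper does.
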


\begin{proof}
Suppose that $\bar E$ contains cliques $(z (u v^i w x^{n-i} y)^\omega)_{0 \le i \le n}$
with $|v| = |w| = |x| > 0$ and $v \ne w$.
Let $t$ be the word from \Cref{eq:pn-pair}.
Since $\bsvector{z}{z}t^\omega$ is accepted by $\A$, it has an accepting run of the form
\[q_0 \xrightarrow{\bsvector{z}{z}} q_1 \xrightarrow{t} q_2 \xrightarrow{t} q_3 \xrightarrow{t} \cdots. \]
Let $m \in \N$ such that $\{q_0, \dots, q_m\} = \{q_i \mid i \in \N \}$,
i.e.\ all states $q_i$ have been visited at least once after reaching $q_m$.
Since the run visits some final state infinitely often, there exists $\ell > m$ such that
the subrun between $q_{\ell-1}$ and $q_\ell$ visits a final state.
Furthermore, there exists $k \le m$ such that $q_k = q_\ell$.
\end{proof}

\begin{lemma}\label{lem:3-cycles3}
If there are states $q_1,\dots,q_\ell \in Q$ such that
\begin{itemize}
\item $q_0 \xrightarrow{\bsvector{z}{z}} q_1$,
\item $q_1 \threecp q_2 \threecp q_3 \threecp \cdots \threecp q_{\ell-1} \threecpf q_\ell$,
\item $q_k = q_\ell$ for some $k < \ell$,
\end{itemize}
then $\bar E$ contains unbounded cliques.
\end{lemma}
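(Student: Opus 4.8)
The plan is to show that the structural pattern in the automaton $\A$ forces the language $L(\A)$ to contain an infinite family of $\omega$-words that form unbounded cliques in $\bar E$. Concretely, I would produce, for every $n \in \N$, a clique of size roughly $n$ of the form $(z\,(u v^i w x^{n-i} y)^\omega)_{0 \le i \le n}$ (up to replacing $v,w,x$ by suitable idempotent powers), and then invoke \Cref{lem:unbounded-cliques} to conclude that $\bar E$ contains an infinite clique. The difference from \Cref{lem:3-cycles} is that here the clique words differ in their \emph{periodic} part, so a single 3-cycles pattern is not enough; instead I must chain together the 3-cycles patterns $q_1 \threecp q_2 \threecp \cdots \threecp q_{\ell-1} \threecpf q_\ell$ and exploit the return edge $q_k = q_\ell$ to build a \emph{cyclic} run that can be iterated infinitely often while always revisiting a final state.

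First I would unfold what each arrow $q_a \threecp q_b$ gives us. By the definition of the 3-cycles pattern (and as noted right after it), an arrow $q_a \threecp q_b$ yields, for all $i < j \le n$, a $q_a$-$q_b$-run reading $u_a v_a^i w_a x_a^{n-i} y_a \otimes u_a v_a^j w_a x_a^{n-j} y_a$ with $|v_a| = |w_a| = |x_a| > 0$ and $v_a \ne w_a$. The key point is that this is a convolution of two distinct finite words whose \emph{lengths agree} (since $|v_a|=|w_a|=|x_a|$), so we can freely concatenate runs from consecutive arrows, as long as the number of iterations in each segment is chosen consistently. I would first reduce to a uniform block length: by replacing the constituent words along each arrow by common powers (exactly the idempotent-power trick of \Cref{lem:existence-3-cycles}), arrange that all the blocks read along the chain have the same length, so that the diagonal convolutions line up and compose into a single run on $\bsvector{t}{t'}$ for appropriate finite words $t \ne t'$ of equal length.

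Next I would assemble the infinite run. Starting from $q_0 \xrightarrow{\bsvector{z}{z}} q_1$, follow the chain to $q_\ell$; the identity $q_k = q_\ell$ means the segment from $q_k$ to $q_\ell$ is a loop at a common state, so I can iterate it indefinitely. Because the segment $q_{\ell-1} \threecpf q_\ell$ is \emph{final}, each traversal of this loop visits a final state, so iterating it produces an accepting (Büchi) run on an $\omega$-word of the form $\bsvector{z}{z}\,(\text{prefix})\,(\text{loop convolution})^\omega$. Choosing, for each pair of indices $i<j$ in the intended clique of size $n$, the run that reads $(uv^iwx^{n-i}y)$ on the first tape and $(uv^jwx^{n-j}y)$ on the second within each block of the loop, I get that $z\,(uv^iwx^{n-i}y)^\omega$ and $z\,(uv^jwx^{n-j}y)^\omega$ stand in relation $\bar E$. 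Since the words $uv^iwx^{n-i}y$ are pairwise distinct for $0 \le i \le n$, their $\omega$-periodic extensions are pairwise distinct, so this is a genuine clique of size $n+1$, and letting $n \to \infty$ gives unbounded cliques, hence by \Cref{lem:unbounded-cliques} an infinite clique.

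The main obstacle I anticipate is the bookkeeping that guarantees the \emph{same} choice of $i$ (resp.\ $j$) is used consistently in every block along the whole chain and in every iteration of the final loop, so that a single coherent $\omega$-word $z(uv^iwx^{n-i}y)^\omega$ is actually read on each tape. The 3-cycles patterns on different arrows of the chain come with a priori different words $u_a,v_a,w_a,x_a,y_a$, and to get one clique I need them to be synchronized into the common words $u,v,w,x,y$ promised in the statement; reconciling these — essentially showing that the chain of patterns, read along the repeating loop $q_k \to \cdots \to q_\ell = q_k$, collapses to iterated powers of a single $uv^iwx^{n-i}y$ — is the delicate step. Once the block lengths are equalized via idempotents and the loop structure is made explicit, the finality condition and the distinctness $v \ne w$ take care of the Büchi-acceptance and the pairwise-distinctness requirements, respectively.
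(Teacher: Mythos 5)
Your core construction is the paper's: concatenate the runs supplied by each arrow $q_j \threecp q_{j+1}$, use $q_k = q_\ell$ to iterate the segment from $q_k$ to $q_\ell$ forever, and use the final arrow $q_{\ell-1} \threecpf q_\ell$ to make every iteration of that loop visit a final state, so that for each pair $i < i'$ you obtain an accepting Büchi run on the convolution of the $i$-th and $i'$-th clique word. That part is sound. (Your worry about equalizing block lengths across different arrows is also unnecessary: within each single pattern the two tapes already read equal-length words, because $|v_j| = |w_j| = |x_j|$, so the per-arrow runs concatenate with no further alignment.)

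The genuine gap is the step you yourself flag as ``delicate'': you insist on collapsing the chained patterns into cliques of the exact shape $(z(uv^iwx^{n-i}y)^\omega)_{0 \le i \le n}$ so that you can invoke \cref{lem:unbounded-cliques}. This cannot be done in general --- the arrows of the chain carry genuinely different quintuples $u_j,v_j,w_j,x_j,y_j$, and the loop from $q_k$ to $q_\ell$ reads the concatenation $t_k(i,n) \cdots t_{\ell-1}(i,n)$ of several such blocks, where $t_j(i,n) = u_j v_j^i w_j x_j^{n-i} y_j$; this concatenation is not of the form $u v^i w x^{n-i} y$ for any fixed $u,v,w,x,y$. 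But it also does not need to be done: the conclusion of the lemma is only that $\bar E$ contains \emph{unbounded} cliques, and the family $\bigl(z\, t_1(i,n) \cdots t_{k-1}(i,n)\,(t_k(i,n) \cdots t_{\ell-1}(i,n))^\omega\bigr)_{0 \le i \le n}$ is already a clique of size $n+1$ for every $n$: pairwise membership in $\bar E$ comes from the chained accepting runs, and pairwise distinctness comes from $v_1 \ne w_1$ with $|v_1| = |w_1|$, which makes the first blocks $t_1(i,n)$ and $t_1(i',n)$ differ already at position $|u_1| + i|v_1|$ for $i < i'$. The role of \cref{lem:unbounded-cliques} in the overall argument is the converse direction (extracting the two special clique shapes \emph{from} an infinite clique), not certifying that a given family of unbounded cliques counts. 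Dropping the unachievable normalization closes the gap and yields exactly the paper's proof.
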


\begin{proof}
Let $u_j,v_j,w_j,x_j,y_j \in \Sigma^*$ be the words of the 3-cycles pattern 
from $q_j$ to $q_{j+1}$ for $j \in [1,\ell-1]$.
Define $t_j(i,n) \define = u_j v_j^i w_j x_j^{n-i} y_j$ for all $0 \le i \le n$ and $1 \le j < \ell$.
Then
\[(t_1(i,n) \cdots t_{k-1}(i,n) \big (t_k(i,n) \cdots t_\ell(i,n) \big)^\omega)_{0 \le i \le n}\]
forms a clique in $\bar E$ for each $n \in \N$.
\end{proof}

We are now ready to prove \cref{thm:omega-infinite-clique}.
Since $\bar E$ has an infinite clique if and only if it has unbounded cliques,
by \Cref{lem:unbounded-cliques,lem:3-cycles,lem:3-cycles2,lem:3-cycles3} 
it suffices to check whether $\A$ contains the pattern in \Cref{lem:3-cycles} or the pattern in \Cref{lem:3-cycles2,lem:3-cycles3}.

First, we can check in $\NL$ whether, given states $p,q \in Q$,
there exists a word $z \in \Sigma^*$ with $p \xrightarrow{\bsvector{z}{z}} q$,
and whether there exists a word $z \in \Sigma^*$ such that $\bsvector{z^\omega}{z^\omega}$ is accepted from $q$.
Furthermore, given two states $q_1, q_5 \in Q$, we can check whether $q_1 \threecp q_5$ in $\NL$ as follows:
Construct in logspace an NFA $\A_{q_1,q_5}$ which reads a convolution $u \otimes v \otimes w \otimes x \otimes y$
with $|v| = |w| = |x|$ and $v \ne w$.
It initially guesses and stores the states $q_2,q_3,q_4$.
Then, it simulates 7 copies of $\A$ in parallel to check the existence of the runs as in the definition of a 3-cycles pattern.
Observe that $q_1 \threecp q_5$ if and only if $\A_{q_1,q_5}$ accepts some word,
hence, it suffices to check nonemptiness of $\A_{q_1,q_5}$ in \NL.
Similarly, we can test $q_1 \threecpf q_5$ in \NL.
This allows us to detect the pattern in \Cref{lem:3-cycles} and the pattern in \Cref{lem:3-cycles2,lem:3-cycles3} in $\NL$.

$\NL$-hardness follows by a reduction from the finite word case~\cite{BGLZ22} by padding.


\paragraph*{Application to monadic decomposability over $\omega$-automatic structures}
Recognizability of relations over words is closely connected to the notion of monadic decomposition.
A formula $\varphi(x_1,\dots,x_k)$ is \emph{monadically decomposable} over a logical structure $\mathfrak{A}$
if $\varphi$ is equivalent to a Boolean combination of monadic formulas, i.e.\ formulas with a single free variable~\cite{VeanesBNB17}.
While generally undecidable, Libkin provides sufficient conditions on $\mathfrak{A}$
under which the problem of testing monadic decomposability becomes decidable~\cite[Theorem~3]{Libkin03}.
Under these conditions a formula $\varphi$ defining a $k$-ary relation $R$ is monadically decomposable if and only if
$\approx_{[1,j]}^R$ has finite index for all $j < k$ \cite[Lemma~4]{Libkin03}.
In its generality the algorithm in~\cite{Libkin03} is not very efficient since it
uses an unstructured enumeration procedure to find so called \emph{definable invariant Skolem functions}.

There is a more straightforward procedure for monadic decomposability
if $\mathfrak{A}$ is $\omega$-automatic~\cite{BlumensathG00}, i.e.\ its domain and relations are given by $\omega$-synchronous automata.
In this setting we can translate $\varphi$ into an $\omega$-synchronous automaton for the relation $R$ over infinite words
defined by $\varphi$.
Then, we construct automata for the relations $\not\approx_{[1,j]}^R$
and solve the infinite clique problem by~\Cref{thm:omega-infinite-clique}.
In fact, if $\varphi$ is quantifier-free (this assumption is also made in \cite{VeanesBNB17}),
the automata for $\not\approx_{[1,j]}^R$ are constructible in polynomial space.
Combined with the $\NL$-algorithm for the infinite clique problem,
this yields $\PSPACE$-complexity for testing monadic decomposability.

An example for an $\omega$-automatic structure that satisfies the conditions of \cite[Theorem~3]{Libkin03}
is real linear arithmetic (RLA) $(\R;+,<,0,1)$.
Its extension $(\R;\Z,+,<,0,1)$ to mixed real-integer linear arithmetic (RILA)
is still $\omega$-automatic \cite{BoigelotBR97}, but it is not immediately clear whether it fulfills the conditions of \cite[Theorem~3]{Libkin03}.
However, we can use the fact that in the (standard) $\omega$-automatic presentation of RILA
ultimately periodic words are rational numbers and therefore definable in RILA.
\begin{lemma}\label{lem:up-mondec-index}
Let $\mathfrak{A}$ be an $\omega$-automatic structure with an $\omega$-automatic presentation over alphabet $\Sigma$
such that each ultimately periodic word over $\Sigma$ that represents an element in the domain is definable in $\mathfrak{A}$.
Then a formula $\varphi$ in $\mathfrak{A}$ defining the relation $R \subseteq (\Sigma^\omega)^k$ is monadically decomposable if and only if
$\approx_j^R$ has finite index for all $j \in [1,k]$.
\end{lemma}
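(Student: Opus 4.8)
The plan is to connect monadic decomposability of $\varphi$ to $\omega$-recognizability of the relation $R \subseteq L_D^k$ that $\varphi$ defines on representatives, and then invoke \cref{prop:rat-fin-idx}. Here $L_D \subseteq \Sigma^\omega$ is the $\omega$-regular domain of the presentation and $\nu \colon L_D \to \mathfrak A$ the representation map; since $\mathfrak A$ is $\omega$-automatic and $\varphi$ is first-order, $R$ is $\omega$-synchronous, so \cref{prop:rat-fin-idx} applies. The ``only if'' direction is immediate: if $\varphi$ equals a Boolean combination of monadic formulas $\psi_s(x_{j_s})$, then each $\psi_s$ pulls back to an $\omega$-regular cylinder condition, so $R$ --- being the matching Boolean combination of these cylinders intersected with $L_D^k = \bigcap_j \{\bm x : x_j \in L_D\}$ --- is a finite Boolean combination of $\{j\}$-relations, hence conforms to the discrete partition of $[1,k]$. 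As the discrete partition is generated by the singletons $\{1\}, \dots, \{k\}$, \cref{lem:conforms-inf-index} yields that $\approx_j^R$ has finite index for all $j$.

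For the converse, suppose all $\approx_j^R$ have finite index. By \cref{prop:rat-fin-idx}, $R$ is $\omega$-recognizable, say $R = \bigcup_i L_{i,1} \times \dots \times L_{i,k}$ with $\omega$-regular $L_{i,j}$. For each coordinate $j$ the atoms of the finite Boolean algebra generated by $L_{1,j}, \dots, L_{n,j}$ are $\omega$-regular, and each $\approx_j^R$-class is a union of such atoms (membership in $R$ depends on the $j$-th entry only through the languages $L_{i,j}$ containing it), so every $\approx_j^R$-class is $\omega$-regular. Since each nonempty $\omega$-regular set contains an ultimately periodic word, I may pick for every $\approx_j^R$-class that meets $L_D$ an ultimately periodic representative $r_{j,\ell} \in L_D$; by the standing hypothesis $\nu(r_{j,\ell})$ is definable by a monadic formula $\theta_{j,\ell}$.

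The heart of the argument is a family of monadic \emph{probing} formulas. Fixing a coordinate $j$ and a class $\beta_{j'}$ for each $j' \ne j$, I substitute the definable representatives $r_{j',\beta_{j'}}$ into all coordinates of $\varphi$ except the $j$-th:
\[ \chi_{j,\bm\beta}(x) \;\equiv\; \exists (y_{j'})_{j' \ne j}\Big[\,\textstyle\bigwedge_{j' \ne j}\theta_{j',\beta_{j'}}(y_{j'})\ \wedge\ \varphi(x \odot_j \bm y)\,\Big], \]
where $\bm y = (y_{j'})_{j' \ne j}$. For $x \in L_D$ this holds at $\nu(x)$ precisely when $x \odot_j \bm z \in R$ for the probe tuple $\bm z = (r_{j',\beta_{j'}})_{j' \ne j}$. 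Because $R$ is recognizable, the membership $x \odot_j \bm z \in R$ depends only on the $\approx_j^R$-class of $x$, and the finitely many probes separate all classes (one representative per class-tuple suffices in the definition of $\approx_j^R$); hence the $\approx_j^R$-class of $x$ is determined by the finite profile $(\chi_{j,\bm\beta}(\nu(x)))_{\bm\beta}$. Consequently ``$x$ lies in class $\ell$'' is the Boolean combination $\Theta_{j,\ell}(x)$ of the $\chi_{j,\bm\beta}$ matching the constant profile of $r_{j,\ell}$, a monadic formula. Writing $R = \bigcup_{\bm c \in S} \prod_{j} c_j$ over a finite set $S$ of class-tuples finally gives $\varphi \equiv \bigvee_{\bm c \in S}\bigwedge_{j=1}^k \Theta_{j,c_j}(x_j)$, a monadic decomposition.

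The main obstacle is exactly this ``if'' direction: the $\omega$-recognizable decomposition of $R$ only supplies \emph{languages of representatives}, which need not be definable in $\mathfrak A$. The hypothesis is what lets me turn the finitely many probe tuples into definable constants --- possible because recognizability collapses the universal quantifier over $\bm z$ in the definition of $\approx_j^R$ to a finite set of probes, and each probe can be chosen ultimately periodic since the classes are $\omega$-regular. One routine point to handle: $\approx_j^R$ is taken over all of $\Sigma^\omega$ rather than over $L_D$, but classes disjoint from $L_D$ contribute nothing to $R \subseteq L_D^k$ and may be dropped, while the words outside $L_D$ form a single additional class, so the index remains finite.
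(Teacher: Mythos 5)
Your proposal is correct, and its skeleton is the same as the paper's: obtain ultimately periodic representatives of the $\approx_j^R$-classes, use the hypothesis to turn them into definable constants, and write $\varphi$ as a disjunction over the class-tuples occurring in $R$ of conjunctions of monadic class-membership formulas. The differences are in the implementation, and in both places you do more work than necessary. First, you derive the ultimately periodic representatives by passing through the recognizable decomposition of $R$ (via \cref{prop:rat-fin-idx}) and the atoms of the Boolean algebra generated by the component languages; the paper instead just invokes the known fact that every finite-index $\omega$-synchronous equivalence relation admits a set of ultimately periodic representatives (each class is $\omega$-regular, hence contains an ultimately periodic word). Second, and more substantially, your probing formulas $\chi_{j,\bm\beta}$ --- together with the argument that finitely many representative probe tuples suffice to separate the classes --- can be bypassed entirely: since $\approx_j$ is itself first-order definable from $\varphi$ (namely $x \approx_j y$ iff $\forall \bm{z}\,(\varphi(x \odot_j \bm{z}) \leftrightarrow \varphi(y \odot_j \bm{z}))$), the statement $x_j \approx_j a_j$ with $a_j$ a definable constant is already a monadic formula, which is exactly what the paper uses. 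Your route is self-contained and yields a quantifier-reduced (purely existential-over-constants) decomposition, which is a mild bonus, but the paper's direct use of the definability of $\approx_j$ makes the proof a few lines long. Your handling of the $L_D$-versus-$\Sigma^\omega$ issue and of classes not meeting the domain is sound and is glossed over in the paper.
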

\begin{proof}
The ``only if'' direction is clear.
For the ``if'' direction assume that $\approx_j^R$ has finite index for all $j \in [1,k]$.
It was observed in~\cite[Proof of Lemma 3]{LodingS19} that every finite-index $\omega$-synchronous equivalence relation
has a set of ultimately periodic representatives.
Let $A_1, \dots, A_k$ be such representative sets for $\approx_1^R, \dots, \approx_k^R$.
Now, $\varphi(x_1,\dots,x_k)$ is equivalent to the following Boolean combination of monadic formulas:
\[
	\bigvee_{(a_1, \dots, a_k) \in R \cap (A_1 \times \cdots \times A_k)} \bigwedge_{j=1}^k x_j \approx_j a_j
\]
The statement $x_j \approx_j a_j$ is definable since $\approx_j$ is definable
and the ultimately periodic words in $A_j$ are definable in $\mathfrak{A}$.
\end{proof}
Therefore, under the assumption in \cref{lem:up-mondec-index} (e.g.\ for RILA) we can use the same approach from above
to decide monadic decomposability for quantifier-free formulas in polynomial space.

\section{Deciding Recognizability in $\DRat$}
\label{sec:rec}

In this section we will show how to test whether a deterministic rational relation is recognizable (\Cref{thm:drat-recognizability})
and, if so, how to construct an equivalent independent multitape automaton.
Notice that we can ignore the endmarkers in the definition of $\DRat$
since a relation $R$ is recognizable if and only if
$\{ \bm{w} (\mathrm{\dashv}, \dots, \mathrm{\dashv}) \mid \bm{w} \in R\}$ is recognizable.
Hence, for the rest of this section let $R \subseteq (\Sigma^*)^k$
with $R = R(\A)$ for some deterministic $k$-tape automaton $\A = (Q,\Sigma,q_0,\delta,F)$
with $n$~states.
Furthermore we assume that all states are reachable from $q_0$.
We also write $R_q$ for the relation recognized from state $q$, i.e.\ $R_q \define = R(\A_q)$ where $\A_q = (Q,\Sigma,q,\delta,F)$.

\subsection{Witness for nonrecognizability}

To decide whether $R$ is recognizable it suffices to check
whether the equivalence relations $\approx_1, \dots, \approx_{k-1}$ have finite index by \Cref{prop:rat-fin-idx}.
To keep notation clean in the following we will focus on how to test whether $\approx_1$ has finite index.
By permuting the components of $R$ we can reduce testing finite-index of any $\approx_j$ to the $\approx_1$-case.

We provide equivalent characterizations (\cref{lem:carton}) of when $\approx_1$ has infinite index,
which will be used for the decision procedures in \cref{thm:drat-recognizability}.
The characterizations will be deduced from the proof of Lemma~3.5 in \cite{CartonCG06},
which states that, if $\approx_1$ has finite index, then any word is $\approx_1$-equivalent to a word
whose length is exponentially bounded in $n$.
We need a few definitions from \cite{CartonCG06}.
A nonempty word $v_1 \in \Sigma^+$ is {\em null-transparent} if
for all $s,t \in Q_1$ we have $s \xrightarrow{(v_1,\bm{\varepsilon})} t$ implies
$t \xrightarrow{(v_1,\bm{\varepsilon})} t$.
In other words, $v_1$ induces an \emph{idempotent} transformation on $Q_1$.
Since every element $m$ in a finite monoid has an idempotent power $m^\ell$,
every non-empty word $v_1$ has a null-transparent power $v_1^\ell$.
We call a run $s \xrightarrow{(x,\bm{z})} t$ an {\em $N$-path} if the run switches from $Q_1$ to $Q \setminus Q_1$
at most $N$ times. A nonempty word $y \in \Sigma^+$ is called {\em $N$-invisible in the context of $x \in \Sigma^*$} if 
any $N$-path $s \xrightarrow{(x,\bm{z})} t$ with $t \in Q_1$ implies $t \xrightarrow{(y,\bm{\varepsilon})} t$.

\begin{lemma}[{\cite[Lemma~3.4]{CartonCG06}}]
	\label{lem:valiant}
	Let $n$ be the number of states of $\A$ and let $u_1 \cdots u_\ell \in \Sigma^*$ be a product of $\ell$ nonempty words.
	\begin{enumerate}
		\item If $\ell > n!$ then some factor $u_{i+1} \cdots u_j$ is null-transparent.
		\item If $\ell > 2(Nn)^N$ then some factor $u_{i+1} \cdots u_j$ is $N$-invisible in the context of $u_1 \dots u_i$. \label{it:invisible}
	\end{enumerate}
\end{lemma}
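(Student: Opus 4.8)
The plan is to route both items through the monoid of partial transformations that words induce on $Q_1$ by reading the first tape. For $v \in \Sigma^*$ let $\phi_v \colon Q_1 \rightharpoonup Q_1$ send $s$ to the unique $t$ with $s \xrightarrow{(v,\bm{\varepsilon})} t$; this is defined precisely when the induced run stays inside $Q_1$, which it must, since a transition leaving $Q_1$ would require reading a letter on an empty tape. The assignment $v \mapsto \phi_v$ is a morphism into the finite transformation monoid of $Q_1$, and because $\A$ is deterministic, null-transparency of $v_1$ is exactly idempotency of $\phi_{v_1}$ (i.e.\ $\phi_{v_1}(t) = t$ for all $t \in \mathrm{im}(\phi_{v_1})$). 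So item~1 reduces to finding a \emph{consecutive} factor $u_{i+1}\cdots u_j$ whose induced transformation is idempotent; this is the transducer analogue of the factor lemmas underlying the regularity tests of Stearns~\cite{Stearns67} and Valiant~\cite{Valiant75}.

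For item~1 I would exploit that the prefix transformations $\phi_{u_1\cdots u_i}$ have \emph{non-increasing} rank $|\mathrm{im}(\cdot)|$, so they stabilise at a minimal value $r$ from some index $a$ onward. In the constant-rank regime, reading a factor carries the current $r$-element image bijectively onto the next, so each factor acts as a permutation of an $r$-set. I would then pigeonhole over this permutation data: the relevant invariant of a prefix is an injection of the fixed set $\mathrm{im}(\phi_{u_1\cdots u_a})$ into $Q_1$ together with the permutation it realises, of which there are at most $|\mathrm{Sym}(r)| \le n!$. Two coinciding values yield a factor that fixes an $r$-set pointwise; ensuring the factor additionally has rank exactly $r$ (by anchoring it after the collapse already performed by $u_1\cdots u_a$) forces its image to equal that fixed set, so it is identity on its image, i.e.\ idempotent. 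This is the same pigeonhole one gets directly in $S_q$ when all prefixes are already permutations, and it is where the bound $n!$ originates.

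For item~2 I would generalise $\phi_v$ to account for bounded excursions out of $Q_1$. The natural object is an \emph{$N$-crossing profile} of a prefix $x = u_1\cdots u_i$: the record of which states $t \in Q_1$ are endpoints of $N$-paths reading $(x,\bm z)$ for some $\bm z$, organised by the alternating sequence of $Q_1$- and $(Q\setminus Q_1)$-segments (at most $N+1$ of the former). Such profiles compose along concatenation and their number is bounded by $(Nn)^N$. A pigeonhole over the $\ell$ prefixes, with the extra factor $2$ used to upgrade "the factor preserves the profile" to "the factor induces self-loops on all profiled states" (the same idempotency step as in item~1), produces a factor that is the identity on every $N$-path-reachable $Q_1$-state, i.e.\ $N$-invisible in the context of the preceding prefix; this is the meaning of the threshold $2(Nn)^N$.

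The main obstacle in both items is the same: producing a factor that is \emph{genuinely} idempotent (self-looping on \emph{all} of the relevant set), rather than merely the identity on a single stable image, while keeping the tight bounds $n!$ and $2(Nn)^N$. A plain Ramsey-style argument over the whole transformation monoid would give an idempotent factor but with a far worse bound, so the crux is to exploit the monotonicity of rank and the collapse structure to control the factor's rank precisely; for item~2 the additional difficulty is to define the $N$-crossing profile so that it is simultaneously finite of the stated size, closed under composition, and faithful to the notion of $N$-invisibility.
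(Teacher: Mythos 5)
First, a point of reference: the paper does not prove this lemma at all --- it is imported verbatim from \cite[Lemma~3.4]{CartonCG06} and used as a black box, so there is no in-paper proof to compare your attempt against; it can only be judged on its own terms. Your reduction of item~1 to idempotency in the monoid of partial transformations of $Q_1$ is correct, but the argument has a genuine gap exactly at the point you flag yourself, and the ``anchoring'' idea you offer to close it does not work. After the prefix rank stabilises at $r$, the pigeonhole over injections of $\mathrm{im}(\phi_{u_1\cdots u_a})$ into $Q_1$ gives indices $i<j$ such that $\phi_{u_{i+1}\cdots u_j}$ is the identity on the $r$-set $\mathrm{im}(\phi_{u_1\cdots u_i})$. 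But null-transparency quantifies over \emph{all} $s\in Q_1$: the factor $u_{i+1}\cdots u_j$ must be idempotent as a transformation of the whole of $Q_1$, and this factor does not contain the collapsing prefix $u_1\cdots u_a$, so ``anchoring after the collapse'' cannot force its rank down to $r$. Concretely, take $|Q_1|=3$, let $\phi_{u_1}$ be the constant map onto state $1$ and let $\phi_{u_2}=\dots=\phi_{u_\ell}$ be the map fixing $1$ and swapping $2$ and $3$: every prefix from index $1$ onwards has image $\{1\}$ and induces the same bijection on it, so your pigeonhole may return a factor such as $u_2$, which is not null-transparent ($2\to 3$ but not $3\to 3$). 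The lemma is still true for this product (both $u_1$ and $u_2u_3$ are null-transparent), but your argument does not locate a correct witness; the upgrade from ``identity on the stabilised image'' to ``idempotent on $Q_1$'' needs a genuinely different mechanism, not just bookkeeping.

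Item~2 inherits the same unproven upgrade and adds an unsubstantiated counting claim. You never pin down the ``$N$-crossing profile'': the natural object that composes under concatenation and is faithful to $N$-invisibility is a reachability relation on $Q_1\times\{0,\dots,N\}$ (state paired with remaining switch budget), of which there are $2^{(n(N+1))^2}$ --- vastly more than $(Nn)^N$. Hitting the stated threshold $2(Nn)^N$ requires a much more economical invariant, and the factor $2$ ``upgrade to self-loops on all profiled states'' is asserted rather than argued. As written, neither item is proved; to make this self-contained you would need to reconstruct the actual argument of \cite{CartonCG06} (which goes back to Stearns and Valiant) rather than reverse-engineer it from the bounds.
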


We say that a set $S$ \emph{separates} two sets $X$ and $Y$ if $X \subseteq S$ and $Y \cap S = \emptyset$,
or $Y \subseteq S$ and $X \cap S = \emptyset$.
If $X$ is a singleton $\{x\}$ we also say that $S$ separates $x$ and $Y$ (similarly for $Y$).

\begin{proposition}
	\label{lem:carton}
	The following conditions are equivalent:
	
	\begin{enumerate}
	\item $\approx_1$ has infinite index. \label{it:infinite}
	
	\item There exist words $x,y,z \in \Sigma^*$ such that
	$y$ is $nn!$-invisible in the context of $x$ and $xyz \not \approx_1 xz$. \label{it:remove-invisible}
	
	\item There exist $\bm{v},\bm{w} \in (\Sigma^*)^k$ and a state $q \in Q$ such that
	$q \xrightarrow{\bm{v}} q$,
	$v_1$ is null-transparent,
	$R_q$ separates $\bm{w}$ and $(v_1, \bm{\varepsilon}) \bm{w}$. \label{it:sep-one}
	
	\item There exist $\bm{v},\bm{w} \in (\Sigma^*)^k$ and a state $q \in Q$ such that
	$q \xrightarrow{\bm{v}} q$, and
	$R_q$ separates $\bm{w}$ and $(v_1, \bm{\varepsilon})^+ \bm{w}$.  \label{it:sep-many}
	\end{enumerate}
\end{proposition}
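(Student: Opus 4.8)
The plan is to prove the implications $(1)\Rightarrow(2)\Rightarrow(3)\Rightarrow(4)\Rightarrow(1)$, which establishes that all four statements are equivalent; the converse $(4)\Rightarrow(3)$ is an easy cross-check, obtained by replacing $\bm v$ by $\bm v^{\ell}$ for the idempotent exponent $\ell$ of the transition monoid, so that $v_1^{\ell}$ becomes null-transparent while $q\xrightarrow{\bm v^\ell}q$ still holds, and by noting that separating $\bm w$ from the whole set $(v_1,\bm\varepsilon)^+\bm w$ specializes to separating it from the single tuple $(v_1^{\ell},\bm\varepsilon)\bm w=(v_1,\bm\varepsilon)^{\ell}\bm w$. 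I expect the two ``outer'' implications $(1)\Rightarrow(2)$ and $(4)\Rightarrow(1)$ to be routine, and the two ``inner'' ones, which extract and then saturate the cyclic pattern, to carry the real difficulty.

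I would prove $(1)\Rightarrow(2)$ by contraposition, recycling the shortening argument from~\cite{CartonCG06} that rests on \cref{lem:valiant}. Suppose $xyz\approx_1 xz$ for all $x,z$ and all $y$ that are $nn!$-invisible in the context of $x$. For $N=nn!$ and any word of length exceeding $2(Nn)^N$, part~\ref{it:invisible} of \cref{lem:valiant} yields a factorization $xyz$ with $y$ nonempty and $nn!$-invisible in the context of $x$; by assumption $y$ may be deleted without leaving the $\approx_1$-class. Iterating shows every word is $\approx_1$-equivalent to one of length at most $2(Nn)^N$, so $\approx_1$ has finite index. For $(4)\Rightarrow(1)$ I would turn the pattern into an infinite $\approx_1$-antichain. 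Fix $\bm p$ with $q_0\xrightarrow{\bm p}q$ (all states are reachable); since every transition reads a single letter from a single tape, the deterministic run on any concatenation of tuples factors, so the run on $\bm p\,\bm v^{m}\,\bm s$ passes through $q$ exactly after $\bm p\bm v^m$. Assuming w.l.o.g.\ $\bm w\in R_q$ and $(v_1,\bm\varepsilon)^+\bm w\cap R_q=\emptyset$, put $A_m\define= p_1 v_1^{m} w_1$ and $C_m\define= (p_2 v_2^{m}w_2,\dots,p_k v_k^{m}w_k)$. Then $(A_m,C_m)=\bm p\bm v^m\bm w\in R$, whereas for $m'>m$ the run on $(A_{m'},C_m)$ reaches $q$ after $\bm p\bm v^m$ with residual $(v_1,\bm\varepsilon)^{m'-m}\bm w\notin R_q$, so $(A_{m'},C_m)\notin R$. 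Hence $C_m$ witnesses $A_m\not\approx_1 A_{m'}$ for every $m'\neq m$, the words $A_0,A_1,\dots$ are pairwise inequivalent, and $\approx_1$ has infinite index.

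The implication $(2)\Rightarrow(3)$ is where I would extract the cycle. Given a context $\bm z'$ for which exactly one of $(xyz,\bm z')$ and $(xz,\bm z')$ lies in $R$, the two deterministic runs from $q_0$ agree until the head on tape~$1$ passes the end of $x$, at a common state $r\in Q_1$; writing $\bm w$ for the residual of $(xz,\bm z')$ at $r$, the residual of $(xyz,\bm z')$ is $(y,\bm\varepsilon)\bm w$, so $R_r$ separates $\bm w$ from $(y,\bm\varepsilon)\bm w$. Since $y$ may be replaced by any power (invisibility passes to powers), I would pump $y$ and apply the pigeonhole principle to the switches between $Q_1$ and $Q\setminus Q_1$ performed while tape~$1$ reads $y$: a repeated state produces a genuine loop $q\xrightarrow{\bm v}q$ whose first coordinate $v_1$ is a factor that can be taken null-transparent, and the sign flip above localizes to a separation of an appropriate $\bm w$ from $(v_1,\bm\varepsilon)\bm w$, which is exactly condition $(3)$.

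Finally, $(3)\Rightarrow(4)$ upgrades this single separation to all powers, and this is where null-transparency is indispensable and where I expect the main obstacle to lie. The idea is that the non-first coordinates of $(v_1,\bm\varepsilon)^m\bm w$ have a fixed, bounded total length, so the run from $q$ can leave $Q_1$ only boundedly often; once the other tapes are essentially consumed, the run reads the remaining copies of $v_1$ in a pure $Q_1$-phase, and idempotency of the transformation that $v_1$ induces on $Q_1$ forces it to loop, so that the acceptance of $(v_1,\bm\varepsilon)^m\bm w$ stabilizes and becomes independent of $m\ge 1$. Making this precise—quantifying the bounded delay, ensuring that a $v_1$-fixed point of $Q_1$ is reached after the transient, and choosing the power of $v_1$ so that the separating side is uniform over all $m\ge 1$—is the delicate heart of the argument and is precisely the bounded-delay phenomenon already underlying \cref{lem:valiant}.
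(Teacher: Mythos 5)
Your overall plan — the cycle $(1)\Rightarrow(2)\Rightarrow(3)\Rightarrow(4)\Rightarrow(1)$ — is exactly the paper's, and three of the four legs are sound: your contrapositive for $(1)\Rightarrow(2)$ via \cref{lem:valiant}, your $(4)\Rightarrow(1)$ via the contexts $C_m$ (the separation in $(4)$ forces $v_1\neq\varepsilon$, and determinism makes membership factor through $R_q$, so the $A_m$ are indeed pairwise $\not\approx_1$), and your sketch of $(3)\Rightarrow(4)$ all match the paper's arguments. For $(3)\Rightarrow(4)$ you correctly identify, but do not carry out, the remaining step: the paper takes $m>\|(w_2,\dots,w_k)\|+1$, uses null-transparency to show that $V=\{(v_1,\bm\varepsilon)^j\bm w\mid j\ge m\}$ is uniformly inside or outside $R_q$, and then re-bases on $\bm v^m$ together with either $\bm w$ or $(v_1,\bm\varepsilon)\bm w$, whichever is separated from $V$.

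The genuine gap is in $(2)\Rightarrow(3)$, which is the heart of the proposition. Your plan extracts the cycle from the portion of the run \emph{while tape~1 reads $y$} (after pumping $y$), but this points in the wrong direction for two reasons. First, you never establish $xy^jz\not\approx_1 xz$ for $j\ge 2$ — invisibility of $y^j$ follows, but the non-equivalence does not — and even granting a loop $q\xrightarrow{(y^{i'-i},\bm\varepsilon)}q$ inside the pumped block, knowing only that the \emph{full} words $(y^jz,\bm z'')$ and $(z,\bm z'')$ are separated by $R_r$ gives no control over whether deleting a \emph{single} traversal of that loop flips membership; the ``sign flip localizes'' step is asserted, not proved. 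Second, the cycle in condition $(3)$ must come from the run on $(x,\bm t)$, not on $y$: the paper chooses a \emph{length-minimal} context $\bm t$ with $(xyz,\bm t)\in R\iff(xz,\bm t)\notin R$, observes that the prefix of the run reading $x$ cannot be an $nn!$-path (otherwise invisibility of $y$ would let one delete the $(y,\bm\varepsilon)$-loop and contradict the separation), hence it consumes at least $nn!$ symbols of $\bm t$; pigeonholing on the states reached after each consumed $\bm t$-symbol and applying \cref{lem:valiant}(1) to the induced factorization of $x$ yields a cycle $q\xrightarrow{(x_2,\bm t_2)}q$ with $x_2$ null-transparent and $\bm t_2$ \emph{nonempty}. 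Minimality of $\bm t$ then gives $(x_2x_3yz,\bm t_3)\in R_q\iff(x_2x_3z,\bm t_3)\in R_q$, and a short case analysis shows $R_q$ separates either $(x_3yz,\bm t_3)$ from $(x_2x_3yz,\bm t_3)$ or $(x_3z,\bm t_3)$ from $(x_2x_3z,\bm t_3)$. Both the minimal-context device and this case analysis are absent from your proposal, and without them I do not see how your construction produces the separation of some $\bm w$ from $(v_1,\bm\varepsilon)\bm w$ at the cycle state.
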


The implication (\ref{it:remove-invisible} $\Rightarrow$ \ref{it:infinite}) already appeared in \cite[Proof of Lemma~3.5]{CartonCG06}.
In our understanding, to prove this implication the authors used \ref{it:sep-one}) as an intermediate step.
Unfortunately, the proof for the implication (\ref{it:sep-one} $\Rightarrow$ \ref{it:infinite}) contains an argument
that we could not follow, see~\cref{sec:app-ccg} for a discussion.
For completeness, we reprove the implication (\ref{it:sep-one} $\Rightarrow$ \ref{it:infinite})
using \ref{it:sep-many}) as an intermediate step.

\begin{proof}[Proof of \cref{lem:carton}]

Let us start with the easy directions.

\medskip

(\ref{it:sep-many} $\Rightarrow$ \ref{it:infinite}):
Consider any run $q_0 \xrightarrow{\bm{u}} q$.
Then $u_1 v_1^i w_1 \not \approx_1 u_1 v_1^{i+j} w_1$ for all $i \ge 0$, $j \ge 1$
because $R$ separates
$\bm{u} \bm{v}^i \bm{w}$ and $\bm{u} \bm{v}^i (v_1,\bm{\varepsilon})^j \bm{w}$.
Hence $\approx_1$ has infinite index.

\medskip

(\ref{it:infinite} $\Rightarrow$ \ref{it:remove-invisible}):
Assume that \ref{it:remove-invisible}) is false.
By \Cref{lem:valiant}, any word of length at least $f(nn!)$ where $f(N) \define = 2(Nn)^n$ can be written as $uvw$
where $v$ is nonempty and $nn!$-invisible in the context of $u$, and therefore $uvw \approx_1 uw$.
By repeating this argument, we obtain for any word an $\approx_1$-equivalent word of length at most $f(nn!)$.
Therefore, $\approx_1$ has finite index.

\medskip

(\ref{it:remove-invisible} $\Rightarrow$ \ref{it:sep-one}):
Assume that $xyz \not \approx_1 xz$ where $y$ is $nn!$-invisible in the context of $x$.
Choose a length-minimal tuple $\bm{t} \in (\Sigma^*)^{k-1}$ such that
\begin{equation}
	\label{eq:w-sep}
	(xyz,\bm{t}) \in R \iff (xz,\bm{t}) \notin R.
\end{equation}
Let $\rho$ be a prefix of the run on $(xyz,\bm{t})$ which reads $x$ on the first tape.
Observe that $\rho$ is not a $nn!$-path since $y$ is $nn!$-invisible in the context of $x$
and otherwise one could remove the $(y,\bm{\varepsilon})$-loop from the run, which would contradict \Cref{eq:w-sep}.
In particular, $\rho$ reads at least $nn!$ symbols from $\bm{t}$.
Consider the sequence of states in $\rho$ visited after reading a symbol from $\bm{t}$.
There is a state $q$ which is visited more than $n!$ times.
We can factor $x = \alpha_1 \cdots \alpha_{\ell+1}$ and a prefix of $\bm{t}$ into nonempty words $\tau_1 \cdots \tau_{\ell+1}$ such that
\[
	q_0 \xrightarrow{(\alpha_1,\tau_1)} q \xrightarrow{(\alpha_2,\tau_2)} q \xrightarrow{(\alpha_3,\tau_3)} \dots \xrightarrow{(\alpha_\ell,\tau_\ell)} q
	\xrightarrow{(\alpha_{\ell+1},\tau_{\ell+1})} p
\]
and $\ell > n!$. By Lemma~\ref{lem:valiant} there exists a null-transparent factor $\alpha_{i+1} \cdots \alpha_j$ for some $1 \le i < j \le \ell$.
Let us set $x_1 = \alpha_1 \cdots \alpha_i$, $x_2 = \alpha_{i+1} \cdots \alpha_j$, and $x_3 = \alpha_{j+1} \cdots \alpha_{\ell+1}$.
Consider the corresponding decomposition $\bm{t} = \bm{t}_1 \bm{t}_2 \bm{t}_3$ such that
\begin{equation}
	\label{eq:run1}
	q_0 \xrightarrow{(x_1,\bm{t}_1)} q \xrightarrow{(x_2,\bm{t}_2)} q \xrightarrow{(x_3yz,\bm{t}_3)} r_+
\end{equation}
and
\begin{equation}
	\label{eq:run2}
	q_0 \xrightarrow{(x_1,\bm{t}_1)} q \xrightarrow{(x_2,\bm{t}_2)} q \xrightarrow{(x_3z,\bm{t}_3)} r_-
\end{equation}
where exactly one of the states $r_+, r_-$ belongs to $F$.

Since $\bm{t}$ is a length-minimal tuple satisfying \Cref{eq:w-sep} and $\bm{t}_2$ is nonempty we know that
\[
	(xyz,\bm{t}_1\bm{t}_3) \in R \iff (xz,\bm{t}_1\bm{t}_3) \in R
\]
and thus
\begin{equation}
	\label{eq:12}
	(x_2x_3yz,\bm{t}_3) \in R_q \iff (x_2x_3z,\bm{t}_3) \in R_q.
\end{equation}
We claim that either (i) $R_q$ separates $(x_2x_3yz,\bm{t}_3)$ and $(x_3yz,\bm{t}_3)$
or (ii) $R_q$ separates $(x_2x_3z,\bm{t}_3)$ and $(x_3z,\bm{t}_3)$, which proves the lemma.
Otherwise, \Cref{eq:12} implies 
\[
	(x_3yz,\bm{t}_3) \in R_q \iff (x_3z,\bm{t}_3) \in R_q,
\]
which contradicts \Cref{eq:run1,eq:run2}.
Hence, we can set $\bm{v} = (x_2,\bm{t}_2)$ and either set $\bm{w} = (x_3yz,\bm{t}_3)$ in case (i)
or set $\bm{w} = (x_3z,\bm{t}_3)$ in case (ii).
This concludes the proof.

\medskip

(\ref{it:sep-one} $\Rightarrow$ \ref{it:sep-many}):
Let $\bm{v},\bm{w}$ and $q \in Q$ such that $q \xrightarrow{\bm{v}} q$,
$v_1$ is null-transparent, $R_q$ separates $\bm{w}$ and $(v_1, \bm{\varepsilon}) \bm{w}$.
Let $m > |w_2 \cdots w_k| + 1$.
Let $\rho$ be the run on $(v_1,\bm{\varepsilon})^m \bm{w}$ starting in $q$.
It contains a subrun reading $(v_1, \bm{\varepsilon})$ between two $Q_1$-states, i.e.\ we can factor $(w_2, \dots, w_k) = \bm{x} \bm{y}$ such that
\[
	\rho \colon q \xrightarrow{(v_1^{i-1},\bm{x})} s \xrightarrow{(v_1,\bm{\varepsilon})} r \xrightarrow{(v_1^{m-i} w_1,\bm{y})} t
\]
for some $s,r \in Q_1$.
Since $v_1$ is null-transparent there is a cycle $r \xrightarrow{(v_1, \bm{\varepsilon})} r$.
Therefore $V \define= \{ (v_1,\bm{\varepsilon})^j \bm{w} \mid j \ge m \}$ is either contained in $R_q$ or disjoint from $R_q$.
Since $R_q$ separates $\bm{w}$ and $(v_1, \bm{\varepsilon})\bm{w}$, it also separates
one of them from $V$.
If $R_q$ separates $\bm{w}$ and $V$, then $\bm{v}^m$ and $\bm{w}$ satisfy the condition from the proposition.
Otherwise, $R_q$ separates $(v_1,\bm{\varepsilon}) \bm{w}$ and $V$,
and the tuples $\bm{v}^m$ and $(v_1,\bm{\varepsilon}) \bm{w}$ satisfy the condition from the proposition.
\end{proof}


\subsection{Polynomial-time algorithm for binary relations}

\begin{figure}
\centering
\begin{tikzpicture}[xscale=0.8,yscale=0.8,anchor=base, baseline, ->,>=stealth]

\tikzstyle{state}=[draw, circle, inner sep = 0, minimum size = 12pt]

\node[state] at (0,0) (q) {\footnotesize $q$};
\node[state] at (3,0) (r) {\footnotesize $r$};
\node[state, accepting] at (0,-2) (f) {\footnotesize $s$};
\node[state] at (3,-2) (g) {\footnotesize $t$};

\draw (q) to node[above] {\footnotesize $(v_1,x)$} (r);
\draw[loop above] (q) to node[above] {\footnotesize $(v_1,v_2)$} (q);
\draw[loop above] (r) to node[above] {\footnotesize $(v_1,\varepsilon)$} (r);

\draw (q) to node[right] {\footnotesize $(w_1,xy)$} (f);
\draw (r) to node[right] {\footnotesize $(w_1,y)$} (g);

\end{tikzpicture}

\caption{The pattern witnessing nonrecognizability for deterministic 2-tape automata.
Here, either state $s$ is final and $t$ is nonfinal, or vice versa.}
\label{fig:drat-not-rec}

\end{figure}

From \cref{lem:carton} we can derive a pattern
which is present in $\A$ if and only if $R$ is \emph{not} recognizable.
For binary relations the pattern is visualized in \cref{fig:drat-not-rec}.
This pattern can be detected in polynomial-time by reducing to the inequivalence problem for binary deterministic rational relations.

\begin{proposition}
	\label{prop:pattern}
	$\approx_1$ has infinite index if and only if
	there exist words $v_1,w_1 \in \Sigma^*$, tuples $\bm{v}_2, \bm{x}, \bm{y} \in (\Sigma^*)^{k-1}$, and states $q,r \in Q$ such that
	\begin{enumerate}
	\item $q \xrightarrow{(v_1,\bm{v}_2)} q$, $q \xrightarrow{(v_1,\bm{x})} r$, $r \xrightarrow{(v_1,\bm{\varepsilon})} r$,
	\item $(w_1,\bm{x}\bm{y}) \in R_q \iff (w_1,\bm{y}) \notin R_r$.
	\end{enumerate}
\end{proposition}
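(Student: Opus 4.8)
The plan is to read \cref{prop:pattern} as a purely syntactic reformulation of \cref{lem:carton}, proving the pattern equivalent to condition \ref{it:infinite} directly in each direction: the pattern implies $\approx_1$ has infinite index by an explicit pumping construction, and conversely infinite index yields the pattern by unfolding condition \ref{it:sep-one} with a pigeonhole argument. For the direction ``pattern $\Rightarrow$ infinite index'' I would not go through \cref{lem:carton} at all. Fix a run $q_0 \xrightarrow{(u_1,\bm{u}_2)} q$ (available since all states are reachable) and consider, for $m \ge 0$, the tape-$1$ words $a_m = u_1 v_1^m w_1$ paired with the tuples $\bm{z}_m = \bm{u}_2\,\bm{v}_2^m\,\bm{x}\bm{y} \in (\Sigma^*)^{k-1}$.

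Reading $(a_m,\bm{z}_m)$ from $q_0$, determinism forces the run to reach $q$, traverse the loop $q \xrightarrow{(v_1,\bm{v}_2)} q$ exactly $m$ times, and then read $(w_1,\bm{x}\bm{y})$ from $q$, so $(a_m,\bm{z}_m) \in R \iff (w_1,\bm{x}\bm{y}) \in R_q$. For $m < m'$, reading $(a_{m'},\bm{z}_m)$ instead traverses the $q$-loop $m$ times, then the transition $q \xrightarrow{(v_1,\bm{x})} r$, then the $r$-loop $m'-m-1$ times, and finally reads $(w_1,\bm{y})$ from $r$; hence $(a_{m'},\bm{z}_m) \in R \iff (w_1,\bm{y}) \in R_r$. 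The crucial determinism point is that once the $m$ $q$-loops are consumed the next symbols on the lower tapes are exactly those of $\bm{x}$, which steer the run along $q \xrightarrow{(v_1,\bm{x})} r$ rather than around the $q$-loop. By condition (2) the two memberships differ, so $\bm{z}_m$ witnesses $a_m \not\approx_1 a_{m'}$, and the $a_m$ populate infinitely many $\approx_1$-classes. (If $v_1 = \varepsilon$ then the run on $(w_1,\bm{x}\bm{y})$ from $q$ reads $\bm{x}$ on the lower tapes, passes through $r$, and yields $(w_1,\bm{x}\bm{y}) \in R_q \iff (w_1,\bm{y}) \in R_r$, contradicting (2); thus $v_1 \in \Sigma^+$ and the $a_m$ are genuinely distinct.)

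For the converse I would start from condition \ref{it:sep-one} of \cref{lem:carton}: infinite index gives a state $q$, a null-transparent $v_1$ with a loop $q \xrightarrow{(v_1,\bm{v}_2)} q$, and $\bm{w} = (w_1,\bm{w}_2)$ such that $R_q$ separates $\bm{w}$ and $(v_1 w_1,\bm{w}_2)$. Choosing $m > \|\bm{w}_2\| + 1$ and inspecting the run of $\A$ on $(v_1^m w_1,\bm{w}_2)$ from $q$, the pigeonhole argument from the proof of (\ref{it:sep-one} $\Rightarrow$ \ref{it:sep-many}) applies: tape $1$ is fed $m$ copies of $v_1$ while the lower tapes carry fewer than $m-1$ symbols, so the run contains a pure sub-run $s \xrightarrow{(v_1,\bm{\varepsilon})} r$ with $s,r \in Q_1$, and null-transparency of $v_1$ upgrades it to a loop $r \xrightarrow{(v_1,\bm{\varepsilon})} r$. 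Writing the run as $q \xrightarrow{(v_1^{a+1},\bm{x})} r \xrightarrow{(v_1^b w_1,\bm{y})} t$ with $\bm{w}_2 = \bm{x}\bm{y}$, I would set the pattern's first-tape loop word to $V_1 = v_1^{a+1}$ and raise the $q$- and $r$-loops to the $(a+1)$-st power; this realises all three runs of condition (1) over the common word $V_1$, with $\bm{x},\bm{y},r$ as found.

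The remaining and most delicate point is condition (2): the separation must survive the passage to the power $V_1 = v_1^{a+1}$ that the pigeonhole forces on us rather than letting us choose. Pumping the loop $r \xrightarrow{(v_1,\bm{\varepsilon})} r$ gives the uniformity identities $(v_1^j w_1,\bm{w}_2) \in R_q \iff t \in F$ for all $j \ge m$, together with $(w_1,\bm{y}) \in R_r \iff t \in F$ and $(v_1 w_1,\bm{y}) \in R_r \iff t \in F$ (stripping or inserting $r$-loops). I would then split exactly as in \cref{lem:carton}: if $R_q$ separates $\bm{w}$ and $(v_1^m w_1,\bm{w}_2)$, anchor on $\bm{w}$ and keep $w_1$; otherwise, since $R_q$ separates $\bm{w}$ from $(v_1 w_1,\bm{w}_2)$, it must separate $(v_1 w_1,\bm{w}_2)$ from $(v_1^m w_1,\bm{w}_2)$, and we anchor on the latter tuple with first-tape word $v_1 w_1$. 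In either case the uniformity identities convert the chosen separation into precisely $(w_1,\bm{x}\bm{y}) \in R_q \iff (w_1,\bm{y}) \notin R_r$, establishing (2). I expect the main obstacle to be this bookkeeping: reconciling the null-transparency that manufactures the $r$-loop with the fact that the separating exponent is not under our control, which the two-case anchoring resolves.
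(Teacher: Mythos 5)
Your proof is correct, and its technical core --- pigeonhole on the run over $(v_1^m w_1,\bm{w}_2)$ to extract a pure subrun $s \xrightarrow{(v_1,\bm{\varepsilon})} r$, null-transparency to turn it into an $r$-cycle, and determinism to relate the memberships --- is exactly the paper's. The one real difference is your entry point into \cref{lem:carton}: the paper routes \emph{both} directions through condition~\ref{it:sep-many}), whose separation of $\bm{w}$ from the entire set $(v_1,\bm{\varepsilon})^+\bm{w}$ makes condition~(2) immediate (for the ``if'' direction, one line: $R_q$ separates $(w_1,\bm{x}\bm{y})$ from $(v_1,\bm{\varepsilon})^+(w_1,\bm{x}\bm{y})$; for the ``only if'' direction, the separation at the specific exponent $m\ell$ is already available). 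By starting instead from condition~\ref{it:sep-one}), you only have the single-power separation and must pay for it with the two-case anchoring (on $\bm{w}$ versus on $(v_1w_1,\bm{w}_2)$), which you carry out correctly but which is avoidable. Your direct pumping argument for the ``pattern $\Rightarrow$ infinite index'' direction, including the observation that condition~(2) forces $v_1 \neq \varepsilon$, is a fine substitute for the paper's appeal to the implication (\ref{it:sep-many} $\Rightarrow$ \ref{it:infinite}).
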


\begin{proof}
	For the ``if'' direction observe that $R_q$ separates $(w_1,\bm{x}\bm{y})$ and $(v_1,\bm{\varepsilon})^+(w_1,\bm{x}\bm{y})$.
	Therefore $\approx_1$ has infinite index by \cref{lem:carton} point \ref{it:sep-many}).
	For the ``only if'' direction assume that $\approx_1$ has infinite index.
	Again, by \cref{lem:carton} point \ref{it:sep-many}) there exist $(v_1,\bm{v}_2),(w_1,\bm{w}_2) \in (\Sigma^*)^k$ and a state $q \in Q$ such that
	$q \xrightarrow{(v_1,\bm{v}_2)} q$, and $R_q$ separates $(w_1,\bm{w}_2)$ and $(v_1, \bm{\varepsilon})^+ (w_1,\bm{w}_2)$.
	Let $m > \|\bm{w}_2\| + 1$ and let $\ell$ be such that $v_1^\ell$ is null-transparent.
	Consider the unique run $\rho_q$ on $(v_1, \bm{\varepsilon})^{m \ell} (w_1,\bm{w}_2)$ starting from $q$.
	It must contain a subrun of the form $s \xrightarrow{(v_1, \bm{\varepsilon})^\ell} r$ where $s,r \in Q_1$.
	Hence we can factorize $\bm{w}_2 = \bm{x}\bm{y}$ such that $\rho_q$ has the form
	\begin{equation}
		\label{eq:rhoq}
		\rho_q \colon ~ q \xrightarrow{(v_1^{(i-1) \ell},\bm{x})} s \xrightarrow{(v_1, \bm{\varepsilon})^\ell} r \xrightarrow{(v_1^{(m-i) \ell} w_1, \bm{y})} t.
	\end{equation}
	Since $v_1^\ell$ is null-transparent there exists a cycle $r \xrightarrow{(v_1, \bm{\varepsilon})^\ell} r$.
	Since $\A$ is deterministic, this allows us to choose $i = m$ in \Cref{eq:rhoq} and write
	\begin{equation}
		\rho_q \colon ~ q \xrightarrow{(v_1^{(m-1) \ell},\bm{x})} s \xrightarrow{(v_1, \bm{\varepsilon})^\ell} r \xrightarrow{(w_1, \bm{y})} t.
	\end{equation}
	Since $R_q$ separates $(w_1,\bm{w}_2) = (w_1,\bm{x}\bm{y})$ and $(v_1, \bm{\varepsilon})^{m \ell} (w_1,\bm{w}_2)$,
	we know that $(w_1,\bm{x}\bm{y}) \in R_q$ if and only if $(w_1,\bm{y}) \notin R_r$.
	Hence the words $v_1^{m \ell}, w_1$ together with the tuples $\bm{v}_2^{m \ell}, \bm{x}, \bm{y}$ satisfy the claim.
\end{proof}

\begin{theorem}
	\label{thm:binary-rec-to-eq}
	The recognizability problem for binary deterministic rational relations is logspace reducible to the equivalence problem for binary deterministic rational relations.
\end{theorem}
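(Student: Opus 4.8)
The plan is to turn the automaton pattern of \cref{prop:pattern} (depicted in \cref{fig:drat-not-rec}) into a single inequivalence instance. By \cref{prop:rat-fin-idx}, $R$ is recognizable iff every $\approx_j$ has finite index, and since permuting the tapes of $\A$ is a logspace operation, it suffices to produce, for a fixed $j$, a pair of binary deterministic rational relations that are equivalent iff $\approx_j$ has finite index; the (here only one, $\approx_1$) instances for $j \in [1,k-1]$ can be merged into a single equivalence query by prefixing all inputs with a distinct marker per instance, so that the combined relations coincide iff each pair does. Hence I focus on constructing, from $\A$, two binary deterministic rational relations whose equivalence captures finite index of $\approx_1$.

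First I would reformulate condition~(2) of \cref{prop:pattern} so that the two runs to be compared share a component. Concatenating the edge $q \xrightarrow{(v_1,x)} r$ with the tail $r \xrightarrow{(w_1,y)} t$ gives a run $q \xrightarrow{(v_1 w_1,\, xy)} t$, so that $(v_1 w_1, xy) \in R_q \iff (w_1,y) \in R_r$; combined with $(w_1,xy) \in R_q \iff (w_1,y) \notin R_r$ this yields the equivalent condition
\[
 (w_1, xy) \in R_q \iff (v_1 w_1, xy) \notin R_q .
\]
Writing $z = xy$, the pattern thus manifests as a disagreement of the \emph{single} deterministic relation $R_q$ on the tuples $(w_1, z)$ and $(v_1 w_1, z)$, which share the second component $z$ and whose first components share the suffix $w_1$. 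This is exactly the setting of the common-component observation, and it is what lets a comparison of two \emph{separate} single-run deterministic relations detect the pattern: I intend to read the first component on one tape as $v_1 \# w_1$ (for a fresh marker $\#$) and the second component as $z$, and to define $L^A_q = \{ (v_1 \# w_1, z) : (w_1,z) \in R_q,\ v_1 \text{ admissible} \}$ and $L^B_q = \{ (v_1 \# w_1, z) : (v_1 w_1,z) \in R_q,\ v_1 \text{ admissible} \}$. Each is the relation of a single deterministic run of (a marker-annotated copy of) $\A$, so each is binary and deterministic rational, and $R(L^A_q)=R(L^B_q)$ for all $q$ iff the displayed disagreement never occurs for an admissible $v_1$.

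The delicate point, and the main obstacle, is the encoding of admissibility, i.e.\ that the detected $v_1$ is genuinely pumpable in the sense of \cref{lem:carton} point~\ref{it:sep-many}), so that one disagreement yields \emph{infinite} index rather than a single discrepancy $w_1 \not\approx_1 v_1 w_1$. Concretely I must restrict $v_1$ to words labelling the self-loop $q \xrightarrow{(v_1,v_2)} q$, the edge $q \xrightarrow{(v_1,x)} r$, and the self-loop $r \xrightarrow{(v_1,\varepsilon)} r$; the $r$-self-loop is what forces membership of $(v_1^{\,j} w_1, z)$ to be constant over all $j \ge 1$ and hence produces infinitely many $\approx_1$-classes. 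The difficulty is that the first two conditions quantify existentially over auxiliary tape-2 contents ($v_2$ and $x$) that do not appear on the input carrying $z$, so they are inherently nondeterministic and cannot be verified by a deterministic binary automaton in the naive way. Here I would exploit null-transparency: replacing $v_1$ by a null-transparent power $v_1^{\ell}$ (as in the proof of \cref{prop:pattern}) normalises the relevant loops so that their effect on $Q_1$ is idempotent and the loop phase consumes no symbols of $z$, and the choice of the intermediate state $r$, being bounded by $|Q|$, is pushed into the input marker and absorbed by the finite merge described above.

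With admissibility encoded, $R(L^A_q) = R(L^B_q)$ for all $q$ holds iff no pattern occurs iff $\approx_1$ has finite index, and by symmetry the analogous test handles $\approx_2$; merging all tests gives one equivalence query over binary deterministic rational relations. Every ingredient is a fixed, marker-controlled modification of $\A$ composed with a constant-size loop gadget, so the whole reduction is computable in logarithmic space, which is what is claimed. Composed with the polynomial-time equivalence algorithm of Friedman and Greibach~\cite{FriedmanG82}, it also yields the polynomial-time recognizability bound of \cref{thm:drat-recognizability}(i). I expect the hardest part of writing the full proof to be precisely the deterministic, two-tape realisation of admissibility: verifying that the null-transparency normal form together with the common-component observation suffices to confine all desynchronisation of the two runs to the tape-1 prefix $v_1$, and that the resulting relations are faithfully deterministic rational.
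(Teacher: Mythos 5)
Your high-level plan coincides with the paper's: reduce via the pattern of \cref{prop:pattern} to an equivalence of two binary deterministic rational relations built from single deterministic runs of $\A$. But the encoding you propose does not close, and the gap is exactly the point you flag as ``the hardest part''. Your relations $L^A_q,L^B_q$ compare $(w_1,z)\in R_q$ against $(v_1w_1,z)\in R_q$ for an \emph{arbitrary} second component $z$ and an ``admissible'' $v_1$. For a single such disagreement to certify infinite index you must be in one of two situations: either $v_1$ is null-transparent (so that \cref{lem:carton}, point~\ref{it:sep-one}, applies), or $z$ extends the specific word $x$ read on the second tape along the run $q\xrightarrow{(v_1,x)}r$, so that the $(v_1,\varepsilon)$-loop at $r$ lets you pump and invoke point~\ref{it:sep-many}. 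The first option forces the admissibility DFA to track the relation $\{(s,t): s\xrightarrow{(v_1,\varepsilon)}t\}$ on $Q_1$, which is exponentially large; that gives precisely the paper's polynomial-space reduction for arity $k>2$, not the claimed logspace reduction. The second option is not enforced by your encoding: with only the three runs as admissibility (whose tape-1 projection is anyway given by an NFA that you cannot determinize in logspace, since $v_2$ and $x$ are existential witnesses absent from the input), a disagreement witnessed by a $z$ that does not begin with a valid $x$ does not yield infinite index, so $L^A_q\neq L^B_q$ no longer implies nonrecognizability. Null-transparency ``normalising the loops'' does not repair this, because the obstruction is the size of the null-transparency DFA and the missing coupling between $x$ and $z$, not idempotence.

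The paper's missing ingredient is the \emph{synchronized shuffle}: it places the two runs $\pi\colon q\xrightarrow{(v_1,v_2)}q$ and $\rho\colon q\xrightarrow{(v_1,x)}r$ themselves on the second input tape, as the interleaved word $(\pi\shuffle\rho)\,\$\,y$, with $q\,r\,w_1$ on the first tape. This simultaneously (i) makes admissibility verifiable by a polynomial-size DFA reading the shuffle deterministically (the witnesses $v_2$ and $x$ are now part of the input rather than existentially quantified), and (ii) embeds $x$ as a subword of the second tape, so the second tape effectively carries $xy$ and a deterministic transducer can check $(w_1,xy)\in R_q$ for $C_1$ versus $(w_1,y)\in R_r$ for $C_2$. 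Your reformulation $(w_1,xy)\in R_q \iff (v_1w_1,xy)\notin R_q$ is a correct consequence of the pattern (by determinism the run on $(v_1w_1,xy)$ factors through $r$), but without moving the runs onto the input tape you cannot realise the test deterministically in logspace.
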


\newcommand{\seq}{\mathsf{seq}}

\begin{proof}
	Let $R$ be a binary deterministic rational relation,
	which is recognizable if and only if $\approx_1$ has finite index by \Cref{prop:rat-fin-idx}.
	By \cref{prop:pattern} this holds if and only if
	for all state pairs $q,r \in Q$ and all words $v_1,w_1,v_2,x,y \in \Sigma^*$ the following two conditions are equivalent:
	\begin{enumerate}
	\item[(C1)] $q \xrightarrow{(v_1,v_2)} q$, $q \xrightarrow{(v_1,x)} r$, $r \xrightarrow{(v_1,\varepsilon)} r$, $(w_1,xy) \in R_q$
	\item[(C2)] $q \xrightarrow{(v_1,v_2)} q$, $q \xrightarrow{(v_1,x)} r$, $r \xrightarrow{(v_1,\varepsilon)} r$, $(w_1,\bm{y}) \in R_r$
	\end{enumerate}
	Using an appropriate encoding we can reduce the equivalence of (C1) and (C2)
	to the equivalence problem for binary deterministic rational relations.

	Suppose $\pi,\rho$ are runs which read the same input word (in our case, this would be $v_1$), i.e.\ we can write
	\[
		\pi \colon ~ s_1 \xrightarrow{g_0} t_1 \xrightarrow{a_1} s_2 \xrightarrow{g_1} t_2 \xrightarrow{a_2} \cdots \xrightarrow{a_n} s_n \xrightarrow{g_n} t_n
	\]
	and
	\[
		\rho \colon ~ s'_1 \xrightarrow{h_0} t'_1 \xrightarrow{a_1} s'_2 \xrightarrow{h_1} t'_2 \xrightarrow{a_2} \cdots \xrightarrow{a_n} s'_n \xrightarrow{h_n} t'_n
	\]
	where each $a_i$ is a letter and the states $t_i$, $t_i'$ are precisely the states in $\pi$ and $\rho$ in $Q_1$.
	We define their synchronized shuffle $\pi \shuffle \rho \in (\Sigma \cup \{\diamond\})^*$ as
	\[
		\pi \shuffle \rho = g_0 \diamond h_0 \diamond a_1 \diamond g_1 \diamond h_1 \diamond a_2 \diamond \cdots \diamond a_n \diamond g_n \diamond h_n.
	\]
	We encode (C1) as the binary relation
	\begin{align*}
		C_1 = \{ (q \, r \, w_1, & (\pi \shuffle \rho) \, \$ \, y) \mid q,r \in Q, \, \pi \colon q \xrightarrow{(v_1,v_2)} q, \\
		& \rho \colon q \xrightarrow{(v_1,x)} r, \, r \xrightarrow{(v_1,\varepsilon)} r, \, (w_1,xy) \in R_q \}
	\end{align*}
	and (C2) as the binary relation
	\begin{align*}
		C_2 = \{ (q \, r \, w_1, & (\pi \shuffle \rho) \, \$ \, y) \mid q,r \in Q, \, \pi \colon q \xrightarrow{(v_1,v_2)} q, \\
		& \rho \colon q \xrightarrow{(v_1,x)} r, \, r \xrightarrow{(v_1,\varepsilon)} r, \, (w_1,y) \in R_r \}.
	\end{align*}
	Observe that $C_1 = C_2$ if and only if (C1) and (C2) are equivalent.
	It remains to verify that $C_1$ and $C_2$ are deterministic rational and we can construct automata in logspace.
	First, for each state pair $q,r \in Q$ we can construct a DFA over $\Sigma \cup \{\diamond\}$
	which accepts precisely the synchronized shuffles $\pi \shuffle \rho$ where
	$\pi \colon q \xrightarrow{(v_1,v_2)} q$, $\rho \colon q \xrightarrow{(v_1,x)} r$
	and $r \xrightarrow{(v_1,\varepsilon)} r$ for some words $v_1,v_2,x$.
	Since $x$ can be easily extracted as a subword of $\pi \shuffle \rho$,
	a deterministic transducer can verify whether the input pair $(q \, r \, w_1, (\pi \shuffle \rho) \, \$ \, y)$
	satisfies $(w_1,xy) \in R_q$ and whether it satisfies $(w_1,y) \in R_r$.
\end{proof}


\subsection{Arbitrary relations}

The approach from \cref{thm:binary-rec-to-eq} does not work for arity $k \ge 3$.
The issue is that the words $v_2,x,y$ from (C1) and (C2) would become $(k-1)$-tuples $\bm{v}_2, \bm{x}, \bm{y}$.
It is not clear how to appropriately encode the runs on $(v_1,\bm{x})$ and $(w_1,\bm{x}\bm{y})$ in (C1)
so that they can be simulated by an automaton.
Still, we can express (the negation of) property \ref{it:sep-one}) in \cref{lem:carton}
as the equivalence of two polynomial space constructible deterministic multitape automata.
Since equivalence of deterministic $k$-tape automata is in $\coNP$~\cite{HarjuK91}, and in $\coRP$ for fixed $k$~\cite{Worrell13},
the complexity bounds from \cref{thm:drat-sync} follow.

\begin{theorem}
	The recognizability problem for $k$-ary deterministic rational relations is polynomial space reducible to the equivalence problem for $k$-ary deterministic rational relations.
\end{theorem}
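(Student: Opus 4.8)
The plan is to reduce recognizability to the negation of condition~\ref{it:sep-one}) of \cref{lem:carton}, which by \cref{prop:rat-fin-idx} characterizes recognizability (after the usual reduction of finite-index of $\approx_j$ to the $\approx_1$-case by permuting components). Concretely, I would build two deterministic $k$-tape automata $\B_1,\B_2$ over an extended alphabet such that $R(\B_1)=R(\B_2)$ if and only if condition~\ref{it:sep-one}) \emph{fails}, i.e.\ for every state $q$, every null-transparent $v_1$ that is the first component of a cycle $q \xrightarrow{\bm{v}} q$, and every tuple $\bm{w}$, we have $\bm{w}\in R_q \iff (v_1,\bm{\varepsilon})\bm{w}\in R_q$. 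Both automata read a common input tuple encoding $(q,v_1,\bm{w})$: $\B_1$ accepts iff the data is ``valid'' and $\bm{w}\in R_q$, while $\B_2$ accepts iff the data is valid and $(v_1,\bm{\varepsilon})\bm{w}\in R_q$. Running the construction for each $j\in[1,k-1]$ (on the permuted relation, which stays deterministic rational) and combining the results by branching on an index symbol placed at the front of the input yields a single equivalence instance for $R$.

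The observation that makes this work for arbitrary arity --- and that avoids the synchronized-shuffle trick, which breaks for $k\ge 3$ --- is that $v_2,\dots,v_k$ are existentially quantified in condition~\ref{it:sep-one}) and are never touched by the separation, which only prepends $(v_1,\bm{\varepsilon})$. Hence, given $q$, the property ``$v_1$ is null-transparent and $q \xrightarrow{(v_1,v_2,\dots,v_k)} q$ for some $v_2,\dots,v_k$'' is a \emph{regular} property of the single word $v_1$: the existence of a completion is captured by the NFA over $\Sigma$ obtained from $\A$ by turning every read on a tape $i\ge 2$ into an $\varepsilon$-transition $p\to\delta(p,b)$ for an arbitrary $b\in\Sigma$ and asking for a $q$-to-$q$ run labelled $v_1$, while null-transparency is decided by tracking the transformation that $(v_1,\bm{\varepsilon})$ induces on $Q_1$. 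Determinizing and taking the product gives a DFA $\D_q$ of exponential size recognizing the set $V_q$ of all valid words $v_1$. Because validity now depends on $v_1$ alone, a single copy of $v_1$ on the first tape suffices.

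I would then encode the input as the $k$-tuple with first tape $q\,\#\,v_1\,\#\,w_1$ and $i$-th tape $w_i$ for $i\ge 2$, both automata first reading $q$. Automaton $\B_1$ reads $v_1$ up to the separator $\#$, running $\D_q$ to test $v_1\in V_q$, and then simulates $\A$ from $q$ on $\bm{w}$ (reading $w_1$ after the $\#$ and $w_i$ from tape $i$), accepting iff $v_1\in V_q$ and the simulation ends in $F$; this tests $\bm{w}\in R_q$. Automaton $\B_2$ instead simulates $\A$ from $q$ directly on $(v_1w_1,w_2,\dots,w_k)=(v_1,\bm{\varepsilon})\bm{w}$, treating the first tape's prefix $v_1w_1$ as the first input component and skipping the internal $\#$; crucially, while its simulation consumes the letters of $v_1$ from the first tape (interleaved with reads on the other tapes, exactly as $\A$ dictates), it feeds those same letters to $\D_q$, and upon reaching the $\#$ it finalizes the validity test. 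Thus $\B_2$ reads $v_1$ only once, using it simultaneously for validity and for the prepend, and accepts iff $v_1\in V_q$ and $(v_1,\bm{\varepsilon})\bm{w}\in R_q$. Both are genuine deterministic $k$-tape automata whose states record a phase, the simulated $\A$-state, the $\D_q$-state, and $q$; they have exponentially many states but each state has a polynomial-size description, so the pair can be written down in polynomial space. By construction $R(\B_1)$ and $R(\B_2)$ differ on some input exactly when condition~\ref{it:sep-one}) holds, completing the reduction; feeding these exponential automata to the $\coNP$ (resp.\ $\coRP$) equivalence test of~\cite{HarjuK91} (resp.\ \cite{Worrell13}) then yields the bounds of \cref{thm:drat-recognizability}.

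The step I expect to be the main obstacle is precisely this single-reading of $v_1$ in $\B_2$: naively one would place two copies of $v_1$ on the first tape (one for the cycle/validity check, one for the prepend) and would then have to verify their equality, which a deterministic one-head left-to-right pass cannot do. Making validity a regular condition on $v_1$ alone --- so that it can be checked \emph{while} the membership simulation streams $v_1$ off the tape --- is what removes this need and, at the same time, explains why polynomial space rather than logspace is required: the determinized validity automaton $\D_q$ is exponential. A secondary point requiring care is the faithful mirroring of $\A$'s head discipline by $\B_1,\B_2$, in particular the handling of the separator $\#$ inside the first tape during $\B_2$'s simulation, together with the verification that existentializing $v_2,\dots,v_k$ indeed preserves condition~\ref{it:sep-one}).
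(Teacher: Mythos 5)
Your proposal is correct and takes essentially the same route as the paper: both reduce, via point~\ref{it:sep-one}) of \cref{lem:carton}, to the equivalence of two exponential-size but polynomial-space constructible deterministic $k$-tape automata that read an encoding of the state $q$, the cycle data, and $\bm{w}$, and test $\bm{w}\in R_q$ versus $(v_1,\bm{\varepsilon})\bm{w}\in R_q$, with an exponential DFA handling null-transparency and the letters of $v_1$ being streamed into the membership simulation exactly as you describe. The only difference is that the paper places the entire witness run $\pi$ (encoded as $\mathsf{flat}(\pi)\in(Q\Sigma)^*Q$) on the first tape, so that the cycle condition is a polynomial-size syntactic check and $v_1$ is extracted from the run, whereas you place only $v_1$ there and existentially project away $v_2,\dots,v_k$ via a determinized projection automaton --- a sound variation that costs an extra exponential determinization but changes nothing asymptotically.
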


\begin{proof}
	Let $R$ be a $k$-ary deterministic rational relation,
	which is recognizable if and only if $\approx_j$ has finite index for all $j < k$ by \Cref{prop:rat-fin-idx}.
	It suffices to show how to reduce the test whether $\approx_1$ has finite index
	to the equivalence problem of polynomial space constructible deterministic $k$-tape automata.
	By \cref{lem:carton} point \ref{it:sep-one}), $\approx_1$ has finite index if and only if
	for all states $q \in Q$ and all tuples $\bm{v},\bm{w} \in (\Sigma^*)^k$ the following two conditions are equivalent:
	\begin{enumerate}
	\item[(P1)] $q \xrightarrow{\bm{v}} q$, $v_1$ null-transparent, $\bm{w} \in R_q$
	\item[(P2)] $q \xrightarrow{\bm{v}} q$, $v_1$ null-transparent, $(v_1,\bm{\varepsilon})\bm{w} \in R_q$
	\end{enumerate}
	We encode (P1) and (P2) as deterministic rational relations.
	First observe that we can construct an exponentially large DFA
	for the language of all null-transparent words $v_1 \in \Sigma^+$.
	It simulates runs on $v_1$ in parallel from every state $s \in Q_1$, and verifies that
	$s \xrightarrow{(v_1,\bm{\varepsilon})} t$ implies $t \xrightarrow{(v_1,\bm{\varepsilon})} t$.
	We encode a run $\pi$ as an alternating sequence $\mathsf{flat}(\pi) \in (Q\Sigma)^*Q$ of states and input letters.
	Under this encoding, valid runs can be recognized by a polynomially sized DFA.
	Define the following $k$-ary relations
	\begin{align*}
		P_1 = \{ (q \, \mathsf{flat}(\pi) \, \$,\bm{\varepsilon}) \, \bm{w} \mid q \in Q, \, v_1 \text{ null-transparent}, \\
		\pi \colon q \xrightarrow{\bm{v}} q, \, \bm{w} \in R_q \}
	\end{align*}
	and
	\begin{align*}
		P_2 = \{ (q \, \mathsf{flat}(\pi) \, \$,\bm{\varepsilon}) \, \bm{w} \mid q \in Q, \, v_1 \text{ null-transparent}, \\
		\pi \colon q \xrightarrow{\bm{v}} q, \, (v_1,\bm{\varepsilon})\bm{w} \in R_q \}.
	\end{align*}
	Since $v_1$ can be easily extracted from $\mathsf{flat}(\pi)$, we can construct exponentially sized deterministic $k$-tape automata
	for $P_1$ and $P_2$.
	Furthermore, the conditions (P1) and (P2) are equivalent if and only if $P_1 = P_2$.
\end{proof}

\subsection{Reducing equivalence to recognizability}

Let us complement the presented algorithms for recognizability with the following ``converse direction''.
We solved the recognizability problem by reducing to the equivalence problem over deterministic rational relations
(in logspace, for binary relations).
In fact, the equivalence problem is logspace reducible to the recognizability problem (for arbitrary arity).

\begin{restatable}{theorem}{eqtorec}
\label{prop:eqtorec}
Let $k \ge 2$.
The equivalence problem for $k$-ary deterministic rational relations is logspace reducible
to the recognizability problem for $k$-ary deterministic rational relations.
\end{restatable}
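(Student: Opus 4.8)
The plan is to reduce the equivalence problem to recognizability by building, from deterministic $k$-tape automata $\A_1,\A_2$ for $R_1,R_2$, a single deterministic rational relation $S$ that is recognizable if and only if $R_1=R_2$; correctness is then verified through the finite-index characterization of \Cref{prop:rat-fin-idx}, i.e.\ by showing that every $\approx_j^S$ has finite index exactly when $R_1=R_2$. The conceptual target is an \emph{amplified symmetric difference}: for a fresh separator $\#$ and a fresh letter $a$, one would like $S$ to relate a tuple $\bm w$, suitably padded, precisely when $\bm w\in R_1 \triangle R_2$, with an extra coordinate carrying a copy of an unbounded block $a^n$. If $R_1=R_2$ this relation is empty, hence recognizable; whereas a single witness $\bm w\in R_1\triangle R_2$ yields an equality-style diagonal whose matching $a^n$-blocks make the first-coordinate sections pairwise distinct, so $\approx_1^S$ acquires infinite index. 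This is exactly the behaviour we want.

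The difficulty, and the reason the statement is nontrivial, is that $R_1\triangle R_2$ is in general \emph{not} deterministic rational: a single deterministic multitape automaton reads its tapes in one state-determined order, so it cannot simultaneously simulate $\A_1$ and $\A_2$ when these prescribe incompatible reading orders. (This is precisely the obstruction that makes the equivalence problem for deterministic multitape automata hard.) To get around it I would follow the technique of Friedman and Greibach~\cite{FriedmanG78}: instead of evaluating the exclusive-or on one input, one splits the two membership tests onto the two sides $\bm x,\bm y$ of the congruence $\approx_1$ and lets the universal quantifier in its definition perform the comparison. Each of the tuples $(\bm x,\bm z)$ and $(\bm y,\bm z)$ then triggers only a \emph{single} simulation, so $S$ stays deterministic rational, while the padding block $a^n$ is what turns one disagreement into infinitely many $\approx_1$-classes.

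For binary relations this can be made concrete through the translation to pushdown automata recalled in the introduction: $R\mapsto L_R=\{\rev(u)\#v\mid(u,v)\in R\}$ sends a deterministic rational relation to a deterministic context-free language, recognizable relations to regular languages, and is injective, so $R_1=R_2$ iff $L_{R_1}=L_{R_2}$. Appealing to the Friedman--Greibach reduction from the equivalence problem to the membership problem for a subclass (instantiated with the regular languages)~\cite{FriedmanG78} produces a deterministic pushdown automaton whose language is regular iff $L_{R_1}=L_{R_2}$; after normalising this language to the single-separator form $\{\rev(u)\#v\}$ it equals $L_S$ for a binary deterministic rational relation $S$, and then $S$ is recognizable iff $R_1=R_2$. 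All constructions are logspace, and there is no a priori obstruction to such a reduction because—unlike for plain pushdown automata—both the equivalence problem~\cite{HarjuK91} and the recognizability problem (\Cref{thm:drat-recognizability}) for $\DRat$ are already known to be decidable, so the entire content is the construction. For arbitrary arity there is no pushdown translation, and I would instead carry out the same argument directly on the deterministic $k$-tape automata, using the padded coordinate as the analogue of the pushdown counter.

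The main obstacle throughout is the \emph{non-inheritance} property: the constructed $S$ must be recognizable whenever $R_1=R_2$ \emph{even if $R_1$ itself is not recognizable}. Any naive construction that, on some branch, simply runs one of the $\A_i$ on an unbounded free factor will make $\approx_j^S$ inherit the (possibly infinite) index of $\approx_j^{R_i}$ and thus report inequivalence spuriously. Guaranteeing instead that the comparison is \emph{total}—so that agreement collapses $S$ to a manifestly recognizable relation while every disagreement is pumped into infinitely many classes of some $\approx_j^S$—is exactly the point where the careful Friedman--Greibach bookkeeping is needed, and is where I expect the real work to lie; reconciling the possibly different tape-reading orders of $\A_1$ and $\A_2$ while keeping $S$ deterministic is the secondary obstacle for the higher-arity case.
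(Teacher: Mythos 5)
There is a genuine gap. You correctly identify the central obstacle --- the relation $S$ you build must be recognizable whenever $R_1=R_2$ \emph{even though $R_1$ itself may fail to be recognizable} --- but you then explicitly defer its resolution (``where I expect the real work to lie''), and the Friedman--Greibach bookkeeping you gesture at is never instantiated. The paper resolves this obstacle with one concrete move that your proposal is missing: before building anything, replace $R(\A)$ and $R(\B)$ by their restrictions $R(\A')$, $R(\B')$ to tuples of total length less than $n$, where $n$ is the combined number of states. This is sound because, by Harju and Karhumäki~\cite{HarjuK91}, two deterministic $k$-tape automata are equivalent iff they agree on all tuples of length at most $n-1$; and it makes both relations \emph{finite}, hence trivially recognizable, which kills the non-inheritance problem outright. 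With that in hand the construction is elementary: set
\[
	T = \{ (a^i \#, a^i \#, \bm{\varepsilon}) \mid i \in \N \}\, R(\A') \;\cup\; \{ (a^i \#, a^j \#,\bm{\varepsilon}) \mid i \neq j \}\, R(\B'),
\]
which is deterministic rational (the automaton first compares the $a$-blocks on the first two tapes to decide diagonal versus off-diagonal, then runs $\A'$ or $\B'$ accordingly). If $R(\A')=R(\B')$ then $T$ is a concatenation of two recognizable relations, hence recognizable; if some $\bm v$ lies in exactly one of them, the quotient $T\bm v^{-1}$ is the diagonal $\{ (a^i \#, a^i \#, \bm{\varepsilon}) \mid i \in \N \}$, which is not recognizable. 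Note that no symmetric difference is ever computed --- your worry about $R_1 \triangle R_2$ not being deterministic rational is real, and the paper sidesteps it by letting the prefix select \emph{which single automaton} to simulate, so each input triggers only one simulation.

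Your binary-case detour through DPDAs and the Friedman--Greibach subclass reduction might be made to work, but as written it is a pointer to a technique rather than an argument, and your higher-arity plan (``carry out the same argument directly \ldots using the padded coordinate as the analogue of the pushdown counter'') has no content yet. Without the truncation-to-finite-relations step (or some substitute for it), every version of your construction will, on the branch where the automata agree, still be running one of the $\A_i$ on an unbounded factor and will inherit the infinite index of $\approx_j^{R_i}$ --- exactly the failure mode you yourself describe.
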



\begin{proof}

Given two deterministic $k$-tape automata $\A$ and $\B$.
First we ensure that both $R(\A)$ and $R(\B)$ are finite relations, and, in particular, recognizable.
By~\cite{HarjuK91} the automata $\A$ and $\B$ are equivalent
if and only if they accept the same tuples of length at most $n-1$,
where $n$ is the total number of states in $\A$ and $\B$.
We can compute in logspace a deterministic $k$-tape automaton $\A'$ such that $R(\A') = R(\A) \cap \{ \bm{u} \in (\Sigma^*)^k \mid \|\bm{u}\| < n \}$, and analogously $\B'$ for $\B$.
The automaton $\A'$ tracks the length of the prefix tuple read so far, up to threshold $n$,
and rejects all tuples of length at least $n$.

We claim that $R(\A') = R(\B')$ if and only if
\begin{align*}
	T = \; & \{ (a^i \#, a^i \#, \bm{\varepsilon}) \mid i \in \N \} R(\A') ~ \cup \\
	 & \{ (a^i \#, a^j \#,\bm{\varepsilon}) \mid i \neq j \} R(\B')
\end{align*}
is recognizable where $a$ and $\#$ are fresh distinct letters.
Observe that a deterministic $k$-tape automaton for $T$ is logspace computable from $\A'$ and $\B'$.
If $R(\A') = R(\B')$ then
\[
	T = \{ (a^i \#, a^j \#, \bm{\varepsilon}) \mid i, j \in \N \} R(\A')
\]
is the concatenation of two recognizable relations and hence itself recognizable.
Suppose that $R(\A') \neq R(\B')$ and assume that there exists a tuple $\bm{v} \in R(\A') \setminus R(\B')$
(the case where $R(\B') \setminus R(\A') \neq \emptyset$ is similar).
If $T$ would be recognizable then $T\bm{v}^{-1} = \{ \bm{u} \mid \bm{u}\bm{v} \in T\}$ would also be recognizable.
However
\[
	T\bm{v}^{-1} = \{ (a^i \#, a^i \#, \bm{\varepsilon}) \mid i \in \N \}
\]
is clearly not recognizable.
\end{proof}

\subsection{Constructing an independent automaton}

\Cref{thm:drat-recognizability} raises the question how to translate a deterministic multitape automaton
into an equivalent automaton with independent tapes, if one exists.
Such a construction will be needed in the next section, to decide whether a deterministic multitape automaton
recognizes a synchronous relation.
Formally, an \emph{independent $k$-tape automaton} $\I$ is a tuple $\I = (\A_1,\dots,\A_k,F)$ 
consisting of DFAs $\A_i$ without final states
and  a set of state tuples $F \subseteq Q_1 \times \dots \times Q_k$,
where $Q_i$ is the state set of $\A_i$.
The relation $R(\I)$ recognized by $\I$ is the set of all tuples $(w_1,\dots,w_k)$
such that for each $i \in [1,k]$ the unique run of $\A_i$ on $w_i$ ends in a state $q_i \in Q_i$ with $(q_1,\dots,q_k) \in F$.
Note that independent multitape automata recognize exactly the relations in $\Rec$.

\begin{theorem}\label{thm:twoexp}
	Given a deterministic $k$-tape automaton for a recognizable relation $R$,
	one can compute an independent $k$-tape automaton for $R$ in double exponential time.
\end{theorem}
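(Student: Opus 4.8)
The plan is to build the independent automaton one tape at a time. For each component $i \in [1,k]$ I would construct a DFA $\A_i$ whose state after reading a word $w$ records exactly the information about $w$ relevant for deciding membership in $R$ in the $i$-th coordinate, with no final states and start state the class of $\varepsilon$, and then let $F$ consist of those state tuples $(q_1,\dots,q_k)$ for which some (equivalently every) tuple of representatives $(w_1,\dots,w_k)$, with $w_i$ driving $\A_i$ into $q_i$, lies in $R$. Since $\A$ is deterministic and decides $R$, membership of a fixed representative tuple is tested by simply running $\A$, so $F$ is computable once the $\A_i$ are built. Hence the problem reduces to constructing, for each $i$, a finite-index right congruence $\equiv_i$ on $\Sigma^*$ that refines $\approx_i$ and such that $\equiv_1 \times \dots \times \equiv_k$ saturates $R$; the DFA $\A_i$ is then the transition system of $\equiv_i$ (states $=$ classes, $[w] \xrightarrow{a} [wa]$).

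For correctness I would take $\equiv_i$ to be the coarsest right congruence contained in $\approx_i$, that is, $x \equiv_i y$ iff $xu \approx_i yu$ for all $u \in \Sigma^*$. This is right-invariant and refines $\approx_i$ by construction, and since every $\equiv_i$-class lies inside an $\approx_i$-class, replacing the $i$-th coordinate of a tuple by an $\equiv_i$-equivalent word preserves membership in $R$; doing this in every coordinate shows $\prod_i \equiv_i$ saturates $R$, so the resulting independent automaton recognizes exactly $R$. Finite index is where recognizability enters: as $R$ is recognizable it is a finite union of products of regular languages, and intersecting the Myhill--Nerode congruences of the $i$-th factors yields a finite-index right congruence $\sigma_i$ with $\sigma_i \subseteq \approx_i$ (verified by coordinatewise substitution). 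Since $\equiv_i$ is the coarsest right congruence inside $\approx_i$ we get $\sigma_i \subseteq \equiv_i \subseteq \approx_i$, whence $\equiv_i$ has finite index. This argument uses only the \emph{existence} of the decomposition, guaranteed by \cref{prop:rat-fin-idx}; the algorithm itself never computes it.

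The remaining, genuinely hard part is the quantitative bound. Passing from a finite-index equivalence to its coarsest right-congruence refinement can in general blow up the index exponentially, and since \cref{lem:valiant} only bounds the index of $\approx_i$ itself by a double exponential, a naive partition-refinement computation of $\equiv_i$ would produce a triple-exponential automaton. To stay double-exponential I would instead imitate Valiant's construction of a double-exponential DFA from a regular DPDA~\cite{Valiant75} (building on Stearns~\cite{Stearns67}), transported to the multitape setting through the machinery of~\cite{CartonCG06}. The idea is that $\A_i$ need not remember the full $\approx_i$-behaviour of every right extension of $w$, but only a bounded \emph{reduced profile} of how $w$ drives the runs of $\A$; \cref{lem:valiant} ensures that null-transparent and $N$-invisible factors inside such runs may be deleted, which is the multitape analogue of Valiant collapsing unbounded stack segments and is exactly what keeps the number of profiles double-exponential in $n$ rather than larger.

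The step I expect to be the main obstacle is precisely this control of the blow-up: verifying that the reduced profiles are closed under appending a letter, so that they form a genuine deterministic transition system computing a right congruence contained in $\approx_i$, and that only double-exponentially many of them arise. This is where Valiant's delicate combinatorial analysis must be adapted from the stack of a DPDA to the bounded-delay behaviour of the deterministic $k$-tape automaton, using null-transparency in place of stack collapsing. Here the role of the ``missing piece'' \cref{prop:rat-fin-idx} is to certify that a reduced profile already determines the $i$-th coordinate's contribution to membership in $R$, so that $F$ is well defined on profile tuples and $\prod_i \equiv_i$ saturates $R$. Once the $\A_i$ are in place, assembling $F$ through the membership oracle provided by $\A$ and emitting the independent automaton is routine and stays within double-exponential time.
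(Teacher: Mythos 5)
Your overall skeleton is the same as the paper's: you define $\equiv_i$ as the coarsest right congruence contained in $\approx_i$ (exactly the paper's $x \equiv_j y \iff \forall z.\ xz \approx_j yz$), build one DFA per tape from it, and define $F$ by running $\A$ on representative tuples; the coordinatewise-substitution argument for correctness and the use of \cref{prop:rat-fin-idx} to work with the per-coordinate relations $\approx_i$ rather than $\approx_{[1,i]}$ are also as in the paper. However, the step you explicitly defer --- ``verifying that the reduced profiles are closed under appending a letter \dots and that only double-exponentially many of them arise'' --- is precisely the content of the theorem, so as written the proof has a genuine gap: the double-exponential bound is asserted by analogy with Valiant but never established.

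The missing ingredient is a single observation that makes all of your ``reduced profile'' machinery unnecessary. Because being $nn!$-invisible in the context of $x$ is a property of the pair $(x,y)$ that does not depend on the continuation $z$, \cref{lem:carton} (equivalence of conditions \ref{it:infinite} and \ref{it:remove-invisible}) applied to the finite-index relation $\approx_j$ gives $xyz \approx_j xz$ for \emph{every} $z$; replacing $z$ by $zz'$ shows $xyz \equiv_j xz$, i.e.\ excising an $nn!$-invisible factor preserves the right congruence $\equiv_j$ itself, not merely $\approx_j$. Combined with \cref{lem:valiant}(\ref{it:invisible}), every word of length exceeding $f(nn!) = 2(nn!\cdot n)^n$ contains such a factor, so one can take the states of $\A_j$ to be the words of length at most $f(nn!)$ --- of which there are only double exponentially many --- with the transition on letter $a$ mapping $w$ to $wa$ with an invisible factor cut out if needed; right-invariance of $\equiv_j$ immediately gives the invariant that the current state is $\equiv_j$-equivalent to the prefix read so far. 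Your worry about an extra exponential from passing to the coarsest right-congruence refinement is thus a red herring: the bound comes from counting short words directly, not from partition refinement. Without the $xyz \equiv_j xz$ observation (which is what the paper means by the characterization via $\approx_j^R$ being the ``missing piece''), your argument does not yet yield the claimed complexity.
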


\begin{proof}
For each $j \in [1,k]$ define the relation $\equiv_j$ by
\[
	x \equiv_j y \define \iff \text{for all } z \in \Sigma^* \colon xz \approx_j yz,
\]
which is a right-congruence, i.e.\ $x \equiv_j y$ implies $xa \equiv_j ya$.
Let $\A = (Q,\Sigma,q_0,\delta,F)$ be a deterministic $k$-tape automaton for a recognizable relation $R$.
Suppose that $x,y,z$ are words where $y$ is $nn!$-invisible in the context of $x$.
Then $xyz \equiv_j xz$ since otherwise $xy(zz') \not \approx_j x(zz')$ for some word $z'$,
which contradicts \cref{lem:carton}.
Hence, for each word $w$ of length $f(nn!)+1$ there exists an $\equiv_j$-equivalent word $w'$
of length at most $f(nn!)$, by cutting out $nn!$-invisible factor according to \cref{lem:valiant} where $f(N) \define = 2(Nn)^n$.
Furthermore, the function $w \mapsto w'$ can be computed in double exponential time,
as remarked in~\cite[Section~8]{Valiant75}.

Hence, the independent $k$-tape automaton $(\A_1, \dots, \A_k, F)$ works as follows.
The states of $\A_j$ are words of length at most $f(nn!)$.
The initial state is the empty word $\varepsilon$.
If the current state (word) is $w$ and the next input symbol is $a \in \Sigma$,
then the next state is the word obtained from $wa$ by removing an $nn!$-invisible factor, if possible.
In this way, at each time step the reached state is a word that is $\equiv_j$-equivalent to the read prefix.
Finally, a tuple $\bm{v}$ is marked final if and only if $\bm{v} \in R$.
On input tuple $(w_1, \dots, w_k)$ each DFA $\A_j$ reaches a state $v_j$ with $v_j \equiv_j w_j$,
and therefore $(v_1, \dots, v_k) \in R$ if and only if $(w_1, \dots, w_k) \in R$.
\end{proof}

We remark that the double exponential bound in \cref{thm:twoexp} is optimal,
which can be derived from the proof by Meyer and Fischer for the double exponential succinctness gap between DPDAs and DFAs~\cite{MeyerF71}.
To keep the paper self-contained, we provide an alternative proof.

\begin{proposition}\label{prop:lowerbound}
	There exists a recognizable relation $R_n \subseteq \{0,1\}^* \times \{0,1\}^*$
	which is accepted by a deterministic 2-tape automaton with $O(n^2 \log n)$ states
	so that any independent 2-tape automaton for $R_n$ has in total at least $2^{2^{n-1}}$ states.
\end{proposition}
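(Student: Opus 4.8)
The plan is to construct $R_n$ so that an independent $2$-tape automaton is forced to distinguish doubly-exponentially many right-congruence classes in one of its components. The natural strategy is to exploit the fact that a deterministic $2$-tape automaton can read one tape fully before touching the other, thereby storing information about the first word in its (linearly many) states, while an independent automaton must commit to a state for each word in isolation. Concretely, I would let the first component encode (in binary, over $\{0,1\}$) an index $i \in [0, 2^n-1]$, and let the second component encode a \emph{subset} $S \subseteq [0,2^n-1]$ in some padded form of length $O(2^n)$; the relation $R_n$ should hold exactly when $i \in S$. The membership test $i \in S$ can be checked by a deterministic $2$-tape automaton with only $O(n^2 \log n)$ states, since it may read the $n$-bit index $i$ first (tracking it in $O(2^n)$ states is too many, so instead I would interleave a pointer mechanism: read $i$, then scan the encoding of $S$ bit by bit, using a counter of size $O(2^n)$ — this forces a more careful encoding, see below).

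Since a naive counter already costs $2^n$ states, the actual construction must keep the $2$-tape automaton small by using its \emph{head asynchrony} rather than its state space. I would encode the second word so that $S$ is presented as a list of the $n$-bit names of its elements, separated by a marker, and have the automaton read the first word $i$ one bit at a time while simultaneously advancing a head along the current candidate name in the second word, comparing bit by bit and resetting to the next candidate on mismatch. This comparison needs only $O(n)$ states for the bit position, $O(n)$ for buffering, and $O(\log n)$ auxiliary counters, giving the claimed $O(n^2 \log n)$ bound. The key point is that \emph{deterministic $2$-tape automata can compare two words of equal length using delay between the heads}, which independent automata cannot do. For the lower bound, I would invoke a fooling-set / Myhill–Nerode argument on the second component: there are $2^{2^n}$ distinct subsets $S$, and for two distinct subsets $S \neq S'$ there is an index $i$ with $i \in S \iff i \notin S'$; hence the second words encoding $S$ and $S'$ lie in different classes of $\equiv_2$ (equivalently, the columns of the acceptance table $R_n$ indexed by $S$ and $S'$ differ). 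Any independent $2$-tape automaton, whose second DFA $\A_2$ must assign the correct final-state behavior separately to each second word, therefore needs at least $2^{2^n}$ states in $\A_2$, and by adjusting the index range to $[0,2^{n-1}-1]$ one obtains the stated bound of $2^{2^{n-1}}$.

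The main obstacle is balancing the two competing size constraints: making the second word's encoding of $S$ long enough that all $2^{2^n}$ subsets are distinguishable, while keeping the deterministic $2$-tape automaton at $O(n^2\log n)$ states. The danger is that the obvious membership check secretly requires a counter of size $2^n$; the resolution is to offload the counting into the \emph{structure of the input word} (the separators and fixed-width name fields) rather than into the state space, so that the automaton only ever tracks an $O(n)$-bit position within a single name. I would verify that the resulting relation is genuinely recognizable — it is, since $R_n = \bigcup_{S} (\text{names of } S) \times \{\text{encoding of } S\}$ with each factor regular and the union finite over a bounded index set — and then confirm that the number of $\equiv_2$-classes is exactly the number of distinct acceptance-columns, which is $2^{2^{n-1}}$ after the index-range adjustment. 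This last counting step, together with the succinctness of the $2$-tape automaton, is where the real content lies; the rest is bookkeeping analogous to the Meyer–Fischer DPDA-versus-DFA gap.
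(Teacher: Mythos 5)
Your lower-bound argument is sound: if the second DFA $\A_2$ of an independent automaton identifies the encodings of two distinct subsets $S \neq S'$, then the relation cannot distinguish them against any first component, so $\A_2$ needs at least as many states as there are distinct ``columns'' of $R_n$ --- this is exactly the mechanism the paper uses (there phrased via the index of $\approx_1^{R_n}$). The genuine gap is in the upper bound. Both heads of a deterministic $2$-tape automaton are one-way, so the automaton you describe cannot exist as stated: after comparing the $n$-bit index $i$ against the first candidate name in the list encoding $S$ and hitting a mismatch, the head on the first tape has already consumed a prefix of $i$, and ``resetting to the next candidate'' only repositions the second head --- there is no way to re-read the consumed bits of $i$ for the next comparison. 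Storing $i$ in the state instead costs $2^n$ states. Your proposed remedy (offloading counting into separators and fixed-width name fields) addresses the counter-size worry but not this re-reading obstruction, which is the real one; so the construction as described does not give an $O(n^2\log n)$-state automaton for the membership relation $i \in S$ under a flat-list encoding of $S$.

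The idea can be repaired by encoding $S$ as a serialized depth-$n$ decision tree with nesting markers, so that each bit of $i$ is consumed exactly once to steer a descent, and skipping a subtree needs only a depth counter bounded by $n$; this is essentially the Meyer--Fischer succinctness construction, which the paper explicitly cites as an alternative route to this proposition. The paper's own proof sidesteps all of this by a different choice of $R_n$: the pairs $(u,v)$ over $[1,n]$ with $|v|\le 2n$ and $v$ a scattered subword of $u$. There the greedy embedding of $v$ into $u$ is computable with one-way heads and polynomially many states, $\approx_1^{R_n}$ is Simon's congruence of parameter $2n$, and the $2^{2^{n-1}}$ bound on its index is imported from known results; binary-coding the alphabet gives the $O(n^2\log n)$ state bound. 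To complete your version you would need either the decision-tree repair or, as in the paper, a relation whose membership test is genuinely streaming.
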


\begin{proof}
Let $R_n \subseteq [1,n]^* \times [1,n]^*$ be the relation containing all pairs $(u,v)$
where $|v| \le 2n$ and $v$ is a scattered subword of $u$.
Observe that $R_n$ is accepted by a deterministic 2-tape automaton with $O(n)$ states.
Then $\approx_1^{R_n}$ is \emph{Simon's congruence} with parameter $2n$~\cite{Simon75}.
Its index is finite and bounded from below by $2^{2^{n-1}}$ by~\cite[Theorem~1.2]{KarandikarKS15}.
If an independent 2-tape automaton $\I = (\A_1,\A_2,F)$ recognizes $R_n$ then
the index of $\approx_1^{R_n}$ is a lower bound for the number of states of $\A_1$.
Finally, we can replace the alphabet $[1,n]$ by codes from $\{0,1\}^{\log n}$,
increasing the automaton size by a $\log n$-factor.
\end{proof}

\section{Deciding Synchronicity in $\DRat$}
\label{sec:rec-sync}
\newcommand{\pre}{\mathrm{pre}}
\newcommand{\suf}{\mathrm{suf}}

In this section we prove \Cref{thm:drat-sync} by showing that synchronicity can be reduced to recognizability for relations in $\DRat$,
which can be solved according to \Cref{thm:drat-recognizability}.
Let us remark that there also exists a reduction in the reverse direction,
whose proof can be found in \Cref{sec:app-rec-sync}.
\begin{restatable}{proposition}{recsync} \label{thm:rec-sync}
Given a $k$-tape automaton $\A$ for a relation $R$, one can compute in logspace a $k$-tape automaton $\B$ for a relation $S$
such that $R$ is recognizable if and only if $S$ is synchronous. If $\A$ is deterministic, then so is $\B$.
\end{restatable}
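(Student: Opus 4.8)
The plan is to take $S$ to be the relation obtained from $R$ by prepending to each tape an independent, arbitrarily long block of a fresh padding letter $c \notin \Sigma$:
\[ S = \{ (c^{m_1} w_1, \dots, c^{m_k} w_k) \mid (w_1, \dots, w_k) \in R, \ m_1, \dots, m_k \ge 0 \}. \]
I would obtain a $k$-tape automaton $\B$ for $S$ from $\A$ by adding to every state $p \in Q_i$ a transition reading $c$ from tape $i$ that loops back to $p$; since $c \notin \Sigma$, the only $c$'s occurring on tape $i$ form its prefix, so these loops absorb exactly $c^{m_i}$ before $\A$ begins reading $w_i$ and never fire again once content reading has started. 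This transformation is logspace and preserves determinism, since from $p \in Q_i$ the letter $c$ deterministically leads back to $p$ while every $a \in \Sigma$ leads to $\delta(p,a)$.

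The forward direction is easy: if $R = \bigcup_i L_{i,1} \times \cdots \times L_{i,k}$ is recognizable with regular $L_{i,j}$, then $S = \bigcup_i (c^* L_{i,1}) \times \cdots \times (c^* L_{i,k})$ is again a finite union of Cartesian products of regular languages, hence recognizable and, since $\Rec \subseteq \Sync$, synchronous.

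The hard part will be the reverse direction: from $S$ synchronous I must recover recognizability of $R$. Since $R$ is rational, by \cref{prop:rat-fin-idx} it suffices to prove that $\approx_j^R$ has finite index for every $j \in [1,k-1]$. Fix $j$ and let $M$ be a DFA with state set $Q_M$ recognizing $\otimes S$. The idea is to exploit the independent paddings to push the content of tape $j$ to the very end of the convolution: given $\bm z \in (\Sigma^*)^{k-1}$ and a tape-$j$ word $x$, set $L = \max_i |z_i|$ and consider the tuple with content $x$ on tape $j$ and $z_i$ on the other tapes, using prefix $c^L$ on tape $j$ and no prefix elsewhere. Its convolution has length $L + |x|$ and decomposes as a first block of $L$ columns, reading $c$ on tape $j$ and $z_i \bot^{L-|z_i|}$ on tape $i$, followed by a suffix block $\sigma_j(x)$ reading $x$ on tape $j$ and $\bot$ everywhere else. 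Thus $M$ reaches after the first block a state $q(\bm z)$ that is independent of $x$, and accepts iff $x \odot_j \bm z \in R$.

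Writing $\tau_x \colon Q_M \to Q_M$ for the transformation $q \mapsto \hat\delta_M(q, \sigma_j(x))$ induced by $\sigma_j(x)$, this gives $x \odot_j \bm z \in R \iff \tau_x(q(\bm z)) \in F_M$, so $\tau_x = \tau_{x'}$ implies $x \approx_j^R x'$. As there are at most $|Q_M|^{|Q_M|}$ such transformations, $\approx_j^R$ has finite index and $R$ is recognizable. The main obstacle I anticipate is making this alignment fully precise --- in particular, checking that the first block stays entirely within the $c$-padding of tape $j$ (which is what makes $q(\bm z)$ independent of $x$) and that membership of the padded tuple in $S$ faithfully mirrors membership of $x \odot_j \bm z$ in $R$, which requires tracking the fresh letter $c$ and the padding symbol $\bot$ carefully.
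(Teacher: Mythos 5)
Your construction is essentially the paper's: the paper also prepends an independent unary padding $\#^{n_i}$ to each tape (followed by a separator $\vdash$) and then exploits the resulting misalignment of the convolution, so the core idea is identical. Two small remarks. First, your automaton $\B$ does not recognize the $S$ you wrote down: adding a $c$-loop at \emph{every} state of $Q_i$ means the loops can also fire after content reading has begun, so $\B$ accepts the larger relation in which $c$'s may be inserted anywhere, not only as a prefix. This is harmless --- your forward and reverse arguments both go through verbatim for that larger relation (it is still a finite union of products of regular languages when $R$ is recognizable, and the key biconditional $(c^L x)\odot_j\bm{z}\in R(\B)\iff x\odot_j\bm{z}\in R$ still holds) --- but you should either redefine $S$ as $R(\B)$ or add a separator as the paper does, which is exactly what the $\vdash$ is for. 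Second, your reverse direction is the mirror image of the paper's: you pad tape $j$ so that its content lands entirely \emph{after} the other tapes in the convolution and bound the index of $\approx_j^R$ directly by the number of state transformations of a DFA for $\otimes S$, whereas the paper pads the \emph{other} tapes and argues contrapositively that infinite index of $\approx_r^R$ yields infinitely many Myhill--Nerode classes of $\otimes R'$. Both are correct instances of the same alignment trick, and your direct finite-index bound via \cref{prop:rat-fin-idx} is a perfectly valid alternative phrasing.
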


\begin{figure}
\centering
\begin{tikzpicture}[yscale=-1, every node/.style={inner sep=0,outer sep=0}]
\node[draw, fill = black!20, minimum height = 10pt, minimum width = 40pt] at (0,0) {\scriptsize $u_1$};
\node[draw, fill = black!20, minimum height = 10pt, minimum width = 105pt] at (32.5pt,10pt) {\scriptsize $u_2$};
\node[draw, minimum height = 10pt, minimum width = 30pt] at (35pt,0) {\scriptsize $s_i$};
\node[draw, minimum height = 10pt, minimum width = 30pt] at (100pt,10pt) {\scriptsize $t_{i,j}$};
\node[draw, minimum height = 10pt, minimum width = 65pt] at (82.5pt,0) {\scriptsize $\bot \:\:\: \bot \:\:\: \bot \;\;\; \bot \;\;\; \bot$};
\draw[red,opacity=0.7,semithick] (85pt,-10pt) to (85pt,20pt);
\end{tikzpicture}
\caption{An asynchronous cycle can produce words $(u_1,u_2)$ with unbounded length difference.
The words $s_i$ are pairwise inequivalent words with respect to $\approx_1^R$,
separated by the words $t_{i,j}$.}
\label{fig:sync-implies-rec}
\end{figure}

In the rest of this section we show \Cref{thm:drat-sync}.
We first give an intuition for the case $k = 2$.
Suppose $R$ is given by a deterministic 2-tape automaton $\A$ with the property that every reachable cycle
$p \xrightarrow{(v_1,v_2)} p$ satisfies $|v_1| = |v_2|$.
This ensures that $\A$ has \emph{bounded delay}~\cite[Section~3]{FrougnyS93},
i.e.\ the head positions cannot be arbitrarily far apart during the computation of $\A$.
In fact, the delay is bounded by the number of states $|Q|$.
It is well-known that such an automaton recognizes a synchronous relation~\cite[Corollary~3.4]{FrougnyS93},
since letters that are ``read ahead'' on a tape can be stored in a queue of length $|Q|$.
Let us now consider the case where $\A$ contains an \emph{asynchronous} cycle of the form
$p \xrightarrow{(v_1,v_2)} p$ with $|v_1| < |v_2|$ (the case $|v_1| > |v_2|$ is symmetric).
We partition the automaton into an asynchronous part, containing all states which are reachable from an asynchronous cycle,
and a synchronous part.
While the simulation using a queue works in the synchronous part of the automaton,
the delay can become unbounded in the asynchronous part, by traversing asynchronous cycles repeatedly.

We claim that $R$ is synchronous if and only if for each state~$q$ in the asynchronous part,
the relation $R_q$ is recognizable.
If each such relation $R_q$ is recognizable and in particular synchronous, then the computation from state $q$
can be continued synchronously using a synchronous automaton for $R_q$.
For the other direction, assume that $R_q$ is not recognizable for some asynchronous state $q$.
Therefore $\approx_1^{R_q}$ has infinite index by \cref{prop:rat-fin-idx},
i.e.\ there exist words $(s_i)_{i \ge 1}$ and $(t_{i,j})_{i<j}$ such that
$R_q$ separates $(s_i,t_{i,j})$ and $(s_j,t_{i,j})$ for all $i < j$.
We claim that the Myhill-Nerode equivalence relation $\sim_{\otimes R}$ of the language $\otimes R$ of convolutions has at least $h$ classes for each $h \in \N$:
Take an asynchronous cycle $p \xrightarrow{(v_1,v_2)} p$ from which $q$ is reachable.
We can produce runs $q_0 \xrightarrow{(u_1,u_2)} q$ where the delay $|u_2|-|u_1|$ is arbitrary large.
Pick such a run where $|u_2|-|u_1| \ge \max\{|s_1|, \dots, |s_h|\}$.
Then any two words $(u_1 s_i) \otimes u_2$ and $(u_1 s_j) \otimes u_2$ for $1 \le i < j \le h$
are inequivalent with respect to $\sim_{\otimes R}$
because $\otimes R$ separates $(u_1 s_i) \otimes (u_2 t_{i,j})$ and $(u_1 s_j) \otimes (u_2 t_{i,j})$,
see the illustration in~\cref{fig:sync-implies-rec}.
Thus, the synchronicity problem can be reduced to checking recognizability of relations $R_q$
where $q$ is reachable from an asynchronous cycle.

Let us now consider the general case of a $k$-ary deterministic rational relation $R$.
Again, we can ignore the endmarker $\dashv$ since appending $(\dashv,\dots,\dashv)$ to $R$
preserves (non)synchronicity.
Hence, for the rest of this section we assume that $R$ is given by a deterministic $k$-tape automaton $\A = (Q,\Sigma,q_0,\delta,F)$ with $R = R(\A)$.
Moreover, we assume that every state in $Q$ is reachable from $q_0$.
A cycle in $\A$ reading $(v_1,\dots,v_k)$ \emph{induces} a partition $P$ on the components $[1,k]$
where two components $i$ and $j$ are in the same block in $P$ if and only if $|v_i| = |v_j|$.
For a state $q \in Q$ we define the partition $P_q$ as the coarsest refinement of all partitions induced by a cycle from which $q$ is reachable.
As before, let $R_q$ be the relation recognized from state $q$.

\begin{restatable}{lemma}{lempq}\label{lem:Pq}
For every $q \in Q$, $P_q$ is computable in time polynomial in the size of $\A$.
\end{restatable}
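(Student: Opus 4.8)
The plan is to reduce the computation of $P_q$ to a collection of nonzero-cycle detection problems in an integer-weighted digraph, one per pair of tapes, each solvable in polynomial time by a potential-function argument.

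First I would reformulate $P_q$ in terms of transition counts. Since $\A$ is deterministic, every transition originates in a state of exactly one part $Q_i$ and appends one symbol to the $i$-th component. Hence along any cycle reading $(v_1, \dots, v_k)$ the length $|v_i|$ equals the number of transitions on the cycle starting in $Q_i$. Consequently two tapes $i,j$ lie in the same block of the partition induced by a cycle iff the cycle starts equally often in $Q_i$ and in $Q_j$, and by definition $i,j$ lie in the same block of $P_q$ iff this holds for \emph{every} cycle from which $q$ is reachable. Equivalently, $P_q$ separates $i$ and $j$ iff there is \emph{some} cycle from which $q$ is reachable along which the numbers of $Q_i$- and $Q_j$-transitions differ. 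Being together-in-all-cycles is an intersection of equivalence relations, hence itself an equivalence relation; therefore it suffices to decide this separation condition for each of the $O(k^2)$ pairs $(i,j)$ independently and then read off $P_q$.

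Second I would make the ``reachable from a cycle'' condition graph-theoretic. Let $Q_q = \{ s \in Q : q \text{ is reachable from } s\}$, which is computable in polynomial time by backward reachability from $q$. Every state on a cycle from which $q$ is reachable itself reaches $q$, so such cycles are exactly the directed cycles of the subgraph $G_q$ of the transition graph induced by $Q_q$; moreover each such cycle lies inside a single strongly connected component of $G_q$. For a fixed pair $(i,j)$ I assign to each edge of $G_q$ the integer weight $+1$ if it starts in a state of $Q_i$, $-1$ if it starts in $Q_j$, and $0$ otherwise; the total weight around a cycle is then exactly $|v_i| - |v_j|$. Thus $P_q$ separates $i$ and $j$ iff $G_q$ contains a directed cycle of nonzero weight.

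Finally I would detect nonzero-weight cycles in polynomial time. After computing the strongly connected components of $G_q$ it suffices to test each component $S$ separately. A strongly connected weighted digraph has all cycles of weight $0$ iff there is a potential $\phi \colon S \to \Z$ with $\phi(s') - \phi(s)$ equal to the weight of every edge $s \to s'$; if it exists, such a potential can be computed by fixing a base state, propagating values along a directed spanning arborescence, and then checking the potential equation on all remaining edges. Any violated edge closes a nonzero cycle, whereas if all edges are consistent every cycle has weight $0$. Running this check for every component and every pair $(i,j)$ takes time polynomial in the size of $\A$ and in $k$, and combining the outcomes yields $P_q$. (When no cycle reaches $q$, the separation condition fails for all pairs and $P_q$ is the one-block partition, as the definition demands.) The only nontrivial points, and thus the main ``obstacle'', are establishing the potential-function characterization of zero-weight cycles together with the observation that the pairwise separation relation is transitive and hence genuinely defines a partition; both are standard, so the whole procedure becomes routine once the reformulation of the first step is in place.
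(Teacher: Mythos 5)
Your proof is correct, but it follows a genuinely different route from the paper's. The paper first restricts attention to \emph{simple} cycles: for each state $p$ from which $q$ is reachable and each pair $(i,j)$ it checks, with a counter bounded by $|Q|$, whether some simple cycle on $p$ has $|v_i| \ne |v_j|$, and then proves that $P_q$ equals the meet of these simple-cycle partitions by a subcycle-removal argument (any cycle separating $i$ and $j$ must contain a simple subcycle doing so, since removing a non-separating simple subcycle preserves $|v_i| - |v_j| \ne 0$). You instead keep arbitrary cycles and recast the question, for each pair $(i,j)$, as nonzero-weight cycle detection in the subgraph induced by the states that reach $q$, solved via the standard potential-function characterization of zero-weight cycles on each strongly connected component. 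Your reformulation of cycle lengths as counts of transitions originating in $Q_i$ is exactly right for deterministic $k$-tape automata, your observation that the coarsest refinement over all cycles is the intersection of equivalence relations (so pairwise separation suffices) is sound, and your treatment of the degenerate case with no cycle reaching $q$ matches the lattice-theoretic convention. The paper's route buys a slightly more local check (simple cycles, hence an $\NL$-style bounded counter) at the cost of the subcycle-removal lemma; your route buys a cleaner correctness argument (telescoping potentials) at the cost of an explicit weighted-graph construction per pair. Both run in time polynomial in $|\A|$ and $k$.
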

\begin{proof}
The algorithm proceeds as follows.
As a first step we compute for each $q \in Q$ the coarsest refinement $S_q$ of all partitions that are induced by \emph{simple} cycles on $q$.
Check for every $1 \le i < j \le k$ if there exists a simple cycle $q \xrightarrow{\bm{v}} q$ such that $|v_i| \ne |v_j|$
and if so, store it as a constraint that $i$ and $j$ are in different blocks.
Then $S_q$ is the coarsest partition of $[1,k]$ that fulfills all stored constraints.
Note that the existence of a simple cycle $q \xrightarrow{\bm{v}} q$ with $|v_i| \ne |v_j|$ can be checked in nondeterministic logspace
by storing the current length difference of the words in the $i$-th and $j$-th component on the guessed path in a counter whose value is bounded by $|Q|$.

We claim that $P_q = S_{q_1} \sqcap \dots \sqcap S_{q_n}$ 
where $q_1,\dots,q_n$ are the states from which $q$ is reachable.
By definition, $P_q$ is finer than $S_{q_1} \sqcap \dots \sqcap S_{q_n}$.
For the other direction let $P$ be a partition induced by a cycle $c$ from which $q$ is reachable.
For the sake of contradiction assume that there exist $1 \le i < j \le k$ that are in different blocks in $P$ but in the same block in $S_{q_\ell}$ for all $\ell \in [1,n]$.
Since any cycle contains a simple cycle, there exists a simple cycle $p \xrightarrow{\bm{v}} p$ that is contained in $c$.
By assumption, it holds that $|v_i| = |v_j|$ which means that after removing $p \xrightarrow{\bm{v}} p$ from $c$, 
the cycle $c$ still induces a partition where $i$ and $j$ are in different blocks.
Furthermore, $q$ is still reachable from $c$.
We can repeat this argument until $c$ is a simple cycle inducing a partition where $i$ and $j$ are in different blocks, a contradiction.
\end{proof}

Recall that, for binary relations we tested recognizability for all states reachable from asynchronous cycles.
For higher arity relations, we need to test whether each relation $R_q$ conforms to $P_q$.

\begin{figure}
\centering
\begin{tikzpicture}[yscale=-1, every node/.style={inner sep=0,outer sep=0}]
\node[draw, fill = black!20, minimum height = 10pt, minimum width = 40pt] at (0,0) {\scriptsize $u_1$};
\node[draw, fill = black!20, minimum height = 10pt, minimum width = 50pt] at (5pt,10pt) {\scriptsize $u_2$};
\node[draw, fill = black!20, minimum height = 10pt, minimum width = 85pt] at (22.5pt,20pt) {\scriptsize $u_3$};
\node[draw, fill = black!20, minimum height = 10pt, minimum width = 85pt] at (22.5pt,30pt) {\scriptsize \qquad\qquad\qquad\qquad $u_4$};
\node[draw, minimum height = 10pt, minimum width = 20pt] at (30pt,0) {\scriptsize $s_{i,1}$};
\node[draw, minimum height = 10pt, minimum width = 20pt] at (40pt,10pt) {\scriptsize $s_{i,2}$};
\node[draw, minimum height = 10pt, minimum width = 105pt] at (92.5pt,0) {\scriptsize $\bot ~ ~ \bot ~ ~ \bot ~ ~ \bot ~ ~ \bot ~ ~ \bot ~ ~ \bot ~ ~ \bot ~ ~ \bot$};
\node[draw, minimum height = 10pt, minimum width = 95pt] at (97.5pt,10pt) {\scriptsize $\bot ~ ~ \bot ~ ~ \bot ~ ~ \bot ~ ~ \bot ~ ~ \bot ~ ~ \bot ~ ~ \bot$};
\node[draw, minimum height = 10pt, minimum width = 40pt] at (125pt,20pt) {\scriptsize $\bot ~ ~ \bot ~ ~ \bot$};
\node[draw, minimum height = 10pt, minimum width = 40pt] at (85pt,20pt) {\scriptsize $t_{i,j,3}$};
\node[draw, fill = black!20, minimum height = 10pt, minimum width = 60pt] at (95pt,30pt) {};
\node[draw, minimum height = 10pt, minimum width = 20pt] at (135pt,30pt) {\scriptsize $t_{i,j,4}$};
\draw[red,opacity=0.7,semithick] (65pt,-10pt) to (65pt,40pt);
\draw [decorate,decoration = {brace}] (65pt,40pt) -- (-20pt,40pt) node[pos=0.5,below=5pt]{\footnotesize $x_i$};
\draw [decorate,decoration = {brace}] (145pt,40pt) -- (65pt,40pt) node[pos=0.5,below=5pt]{\footnotesize $y_{i,j}$};
\end{tikzpicture}
\caption{If there are $h$ pairwise $\approx^{R_q}_{[1,2]}$-inequivalent tuples $\bm{s_1}, \dots, \bm{s_h}$
then also the Myhill-Nerode equivalence $\sim_{\otimes R}$ has at least $h$ classes (witnessed by $x_1, \dots, x_h$).}
\label{fig:completeness}
\end{figure}

\begin{lemma}\label{lem:completeness}
If $R_q$ does not conform to $P_q$ for some state $q \in Q$, then $R$ is not synchronous.
\end{lemma}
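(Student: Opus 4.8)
The plan is to show that non-conformance forces the convolution language $\otimes R$ to have infinitely many Myhill--Nerode classes, so that $\otimes R$ is not regular and hence $R$ is not synchronous. The first and most important step is to replace $P_q$, which is a refinement over \emph{all} cycles reaching $q$, by the partition induced by a \emph{single} cycle. Since $P_q = \bigsqcap_c P^{(c)}$, where $c$ ranges over the finitely many cycles from which $q$ is reachable and $P^{(c)}$ is the partition they induce, \Cref{thm:partition} applied inductively shows that if $R_q$ conformed to every $P^{(c)}$ then it would conform to $P_q$. As $R_q$ does \emph{not} conform to $P_q$, there is a single cycle $c$ from which $q$ is reachable with $R_q$ not conforming to $P^{(c)}$; in particular $P^{(c)}$ has at least two blocks. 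The payoff is twofold: $P^{(c)}$ groups the components according to the lengths $|v_i|$ of the cycle word $\bm{v} = (v_1,\dots,v_k)$, and this grouping can be turned into genuine, \emph{one-sided} delay merely by pumping $c$.

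Next I would extract a concrete bad split together with its growth direction. Let $G_1,\dots,G_s$ be the blocks of $P^{(c)}$ ordered by increasing length value and set $V_\ell = G_1 \cup \dots \cup G_\ell$. The partitions $\{V_\ell, [1,k]\setminus V_\ell\}$ for $\ell \in [1,s-1]$ generate $P^{(c)}$, so by \Cref{lem:conforms-inf-index} non-conformance of $R_q$ to $P^{(c)}$ yields an index $\ell$ with $\approx_{V_\ell}^{R_q}$ of infinite index. Put $B := V_\ell$ and $\bar{B} := [1,k]\setminus B$; then $B$ is a proper nonempty set consisting \emph{exactly} of the slowest-growing components of $c$, i.e.\ $\max_{m \in B}|v_m| < \min_{m' \in \bar{B}}|v_{m'}|$. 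Consequently, for every $C$ there is a run $q_0 \xrightarrow{\bm{u}} q$ with $\min_{m' \in \bar{B}}|u_{m'}| - \max_{m \in B}|u_m| \ge C$: read a fixed word to the base state of $c$, loop $c$ sufficiently often, and proceed to $q$.

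With $B$ and the delay in hand I would build the fooling set. Since $\approx_B^{R_q}$ has infinite index, for each $h$ there are tuples $\bm{s}_1,\dots,\bm{s}_h$ (indexed by $B$) that are pairwise inequivalent, witnessed by tuples $\bm{t}_{i,j}$ (indexed by $\bar{B}$) with $\bm{s}_i \odot_B \bm{t}_{i,j} \in R_q \iff \bm{s}_j \odot_B \bm{t}_{i,j} \notin R_q$. Fix a run $q_0 \xrightarrow{\bm{u}} q$ as above with $C$ exceeding every $|s_{i,m}|$, and set the cut column $p := \min_{m' \in \bar{B}}|u_{m'}|$. For each $i$, let $x_i$ be the length-$p$ prefix of the convolution of $\bm{u}(\bm{s}_i \odot_B \bm{t}_{i,j})$. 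Because every $B$-component $u_m s_{i,m}$ has length at most $p$ while every $\bar{B}$-component has length at least $p$, the prefix $x_i$ sees only $u_m s_{i,m}$ (padded by $\bot$) on $B$ and the length-$p$ prefix of $u_{m'}$ on $\bar{B}$; hence $x_i$ depends only on $i$ and all $x_i$ have length $p$. The remaining suffix is $\bot$ on $B$ and carries the shared completion $\bm{t}_{i,j}$ on $\bar{B}$, so it is a common suffix $w$ for $x_i$ and $x_j$. Then $x_i w$ and $x_j w$ are the convolutions of $\bm{u}(\bm{s}_i \odot_B \bm{t}_{i,j})$ and $\bm{u}(\bm{s}_j \odot_B \bm{t}_{i,j})$, which lie on opposite sides of $\otimes R$ by determinism of $\A$ and the choice of $\bm{t}_{i,j}$. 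Thus $x_1,\dots,x_h$ are pairwise Myhill--Nerode inequivalent for $\otimes R$; as $h$ is arbitrary, $\otimes R$ is not regular and $R$ is not synchronous.

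The main obstacle is orienting the delay so that the $i$-dependent data $\bm{s}_i$ lands in the \emph{prefix} while the pair-dependent witness $\bm{t}_{i,j}$ lands in the \emph{suffix}; this forces $B$ (where $\bm{s}_i$ sits) to be the short side and $\bar{B}$ the long side, and for an arbitrary bad block of $P_q$ it is simply \emph{false} that it can be pushed behind its complement, nor can arbitrary drift directions be realized because cycles living on different branches cannot be chained. The single-cycle reduction via \Cref{thm:partition} is precisely what resolves both issues at once: the blocks of $P^{(c)}$ are linearly ordered by $|v_i|$, the bad set $V_\ell$ is automatically a down-set in this order, and pumping the one cycle $c$ realizes the required one-sided delay without combining incompatible cycles. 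The only remaining work—checking the column bookkeeping at the cut and that the convolution genuinely splits as $x_i \cdot w$—is routine.
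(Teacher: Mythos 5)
Your proposal is correct and follows essentially the same route as the paper's proof: reduce to a single inducing cycle via \cref{thm:partition}, use \cref{lem:conforms-inf-index} to find a prefix-union (down-set) of blocks $B$ on which $\approx_B^{R_q}$ has infinite index, pump the cycle to create one-sided delay exceeding the lengths of the $\bm{s}_i$, and build a Myhill--Nerode fooling set for $\otimes R$ by cutting the convolution at the column where the $B$-components end. The only difference is cosmetic: the paper permutes components so the blocks become intervals $[1,m]$ and $[m+1,k]$, whereas you keep $B$ abstract and use $\odot_B$.
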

\begin{proof}
Assume that $R_q$ does not conform to $P_q$.
By \Cref{thm:partition} there exists a partition $P$ induced by
a cycle $p \xrightarrow{\bm{v}} p$ so that $q$ is reachable from $p$
and $R_q$ does not conform to $P$.
By permuting components, we can assume that $|v_1| \le \dots \le |v_k|$.
Hence $P$ is a partition of $[1,k]$ into intervals $B_1, \dots, B_n$, which are listed in ascending order.
Since the intervals $B_1 \cup \cdots \cup B_i$ for $i \in [1,n]$ generate $P$,
there exists an index $r \in [1,n]$ such that
$\approx_{B_1 \cup \cdots \cup B_r}^{R_q}$ has infinite index by \Cref{lem:conforms-inf-index}.
Set $[1,m] \define = B_1 \cup \cdots \cup B_r$.
Observe that for any number $b \in \N$ there exists a run $q_0 \xrightarrow{\bm{u}} q$
such that $|u_i| + b \le |u_j|$ for all $i \in [1,m]$ and $j \in [m+1,k]$.
Such runs can be constructed by traversing the cycle $p \xrightarrow{\bm{v}} p$ sufficiently often.

We show that for every $h \in \N$, the Myhill-Nerode equivalence relation $\sim_{\otimes R}$
of the language $\otimes R$ of convolutions has at least $h$ classes.
This proves that $\otimes R$ is not regular and therefore $R$ is not synchronous.

Let $h \in \N$.
Since $\approx_{[1,m]}^{R_q}$ has infinite index
there are tuples $\bm{s_i} \define= (s_{i,1},\dots,s_{i,m})$ for $i \in [1,h]$ and
$\bm{t_{i,j}} \define= (t_{i,j,m+1},\dots,t_{i,j,k})$ for $1 \le i < j \le h$
such that $(\bm{s_i},\bm{t_{i,j}}) \in R_q$ if and only if $(\bm{s_j}, \bm{t_{i,j}}) \notin R_q$ for all $1 \le i < j \le h$.

Let $b \define = \max\{ |s_{i,j}| \mid i \in [1,h], \, j \in [1,m]\}$.
By the observation above there exists a run $q_0 \xrightarrow{\bm{u}} q$
such that $|u_i| + b \le |u_j|$ for all $i \in [1,m]$ and $j \in [m+1,k]$.
Therefore, there exists a number~$\ell$ such that all words in the $m$-tuple $(u_1, \dots, u_m) \bm{s_i}$
have length at most $\ell$, and all words in the $(k-m)$-tuple $(u_{m+1}, \dots, u_k)$
have length at least $\ell$, see~\Cref{fig:completeness} for an illustration.
Since $\A$ is deterministic
we have $\bm{u} (\bm{s_i},\bm{t_{i,j}}) \in R$ if and only if $\bm{u} (\bm{s_j},\bm{t_{i,j}}) \notin R$
for all $1 \le i < j \le h$.
This can be turned into a proof that $\otimes R$ has at least $h$ Myhill-Nerode classes.
For a word $w$ of length at least $\ell$,
we denote by $\pre_\ell(w)$ the prefix of $w$ of length $\ell$ and by $\suf_\ell(w)$ the suffix of $w$ after $\pre_\ell(w)$.
For all $i \in [1,h]$ define
\begin{align*}
x_i \define=&\ u_1 s_{i,1} \otimes \dots \otimes u_m s_{i,m} \otimes \pre_\ell (u_{m+1}) \otimes \dots \otimes \pre_\ell (u_k)
\end{align*}
and for all $1 \le i < j \le h$ define
\begin{align*}
y_{i,j} \define=&\ \varepsilon \otimes \dots \otimes \varepsilon \\
&\otimes \suf_\ell (u_{m+1} t_{i,j,m+1}) \otimes \dots \otimes \suf_\ell (u_k t_{i,j,k}).
\end{align*}
Observe that $x_i y_{i,j}$ and $x_j y_{i,j}$ are the convolutions of the tuples
$\bm{u} (\bm{s_i},\bm{t_{i,j}})$ and $\bm{u} (\bm{s_j},\bm{t_{i,j}})$, respectively,
and therefore $x_i \not \sim_{\otimes R} x_j$ for all $1 \le i < j \le h$.
\end{proof}

Testing whether a relation conforms to a partition is an, a priori, more difficult problem than recognizability,
and it is not clear how to decide it for deterministic rational relations.
Instead, we will summarize the components inside each partition block into a single component,
and test recognizability for the summarized relation.

In the following we always assume that the blocks of a partition $P = \{B_1,\dots,B_n\}$ are ordered so that
$\min(B_1) < \dots < \min(B_n)$,
and each block $B_i = \{b_{i,1},\dots,b_{i,|B_i|}\}$ is given such that $b_{i,1} < \dots < b_{i,|B_i|}$.
For a relation $R \subseteq (\Sigma^*)^k$ and the partition $P$ of $[1,k]$ as above we define the \emph{summarized relation}
\begin{multline*}
R^P \define = \{(u_{b_{1,1}} \otimes \dots \otimes u_{b_{1,|B_1|}},\dots,u_{b_{n,1}} \otimes \dots \otimes u_{b_{n,|B_n|}}) \\ \mid (u_1,\dots,u_k) \in R\}.
\end{multline*}
We write $R_q^\otimes$ for $R_q^{P_q}$.
Under the assumption that $R_q^\otimes$ is deterministic rational,
we can test if $R_q$ conforms to $P_q$.

\begin{lemma}\label{lem:rec-conforms}
If $R^P$ is deterministic rational,
then $R^P$ is recognizable if and only if $R$ conforms to $P$.
\end{lemma}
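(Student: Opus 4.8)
The plan is to show that both ``$R^P$ is recognizable'' and ``$R$ conforms to $P$'' are equivalent to one and the same combinatorial condition --- finite index of a family of equivalence relations --- and to let the hypothesis that $R^P$ is deterministic rational do the work of turning this condition into genuine recognizability. Write $B_1,\dots,B_n$ for the blocks of $P$ and let $\gamma$ denote the block-wise convolution map sending a tuple $(u_1,\dots,u_k)\in(\Sigma^*)^k$ to the $n$-tuple whose $i$-th entry is $u_{b_{i,1}}\otimes\dots\otimes u_{b_{i,|B_i|}}$, so that $R^P=\gamma(R)$. Note that $\gamma$ is a bijection from $(\Sigma^*)^k$ onto the set of tuples whose $i$-th entry is a well-formed convolution of $|B_i|$ words, and that restricted to a single block it yields a bijection $\gamma_j\colon(\Sigma^*)^{|B_j|}\to V_j$, where $V_j$ is the (regular) set of well-formed convolutions in the $j$-th summarized component.

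First I would establish, for each $j\in[1,n-1]$, a correspondence between $\approx_{B_j}^R$ and $\approx_j^{R^P}$ induced by $\gamma$: I claim $\bm x\approx_{B_j}^R\bm y$ if and only if $\gamma_j(\bm x)\approx_j^{R^P}\gamma_j(\bm y)$. The point is that as the witness $\bm z$ ranges over $(\Sigma^*)^{k-|B_j|}$ in the definition of $\approx_{B_j}^R$, its block-wise convolution ranges exactly over the well-formed convolutions in the remaining summarized components, and membership $\bm x\odot_{B_j}\bm z\in R$ transfers to $\gamma_j(\bm x)\odot_{\{j\}}\gamma(\bm z)\in R^P$ directly from the definition of $R^P$. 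Convolutions that are \emph{not} well-formed never lie in $R^P$, so they distinguish no pair and form a single additional $\approx_j^{R^P}$-class. Hence $\gamma_j$ induces a bijection between the classes of the two relations up to this one extra class, and in particular $\approx_{B_j}^R$ has finite index if and only if $\approx_j^{R^P}$ does.

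With this correspondence the lemma follows by chaining two finite-index characterizations that are already available. Since the partition $P$ is generated by $B_1,\dots,B_{n-1}$, \Cref{lem:conforms-inf-index} gives that $R$ conforms to $P$ if and only if $\approx_{B_j}^R$ has finite index for all $j\in[1,n-1]$; by the correspondence this is equivalent to $\approx_j^{R^P}$ having finite index for all $j\in[1,n-1]$. Finally, regarding $R^P$ as a relation over a single finite alphabet (the union of the per-block convolution alphabets) and using the hypothesis that it is deterministic rational, hence in $\Rat$, \Cref{prop:rat-fin-idx} (condition~3) turns this finite-index condition into recognizability of $R^P$. Reading the chain of equivalences in both directions closes the proof.

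The step needing the most care is the correspondence in the second paragraph, namely the bookkeeping around the partial domain: $R^P$ lives only on well-formed convolutions, whereas the equivalence relations of \Cref{sec:finite-index} quantify the witness over the full product of free monoids. I would make this precise by noting that $V_j$ is regular and that every ill-formed tuple lies outside $R^P$, so restricting the quantification to well-formed witnesses alters neither $\approx_j^{R^P}$ nor its finiteness. The genuinely essential use of the hypothesis is the last step: finite index alone is a purely combinatorial property that yields only a Boolean combination of Cartesian products of \emph{arbitrary} sets, and it is exactly the rationality of $R^P$, through \Cref{prop:rat-fin-idx}, that upgrades the resulting equivalence classes to \emph{regular} factors and hence to recognizability in the sense of \Cref{sec:classes}.
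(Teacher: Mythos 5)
Your proof is correct and follows essentially the same route as the paper's: both chain \cref{prop:rat-fin-idx} and \cref{lem:conforms-inf-index} through a transfer of finite index along the block-wise convolution, the only difference being that you use the singleton generators $B_1,\dots,B_{n-1}$ (condition~3) where the paper uses the prefix unions $B_1\cup\dots\cup B_i$ (condition~2). Your explicit treatment of the ill-formed convolutions, which the paper leaves implicit in its final sentence, is a welcome addition.
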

\begin{proof}
Let $P = \{B_1,\dots,B_n\}$.
By \Cref{prop:rat-fin-idx} the summarized relation $R^P$ is recognizable if and only if 
$\approx_{[1,i]}^{R^P}$ has finite index for all $i \in [1,n-1]$.
Let $B_{[1,i]} \define= B_1 \cup \dots \cup B_i$.
By \Cref{lem:conforms-inf-index} we have that $R$ conforms to $P$ if and only if $\approx_{B_{[1,i]}}^R$ has finite index for all $i \in [1,n-1]$.
The claim follows since $\approx_{[1,i]}^{R^P}$ has finite index if and only if $\approx_{B_{[1,i]}}^R$ has finite index.
\end{proof}

We are now ready to give the reduction from synchronicity to recognizability proving \Cref{thm:drat-sync}.
For a partition $P$ of $[1,k]$ we write $Q_P \define= \{q \in Q \mid P_q = P\}$.
We partition $Q$ into layers $L_1, \dots, L_k$ where $L_t = \{ q \in Q \mid |P_q| = t \}$.
Observe that all states reachable from a state $q \in L_t$ are contained in $L_t \cup \cdots \cup L_k$.
The algorithm processes the layers $L_k, L_{k-1}, \dots, L_1$ in descending order.
For each state $q$ in layer $L_t$ the algorithm
(i) constructs a deterministic multitape automaton $\A^\otimes_q$ for $R_q^\otimes$,
(ii) tests whether $R_q^\otimes$ is recognizable,
(iii) and, if so, constructs an independent multitape automaton $\I^\otimes_q$ for $R_q^\otimes$.
For layer $L_k$ the automaton $\A^\otimes_q$ is simply the automaton $\A$ with initial state $q$.
For the other layers the automaton $\A^\otimes_q$ will be built from the automata $\I^\otimes_{q'}$
from the previous layers, which will be explained below (\cref{lem:construction}).
The automata $\I^\otimes_q$ are constructed from $\A^\otimes_q$ using \Cref{thm:twoexp}, which is correct under the assumption
that $R_q^\otimes$ is indeed recognizable.
If one of the recognizability tests is negative, the algorithm terminates and reports that $R$ is not synchronous.
Otherwise, if all recognizability tests succeed, the algorithm reports that $R$ is synchronous.

Let us argue that the algorithm is correct.
If one of the relations $R_q^\otimes$ is deterministic rational but not recognizable,
then $R$ is indeed not synchronous by \Cref{lem:rec-conforms,lem:completeness}.
If all recognizability tests succeed, then in particular the relation $R_{q_0}^\otimes$ is recognizable.
This easily implies synchronicity of $R$.
In fact, we can make the algorithm slightly more efficient.
Observe that the recognizability tests in layer $L_1$ will always be positive
since the relations $R_q^\otimes$ in layer $L_1$ are regular languages.
Therefore, we can skip processing the last layer $L_1$ and also skip constructing the independent multitape automata in layer $L_2$.

It remains to show how to construct the automaton $\A_q^\otimes$ witnessing that $R_q^\otimes$ is deterministic rational.
\begin{restatable}{lemma}{lemconstruction}\label{lem:construction}
Given $q \in L_t$ and independent multitape automata $\I_{q'}^\otimes$ for $R_{q'}^\otimes$ for all $q' \in L_{t+1} \cup \cdots \cup L_k$,
one can compute $\A_q^\otimes$ in time polynomial in the sizes of the $\I_{q'}^\otimes$ and exponential in the size of $\A$.
\end{restatable}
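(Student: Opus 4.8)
The plan is to build $\A_q^\otimes$ by simulating $\A$ from $q$ for as long as the simulated state keeps the partition $P := P_q$, and to hand the computation over to the precomputed independent automaton $\I_{q'}^\otimes$ as soon as $\A$ enters a state $q'$ with $P_{q'} \neq P$. Since the cycles from which $q$ is reachable are among those from which any $q'$ reachable from $q$ is reachable, the partition $P_{q'}$ refines $P$; in particular, if $P_{q'} \neq P$ then $q'$ lies in a strictly higher layer, so $\I_{q'}^\otimes$ is available by assumption. Thus $R_q$ decomposes into a ``$P$-region'' part (states reachable from $q$ whose partition is still $P$) followed, at each exit, by the behaviour $R_{q'}$ already captured by $\I_{q'}^\otimes$.

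First I would establish bounded within-block delay inside the $P$-region. The key observation is that any cycle $p \xrightarrow{\bm{c}} p$ lying entirely in the $P$-region is balanced on each block of $P$: indeed $p$ is reachable from this cycle, so the cycle contributes to $P_p = P$, whence $|c_i| = |c_j|$ whenever $i,j$ lie in the same block of $P$. Consequently, along any run inside the $P$-region the difference between the numbers of letters read in two components of the same block is bounded by $|Q|$, since only the acyclic part of the run contributes to the imbalance. This is exactly the bounded-delay situation of \cite[Section~3]{FrougnyS93}: within each block the heads stay at bounded delay, so the block can be read synchronously, as a convolution, using per-component buffers of size at most $|Q|$.

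Concretely, $\A_q^\otimes$ is a deterministic $|P|$-tape automaton with one tape per block of $P$. In its first mode a state records the current state of $\A$ together with the bounded within-block buffers. A transition reads one convolution symbol from the tape of the block containing the component $\A$ wants to read next, appends the obtained letters to the respective buffers, and then internally advances $\A$ as far as the buffers allow; by bounded delay this internal progress is finite and the buffers stay bounded, so the whole step is a well-defined deterministic transition, and a first-mode state is accepting iff the simulated $\A$-state is final with all components exhausted. As soon as a simulated transition enters a state $q'$ with $P_{q'} \neq P$, the automaton switches to its second mode: it initialises $\I_{q'}^\otimes$ and feeds the still-buffered letters into the corresponding DFAs (a bounded, hence internal, computation), and from then on a state records $q'$ together with the tuple of current DFA states of $\I_{q'}^\otimes$ and a per-component exhaustion flag. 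Since $P_{q'}$ refines $P$, every block of $P_{q'}$ lies inside a single block $B_s$ of $P$, so reading one convolution symbol from tape $s$ supplies one letter to each component of every $P_{q'}$-block inside $B_s$; these letters are handed to the independent DFAs of $\I_{q'}^\otimes$, and a $P_{q'}$-block DFA is frozen once all its components are padded. Finally $\A_q^\otimes$ accepts iff the resulting tuple of DFA states lies in the accepting set of $\I_{q'}^\otimes$.

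For correctness one checks that $\bm{u}$ is accepted by $\A_q^\otimes$ iff $\bm{u} \in R_q$: the first mode faithfully simulates $\A$ inside the $P$-region by the bounded-delay argument, and on leaving the region at $q'$ the component suffixes, regrouped according to $P_{q'}$, are exactly the input fed to $\I_{q'}^\otimes$, which recognises $R_{q'}^{P_{q'}}$; the buffer flush guarantees that no letter is lost or duplicated, and the convolution read from the coarser $P$-tape restricts correctly to each finer $P_{q'}$-block because the latter is a projection of the former, the surplus padding positions merely freezing already-finished block DFAs. For the size, the first mode contributes at most $|Q| \cdot |\Sigma|^{O(k|Q|)}$ states from the buffers, i.e.\ exponentially many in $|\A|$, while for each target $q'$ the second mode contributes a configuration of $\I_{q'}^\otimes$, i.e.\ at most $\prod_i |Q'_i|$ states, which is polynomial in $|\I_{q'}^\otimes|$ for fixed $k$; the whole construction runs in time polynomial in the sizes of the $\I_{q'}^\otimes$ and exponential in $|\A|$. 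The main obstacle is precisely this interface between the two modes: arguing that the within-block delay accumulated in the first mode is bounded so that only finitely much must be flushed into $\I_{q'}^\otimes$, and that feeding the coarser $P$-tapes into the finer independent DFAs of $\I_{q'}^\otimes$ — with the differing convolution paddings of $P$ and $P_{q'}$ — reproduces exactly $R_{q'}^{P_{q'}}$ on the component suffixes. The remaining work is routine, if tedious, bookkeeping of buffers and determinisation.
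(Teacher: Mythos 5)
Your proposal is correct and follows essentially the same route as the paper's construction: a two-mode deterministic automaton that simulates $\A$ within the region of states whose partition equals $P_q$ using per-component queues of length at most $|Q|$ (justified by the bounded within-block delay on cycles in that region), and hands over to the precomputed independent automaton $\I_{q'}^\otimes$ upon entering a state $q'$ with a strictly finer partition, flushing the buffers into its component DFAs and tracking exhaustion flags. The complexity accounting (exponential in $|\A|$ from the buffers, polynomial in the $\I_{q'}^\otimes$ from their state tuples) also matches the paper's.
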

The detailed construction can be found in \Cref{sec:app-sync-rec}.
The idea is that $\A_q^\otimes$ consists of two parts.
Let $q \in Q_P$ for a partition $P = \{B_1,\dots,B_n\}$.
In the first part $\A_q^\otimes$ simulates $\A$ over the states in $Q_P$.
In that part it is possible for $\A_q^\otimes$ to read the components within a class of $P$ synchronously
since by definition of $Q_P$ the difference of the tape positions between those components is bounded by $|Q|$.
Thus, it suffices for $\A_q^\otimes$ to store a word (read like a queue) of length at most $|Q|$ for each component
and simulate $\A$ either on the next symbol in the queue or on the current input symbol if the queue is empty.
If it simulates $\A$ on the input symbol of the corresponding tape, we store the symbols read in the other components in the queues of that components.
The simulation of $\A$ on the queue is handled like an $\varepsilon$-transition where nothing is read from the input.
Those $\varepsilon$-transitions can be removed and do not lead to nondeterminism since there is no branching of $\varepsilon$-transitions possible.
If the simulation of $\A$ leads to a state $q'$ that is not contained in $Q_P$, 
i.e., $q' \in Q_{P'}$ for some partition $P' = \{B'_1,\dots,B'_{n'}\}$ that is strictly finer than $P$,
then $\A_q^\otimes$ changes to the second part.
In the second part $\A_q^\otimes$ simulates the independent $n'$-tape automaton $\I_{q'}^\otimes = (\A_1,\dots,\A_{n'},F_{q'})$.
For each class $B_i$, for $i = 1,\dots,n$, the DFAs $\A_j$ that are responsible for components contained in $B_i$ are simulated in parallel
either on the queue or the current input symbol.
After the whole input was read, $\A_q^\otimes$ checks with final states whether the remaining content of the queues leads in each $\A_i$ to some state $f_i$
such that $(f_1,\dots,f_{n'}) \in F_{q'}$.

Finally, we argue that the running time of the algorithm is $2(k-2)$-fold exponential in the automaton size $|\A|$.
Inductively, we prove that in layer $L_t$ we can construct the automata $\A_q^\otimes$ in $2(k-t)$-fold exponential time
and the automata $\I_q^\otimes$ in $2(k-t+1)$-fold exponential time.
In particular, the automata sizes are bounded by their respective construction times.
In layer $L_k$ the automata $\A_q^\otimes$ are constructed in polynomial time.
In the other layers $L_t$ the automata $\A_q^\otimes$ are constructed by \cref{lem:construction}
in $2(k-t)$-fold exponential time.
Each automaton $\I_q^\otimes$ is constructed in double exponential time in the size of $\A_q^\otimes$ (\Cref{thm:twoexp}),
which is $2(k-t+1)$-fold exponential in $|\A|$.
Furthermore, each recognizability test on $\A_q^\otimes$ in layer $L_t$ where $t \in [3,k]$
takes double exponential time in $|\A_q^\otimes|$ by \cref{thm:drat-recognizability},
which is $2(k-t+1)$-fold exponential in $|\A|$.
The relations $R_q^\otimes$ in layer $L_2$ are binary and therefore recognizability can be tested
in polynomial time in $|\A_q^\otimes|$, which is $2(k-2)$-fold exponential in $|\A|$.


\section*{Acknowledgments}
The authors thank Stefan Göller, Anthony W. Lin, and Georg Zetzsche for helpful discussions.

\begin{wrapfigure}[4]{r}{0.05\textwidth}
\vspace{-15pt}
  \begin{center}
    \includegraphics[width=0.05\textwidth]{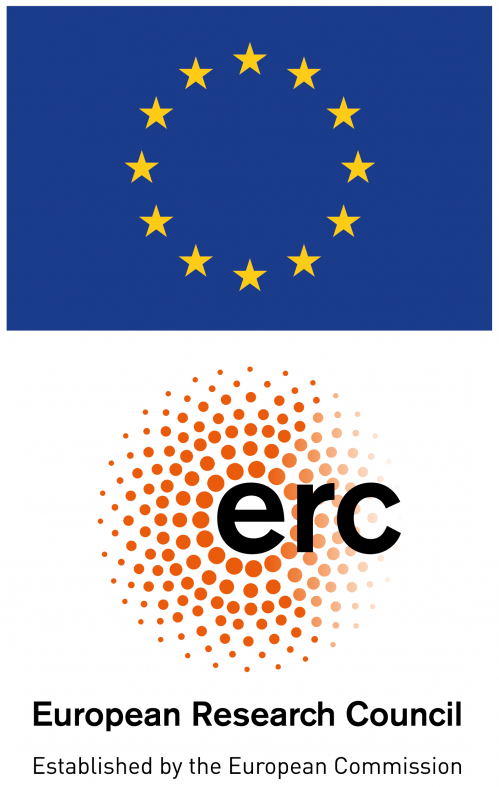}
  \end{center}
\end{wrapfigure}
This work is funded by the European Union (ERC, AV-SMP, 759969 and ERC, FINABIS, 101077902).
Views and opinions expressed are however those of the author(s) only and do not
necessarily reflect those of the European Union or the European Research Council Executive Agency.
Neither the European Union nor the granting authority can be held responsible for them.



\bibliographystyle{IEEEtran}
\bibliography{IEEEabrv,bib}

\appendix




\subsection{Discussion of Proof of Lemma~3.5 in \cite{CartonCG06}}
\label{sec:app-ccg}

At the end of the proof of Lemma~3.5 in \cite{CartonCG06} we have the following setting:
There exists a cycle $q \xrightarrow{(u_2,\bm{t}_2)} q$ where $u_2$ is null-transparent,
and either $u_2u_3vw \not \approx_1^{R_q} u_3vw$ or $u_2u_3w \not \approx_1^{R_q} u_3w$.
Observe this precisely matches the situation in point \ref{it:sep-one}) of \cref{lem:carton}.
Let us assume that $u_2u_3vw \not \approx_1^{R_q} u_3vw$ holds,
i.e.\ there exists $\bm{z} \in (\Sigma^*)^{k-1}$ such that
$R_q$ separates $(u_2u_3vw,\bm{z})$ and $(u_3vw,\bm{z})$.
First we have $u_2^{i+1} u_3vw \not \approx_1^{R_q} u_2^i u_3 vw$ for all $i \ge 0$
since $R_q$ separates $(u_2^{i+1} u_3vw,\bm{t}_2^i \bm{z})$ and $(u_2^i u_3vw,\bm{t}_2^i \bm{z})$.
Then, it is claimed that $u_2^{i+j} u_3vw \not \approx_1^{R_q} u_2^i u_3 vw$ for all $i \ge 0$, $j > 0$.
Towards a contradiction assume that $u_2^{i+K} u_3vw \approx_1^{R_q} u_2^i u_3 vw$ for some $i \ge 0$, $K > 0$.
Then, it is deduced that $u_2^{i+\lambda K} u_3vw \approx_1^{R_q} u_2^i u_3 vw$
and $u_2^{i+1+\lambda K} u_3vw \approx_1^{R_q} u_2^{i+1} u_3 vw$ for all $\lambda \ge 0$.
The authors of \cite{CartonCG06} seem to assume that $\approx_1$ is a \emph{left-congruence},
i.e.\ $x \approx_1 y$ implies $ax \approx_1 ay$, which is not true in general.

As a simple fix, one can replace $\approx_1^{R_q}$ by the following left-congruence
\[
	x~\dot \approx_1^{R_q} y \define \iff \text{for all } z \in \Sigma^* \colon zx \approx_1^{R_q} zy.
\]
With this replacement, the verbatim proof in \cite{CartonCG06} shows that $\dot \approx_1^{R_q}$ has infinite index.
It is not difficult to show that this contradicts the assumption made in the lemma that $\approx_1^R$ has finite-index.


\subsection{Proofs for \Cref{sec:rec-sync}}
\label{sec:app-rec-sync}

\recsync*

\begin{proof}

Let $R \subseteq (\Sigma^*)^k$ be a rational relation and $\Sigma' \define= \Sigma \cup \{\#,\vdash\}$ with $\#,\vdash \notin \Sigma$.
We show that $R$ is recognizable if and only if 
\[
R' \define= \{(\#^{n_1}\vdash w_1,\dots,\#^{n_k}\vdash w_k) \mid n_i \ge 0 \text{ and } \bm{w} \in R\}
\]
is synchronous.
Since from a $k$-tape automaton for $R$ we can easily construct a $k$-tape automaton for $R'$ in logspace
while preserving determinism,
this shows that recognizability is logspace reducible to synchronicity for both relations in $\Rat$ and relations in $\DRat$.

For the ``only if'' direction assume that $R = \bigcup_{i=1}^n L_{i,1} \times \dots \times L_{i,k}$ for regular languages $L_{i,j} \subseteq \Sigma^*$.
Then $R' = \bigcup_{i=1}^n \{\#\}^* \{\vdash\} L_{i,1} \times \dots \times \{\#\}^* \{\vdash\} L_{i,k}$
where the $\{\#\}^* {\{\vdash\}} L_{i,j} \subseteq (\Sigma')^*$ are clearly regular languages.
Thus, $R'$ is recognizable and therefore also synchronous.

For the ``if'' direction we assume that $R$ is not recognizable which by \Cref{prop:rat-fin-idx} means that
$\approx_r$ has infinite index for some $r \in [1,k-1]$.
Thus, there are words $v_1, v_2, \dots \in \Sigma^*$ and tuples $\bm{w_{i,j}} = (w_{i,j,1},\dots,w_{i,j,k-1}) \in (\Sigma^*)^{k-1}$
such that $v_i \odot_{r} \bm{w_{i,j}} \in R$ if and only if $v_j \odot_{r} \bm{w_{i,j}} \notin R$ for all $i < j$.
We show that the Myhill-Nerode equivalence relation $\sim_{\otimes R'}$ of the language of convolutions $\otimes R'$ has infinite index.
This implies that $\otimes R'$ cannot be regular and therefore $R'$ is not synchronous.
Let 
\[x_i \define= \#^{n_i} \vdash \otimes \dots \otimes \#^{n_i} \vdash \otimes \vdash v_i \otimes \#^{n_i} \vdash \otimes \dots \otimes \#^{n_i} \vdash\]
with $\vdash v_i$ in the $r$-th component and $n_i \define= |v_i|$ for all $i \ge 1$
and let
\[y_{i,j} \define= w_{i,j,1} \otimes \dots \otimes w_{i,j,r-1} \otimes \varepsilon \otimes w_{i,j,r} \otimes \dots \otimes w_{i,j,k-1}\]
for all $i < j$.
Then we have that $x_i y_{i,j} \in \otimes R'$ if and only if $x_j y_{i,j} \notin \otimes R'$
which means that $x_i \not \sim_{\otimes R'} x_j$ for all $i < j$.

\end{proof}


\label{sec:app-sync-rec}

\lemconstruction*

\begin{proof}

Let $R$ be given by a deterministic $k$-tape automaton $\A = (Q,\Sigma,q_0,\delta,F)$ with partition of states $Q = Q_1 \uplus \dots \uplus Q_k$.
Let $q \in Q_P$ for a partition $P = \{B_1,\dots,B_t\}$ of $\{1,\dots,k\}$.
We show how to construct a deterministic $t$-tape automaton 
$\A_q^\otimes = (Q',\Sigma',q'_0,\delta',F')$ with $Q' = Q'_1 \uplus \dots \uplus Q'_t$
and $\Sigma' \define= \Sigma_\bot^{|B_1|} \cup \dots \cup \Sigma_\bot^{|B_t|} \cup {\{\dashv\}}$
where $\Sigma_\bot \define= \Sigma \cup \{\bot\}$ and $\bot,\dashv \notin \Sigma$
that recognizes the relation $\{\bm{w}(\dashv,\dots,\dashv) \mid \bm{w} \in R_q^\otimes\}$
assuming that we already constructed independent multitape automata $\I_{q'}^\otimes$ for $q' \in L_{t+1} \cup \cdots \cup L_k$.
The automaton $\A_q^\otimes$ consists of two parts.
The first part simulates $\A$ over the states in $Q_P$ and the second part simulates the independent multitape automata of states of higher layers than $q$.

\paragraph*{First part}
The first part of $\A_q^\otimes$ consists of states of the form $(p,\bm{w})$
with $p \in Q_P$ and $w_i \in \Sigma_\bot^{\le |Q|}$ for $1 \le i \le k$.
Intuitively, $p$ stores the state of $\A$ the simulation is currently at and
$\bm{w}$ stores a queue for every component of $\A$ of symbols that $\A$ still has to be simulated on.
The initial state is $q'_0 \define= (q,\varepsilon,\dots,\varepsilon)$.
For input $a' = (a_1,\dots,a_{|B_i|}) \in \Sigma_\bot^{|B_i|}$
for $i \in [1,t]$ and $a_j \in \Sigma$ for $j \in [1,|B_i|]$ and
state $(p,\bm{w})$ with $p \in Q_{b_{i,j}}$, $w_{b_{i,j}} = \varepsilon$, and 
$p' \define= \delta(p,a_j) \in Q_P$ we let $(p,\bm{w}) \in Q'_i$ and
$\delta'((p,\bm{w}),a') \define= (p',\bm{w'})$ where
\[
w'_{b_{i',j'}} \define= 
\begin{cases}
w_{b_{i,j'}} a_{j'}, &\text{if } i' = i \text{ and } j' \ne j \\
w_{b_{i',j'}}, &\text{otherwise}
\end{cases}
\]
for all $i' \in [1,t]$ and $j' \in [1,|B_{i'}|]$.
For state $(p,\bm{w})$ with $p \in Q_i$ for some $i \in [1,k]$, $w_i = au$ for some $a \in \Sigma$ and $u \in \Sigma_\bot^*$, and 
$p' \define= \delta(p,a) \in Q_P$ we define
$\delta'((p,\bm{w}),\varepsilon) \define= (p',w_1,\dots,w_{i-1},u,w_{i+1},\dots,w_k)$.
Note that these $\varepsilon$-transitions can be eliminated without introducing nondeterminism since there is no branching possible.
On each state of the form $(p,\varepsilon,\dots,\varepsilon)$ with $p \in F$ we append a chain of transitions reading the endmarker $\dashv$ in every component
and mark the last state of that chain (which is a sink state) as final.

\paragraph*{First to second part}
We now define the transitions from states of the first part to states of the second part.
Let $p' \in Q_{P'}$ for partition $P' = \{B'_1,\dots,B'_{t'}\}$ that is strictly finer than $P$ and
$\I_{p'}^\otimes = (\A_1,\dots,\A_{t'},F_{p'})$ be an independent $t'$-tape automaton for $R_{p'}^\otimes$ with $\A_i = (Q^i,\Sigma_\bot^{|B'_i|},\delta^i,q_0^i)$.
The second part consists of states of the form $(q^1,\dots,q^{t'},\bm{w},\bm{e},m)$ with
$q^i \in Q^i$ for $1 \le i \le t'$, $\bm{w}$ is as in the first part, $e_i \in \{0,1\}$ for $1 \le i \le t'$, and $1 \le m \le t+1$.
Intuitively, $q^i$ is the state the simulation of $\A_i$ is currently at, $e_i$ stores in a bit whether the simulation of $\A_i$ is finished, and 
$m$ indicates which component of $\A_q^\otimes$ is currently read.
For input $a' = (a_1,\dots,a_{|B_i|}) \in \Sigma_\bot^{|B_i|}$
for $i \in [1,t]$ and $a_j \in \Sigma$ for $j \in [1,|B_i|]$ and
state $(p,\bm{w})$ with $p \in Q_{b_{i,j}}$, $w_{b_{i,j}} = \varepsilon$, and 
$\delta(p,a_j) = p'$ we let $(p,\bm{w}) \in Q'_i$ and
$\delta'((p,\bm{w}),a') \define= (q_0^1,\dots,q_0^{t'},\bm{w'},0,\dots,0,1)$ where
\[
w'_{b_{i',j'}} \define= 
\begin{cases}
w_{b_{i,j'}} a_{j'}, &\text{if } i' = i \text{ and } j' \ne j \\
w_{b_{i',j'}}, &\text{otherwise}
\end{cases}
\]
for all $i' \in [1,t]$ and $j' \in [1,|B_{i'}|]$.
For state $(p,\bm{w})$ with $p \in Q_i$ for some $i \in [1,k]$, $w_i = au$ for some $a \in \Sigma$ and $u \in \Sigma_\bot^*$, and 
$\delta(p,a) = p'$ we define
$\delta'((p,\bm{w}),\varepsilon) \define= (q_0^1,\dots,q_0^{t'},w_1,\dots,w_{i-1},u,w_{i+1},\dots,w_k,0,\dots,0,1)$.
Note that these $\varepsilon$-transitions can be eliminated again.

\paragraph*{Second part}
Let $f \colon \{1,\dots,t'\} \to \{1,\dots,t\}$ with $i' \mapsto i$ such that $B'_{i'} \subseteq B_i$.
Note that $f$ is well defined since $P'$ is finer that $P$.
For input $a' = (a_{b_{i,1}},\dots,a_{b_{i,|B_i|}}) \in \Sigma_\bot^{|B_i|} \setminus \{(\bot,\dots,\bot)\}$ for $i \in [1,t]$ and
state $(q^1,\dots,q^{t'},\bm{w},\bm{e},i)$ we let $(q^1,\dots,q^{t'},\bm{w},\bm{e},i) \in Q'_i$ and
$\delta'((q^1,\dots,q^{t'},\bm{w},\bm{e},i),a') \define= (p^1,\dots,p^{t'},\bm{w'},\bm{e'},i)$ where for all $i' \in [1,t']$,
\[p^{i'} \define= \delta^{i'}(q^{i'},(c_{b'_{i',1}},\dots,c_{b'_{i',|B'_{i'}|}}))\] 
if $f(i') = i$, $c_{b'_{i',j'}} \ne \bot$ for some $j'$, and $e_{i'} = 0$,
\[p^{i'} \define= q^{i'}\]
if $f(i') \ne i$ or $f(i') = i$ and $c_{b'_{i',j'}} = \bot$ for all $j'$, and 
otherwise $p^{i'}$ is undefined.
Here, we set
\[
c_{b'_{i',j'}} \define= 
\begin{cases}
a_{b'_{i',j'}}, &\text{if } w_{b'_{i',j'}} = \varepsilon \\
c, &\text{if } w_{b'_{i',j'}} = cu \text{ for some } c \in \Sigma_\bot, u \in \Sigma_\bot^*
\end{cases}
\]
for all $i' \in [1,t']$ with $f(i') = i$ and $j' \in [1,|B'_{i'}|]$.
Furthermore, we define
\[
w'_{b'_{i',j'}} \define= 
\begin{cases}
u a_{b'_{i',j'}}, &\text{if } f(i') = i \text{ and } w_{b'_{i',j'}} = c u\\
w_{b'_{i',j'}}, &\text{otherwise}
\end{cases}
\]
and
\[
e'_{i'} \define= 
\begin{cases}
1, &\text{if } f(i') = i \text{ and } c_{b'_{i',j'}} = \bot \text{ for all } j'\\
e_{i'}, &\text{otherwise}
\end{cases}
\]
for all $i' \in [1,t']$ and $j' \in [1,|B'_{i'}|]$.
For $i \in [1,t]$ and
state $(q^1,\dots,q^{t'},\bm{w},\bm{e},i)$ we let $(q^1,\dots,q^{t'},\bm{w},\bm{e},i) \in Q_i$ and
$\delta'((q^1,\dots,q^{t'},\bm{w},\bm{e},i),\dashv) \define= (q^1,\dots,q^{t'},\bm{w},\bm{e},i+1)$.
A state of the form $(q^1,\dots,q^{t'},\bm{w},\bm{e},n+1)$ with $w_i = u_i v_i$ for $u_i \in \Sigma^*$ and $v_i \in \{\bot\}^*$
is final if for all $i' \in [1,t']$ we have that $\A_{i'}$ reaches state $f^{i'}$ from $q^{i'}$ on reading
$u_{b'_{i',1}} \otimes \dots \otimes u_{b'_{i',|B'_{i'}|}}$ and $(f^1,\dots,f^{t'}) \in F_{p'}$.
\end{proof}

\end{document}